\newcommand{\appref}[1]{\hyperref[#1]{{Appendix~\ref*{#1}}}}
\newcommand{\be}{\begin{eqnarray} \begin{aligned}}
\newcommand{\ee}{\end{aligned} \end{eqnarray} }
\newcommand{\benn}{\begin{eqnarray*} \begin{aligned}}
\newcommand{\eenn}{\end{aligned} \end{eqnarray*}}
\newcommand*{\cA}{\mathcal{A}} 
\newcommand*{\cB}{\mathcal{B}}
\newcommand*{\cC}{\mathcal{C}}
\newcommand*{\cE}{\mathcal{E}}
\newcommand*{\cG}{\mathcal{G}}
\newcommand*{\cJ}{\mathcal{J}}
\newcommand*{\cL}{\mathcal{L}}
\newcommand*{\cM}{\mathcal{M}}
\newcommand*{\cN}{\mathcal{N}}
\newcommand*{\cD}{\mathcal{D}}
\newcommand*{\cQ}{\mathcal{Q}}
\newcommand*{\cR}{\mathcal{R}}
\newcommand*{\cO}{\mathcal{O}}
\newcommand*{\cP}{\mathcal{P}}
\newcommand*{\tr}{\mathop{\mathrm{tr}}\nolimits}
\newcommand{\bc}{\begin{center}}
\newcommand{\ec}{\end{center}}
\newtheorem{theorem}{Theorem}[section]
\newtheorem{lemma}[theorem]{Lemma}
\def\01{\{0,1\}}
\newcommand*{\idG}{\mathsf{e}}
\begin{document}

\newcommand*{\Parity}{\mathsf{PARITY}}
\newcommand*{\AC}{\mathsf{AC}}
\newcommand*{\NC}{\mathsf{NC}}
\newcommand*{\QNC}{\mathsf{QNC}}
\newcommand*{\BP}{\mathsf{BP}}
\renewcommand*{\P}{\mathsf{P}}
\renewcommand*{\L}{\mathsf{L}}
\newcommand*{\TC}{\mathsf{TC}}
\newcommand*{\NL}{\mathsf{NL}}

\newcommand*{\DT}{\mathsf{DT}}
\newcommand*{\poly}{\mathsf{poly}}
\newcommand*{\rpoly}{\mathsf{rpoly}}

\newcommand*{\twoPCliff}[1]{\mathsf{2CliffP}(#1)}

\newcommand*{\WP}{\mathsf{WP}}

\newcommand{\ZZ}{\mathbb{Z}}

\newcommand{\labelgroup}[5]{\POS"#1,#5"."#2,#5"."#1,#5"."#2,#5", \POS"#1,#5"."#2,#5"."#1,#5"."#2,#5"*!C!<1em,#3>=<0em>{#4}}
\newcommand*{\cfinal}{{\prod_{i=1}^L C_i}}

\DeclareDocumentCommand{\makereversebit}{O{+45}O{}m}{
	\arrow[arrows,line cap=round,to path={(\tikztostart) -- ($(\tikztostart)!{+0.5/cos(#1)}!#1:(\tikztotarget)$) node [anchor=west,style={#2}]{#3} -- (\tikztotarget)}]{d}
}

\newcommand*{\teleport}{\mathsf{Telep}}
\newcommand*{\psim}[1]{\mathsf{PSIM}\left[#1\right]}

\newcommand*{\parityproblem}{\mathsf{PARITY}}
\newcommand*{\PCliff}[1]{\mathsf{CliffP}(#1)}
\newcommand*{\solves}[2]{\left(#1 \textrm{ solves } #2\right)}

\title{Single-qubit gate teleportation provides\newline a quantum advantage}
\author{Libor Caha, Xavier Coiteux-Roy and  Robert K\"onig}
\affil{\small School of Computation, Information and Technology, Technical University of Munich \& \\
Munich Center for Quantum Science and Technology, Munich, Germany.}
\maketitle

\begin{abstract}
    Gate-teleportation circuits are arguably among the most basic examples of  computations  believed to provide a quantum computational advantage: in seminal work~\cite{TerhalDiVincenzo04}, Terhal and DiVincenzo have shown that these circuits elude simulation by efficient classical algorithms under plausible complexity-theoretic assumptions. Here we consider possibilistic  simulation~\cite{wang2021possibilistic}, a particularly weak form of this task where the goal is to output any string appearing with non-zero probability in the output distribution of the circuit.  We 
    show that even for single-qubit Clifford-gate-teleportation circuits this simulation problem cannot be solved by constant-depth classical circuits with bounded fan-in gates. Our results are unconditional and are obtained by a reduction to the problem of computing parity, a well-studied problem in classical circuit complexity.
\end{abstract}
\vspace*{-1cm}
\tableofcontents

\section{Introduction and main result}

A quantum circuit~$\cQ$ on $n$~qubits takes as input a classical bit string $I=(I_1,\ldots,I_m)\in \{0,1\}^m$. Starting from a product state~$\ket{0}^{\otimes n}$, it  applies  one- and two-qubit gates that may be classically controlled by constant-size subsets~$\{I_j\}_{j}$ of  input bits, and finally measures each qubit in the computational basis to yield an outcome~$O\in \{0,1\}^n$.  A prominent example is the single-qubit (Clifford)-gate-teleportation circuit~$\teleport_n$  (on $2n$~qubits) shown in Fig.~\ref{fig:teleportationcircuit}. The terminology used here alludes to the fact that it is composed of several copies of the gate-teleportation protocol of Gottesman and Chuang~\cite{gottesmanchuang}. The latter teleports an unknown quantum state while simultaneously applying a Clifford unitary to it.
 Here we seek to characterize 
 the computational power of the circuit~$\teleport_n$.

\begin{figure}[!h]
    \centering
    \begin{subfigure}[t]{0.37\linewidth}
        \includegraphics[scale=0.92]{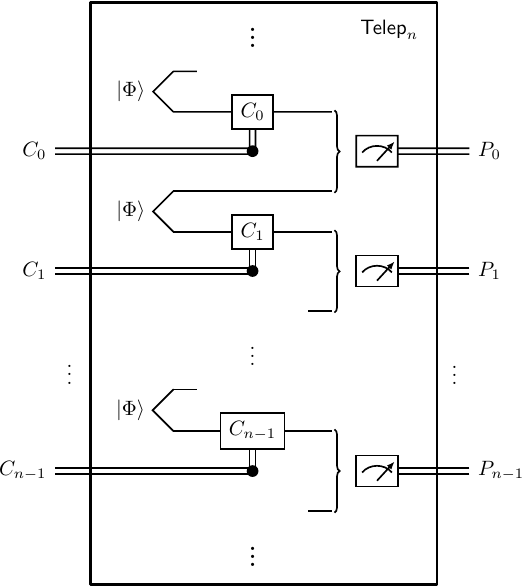}
        \caption{The~$\teleport_n$ circuit.}
        \label{fig:telep}
    \end{subfigure}
    \hfill
    \begin{subfigure}[t]{0.4\linewidth}
        \hspace*{-0.3cm}\includegraphics[width=1.1\textwidth]{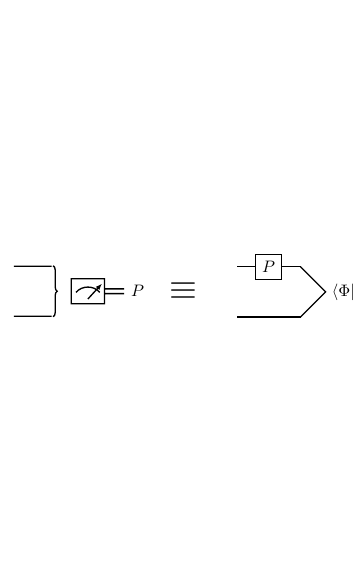}
        \caption{Pauli operator output after the Bell measurement with $\ket{\Phi}=\frac{1}{\sqrt{2}}(\ket{00}+\ket{11})$.}
        \label{fig:BellMeasurement}
    \end{subfigure}
    \caption{The gate-teleportation circuit~$\teleport_n$. The circuit takes as input an~$n$-tuple~$(C_0,\ldots,C_{n-1})\in\cC^n$ of single-qubit Clifford group elements. 
			Each Clifford is applied to one half of a Bell state. Subsequently, Bell-state measurements are performed on pairs of qubits. The output can be associated with an $n$-tuple~$(P_0,\ldots,P_{n-1})\in \cP^n$ of Pauli elements from the set~$\cP=\{I,X,Y,Z\}$ corresponding to different outcomes of Bell measurements. The circuit has a cyclic structure: the $n$-th Bell measurement is applied on the first and last qubits in the figure. As the single-qubit Clifford group~$\cC=\cP\cdot \{I,H,S,HS,SHS,(HS)^2\}$ consists of $|\cC|=24$ elements, its elements can be encoded into $5$-bit strings. Similarly, elements of~$\cP$ can be encoded as $2$-bit strings (pairs of bits). Correspondingly, we will often interpret inputs
			$(C_0,\ldots,C_{n-1})$ as elements of~$(\{0,1\}^5)^n$
			and outputs $(P_0,\ldots,P_{n-1})$ as elements of~$(\{0,1\}^2)^n$. 
            \label{fig:teleportationcircuit}
}
\end{figure}

The behavior of a quantum circuit~$\cQ$ is  fully specified by
the conditional distribution~$p^{\cQ}(O|I)$ of the output~$O$ given a fixed input~$I$. Its computational power is directly tied to the difficulty of sampling from~$p^{\cQ}(\cdot |I)$ for different inputs~$I$, and, more generally, the hardness of computing or approximating properties of this collection of distributions. Corresponding  notions of classically simulating~$\cQ$ give rise to various computational problems with more stringent notions of simulation  associated with harder computational problems. 
Arguably one of the weakest notions of simulating~$\cQ$ is  that of {\em possibilistic simulation}, a convenient term introduced by Wang in~\cite{wang2021possibilistic} for a concept implicitly used in earlier work~\cite{bravyigossetkoenig,BravyiGossetKoenigTomamichel}. The corresponding computational problem, which we denote by~$\psim{\cQ}$, asks, for a given input~$I$ (an {\em instance} of~$\psim{\cQ}$), to produce~$O\in \{0,1\}^n$ such that $p^{\cQ}(O|I)>0$, i.e., such that $O$~occurs with non-zero probability in the output distribution of~$\cQ$. 

Our main result is that, for the gate-teleportation circuit~$\teleport$, even the problem~$\psim{\teleport}$ is hard for constant-depth classical circuits. To state this, recall that an $\NC^0$-circuit is a constant-depth polynomial-size classical circuit with bounded fan-in gates (where, following standard terminology, we often refer to an infinite circuit family simply as a circuit).
\begin{theorem}\label{thm:main}
The~$\teleport$ quantum circuit trivially solves~$\psim{\teleport}$ for every instance.
In contrast, any (even non-uniform) $\NC^0$-circuit fails to solve~$\psim{\teleport}$ on certain instances. 
\end{theorem}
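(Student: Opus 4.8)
The plan is the following. The first assertion is immediate: for any instance $I$, executing the quantum circuit $\teleport$ on $I$ returns an output $O$ drawn from $p^{\teleport}(\cdot\mid I)$, hence an $O$ with $p^{\teleport}(O\mid I)>0$, so $\teleport$ solves $\psim{\teleport}$ on every instance. For the second assertion I would reduce the parity problem $\parityproblem$ to $\psim{\teleport}$ and invoke the elementary fact that $\parityproblem\notin\NC^0$ (each output bit of an $\NC^0$ circuit depends on only $O(1)$ input bits, whereas $\bigoplus_i x_i$ depends on all of them). Thus I would build a local map $x\mapsto I(x)$ from parity inputs $x\in\{0,1\}^n$ to instances of $\psim{\teleport_m}$ with $m=\Theta(n)$, show that every admissible output for $I(x)$ records $\bigoplus_i x_i$, and conclude that an $\NC^0$ solver for $\psim{\teleport}$ would yield an $\NC^0$ circuit for $\parityproblem$, a contradiction.

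To analyze the admissible outputs I would use that, for each fixed instance, $\teleport$ is a Clifford circuit, and track the Pauli ``frame'' propagated by the gate-teleportation protocol around the cyclic array of Bell pairs: requiring this frame to close up consistently after one traversal of the loop yields an affine-linear relation over $\mathbb{F}_2$ between the bits labeling the Bell-measurement outcomes $(P_0,\dots,P_{m-1})\in\cP^m$ and the bits encoding the input Cliffords $(C_0,\dots,C_{m-1})\in\cC^m$, and this relation describes exactly $\{O:p^{\teleport}(O\mid I)>0\}$. Taking the $C_j$ to be Paulis that locally encode the bits of $x$ makes the inhomogeneous term of the relation equal to $\bigoplus_i x_i$. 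The delicate design choice — and where Hadamard gates must be inserted at well-chosen positions — is that the bare consistency relation is a parity over all $\Theta(n)$ outcome bits, while the reduction is of use only if $\bigoplus_i x_i$ can be extracted from an admissible output in a manner compatible with $\NC^0$-locality; interspersing non-Pauli Cliffords conjugates the frame and reshapes the admissible set so that this becomes possible.

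I expect the lower-bound step to be the main obstacle. The naive approach — compose a hypothetical $\NC^0$ solver with the local encoding and a decoder and quote $\parityproblem\notin\NC^0$ — genuinely fails for the all-Pauli encoding: there an admissible output is the input itself (read as a Pauli tuple), so the restriction of $\psim{\teleport}$ to Pauli instances is solved by a $1$-local circuit, and more generally the only decoder available is a parity over $\Theta(n)$ output bits, which is not an $\NC^0$ function. I would instead follow the strategy used by Bravyi--Gosset--K\"onig for their relation problem: embed many parity ``gadgets'' into a single instance family and argue via a backward-light-cone count — an $\NC^0$ circuit assigns to each output bit a cone of $O(1)$ input bits, so by double counting most gadgets are influenced by only $O(1)$ output bits, and for such a gadget the Hadamard-reshaped admissibility relation of the previous step forces the solver to violate admissibility on some instance, contradicting that it solves $\psim{\teleport}$. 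The remaining care concerns the precise stabilizer computation of the admissible set for the Hadamard-decorated circuit and checking that the counting survives the cyclic gluing imposed by the topology of $\teleport$; non-uniformity of the solver costs nothing, since the reduction and the counting bound are uniform and $\parityproblem\notin\NC^0$ holds non-uniformly as well.
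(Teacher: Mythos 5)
Your first assertion (the quantum circuit trivially solves $\psim{\teleport}$) and your general framing (reduce $\parityproblem$ to $\psim{\teleport}$ via a lightcone/locality argument) are on the right track, and you correctly diagnose that the naive all-Pauli encoding fails because the decoder would itself be a global parity. But your plan for the lower bound has a genuine gap: you do not have a concrete mechanism that turns a possibilistic simulator into a parity-computer, and the paper's actual mechanism is an idea you have not mentioned.

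Concretely, the paper does \emph{not} perform a BGK-style direct counting on ``parity gadgets.'' Instead, it proceeds in two distinct steps. First, a lightcone argument (Theorem~\ref{thm:maincombinatorial}) on the hypothetical $\NC^0$ solver $\mathsf{C}$ for $\psim{\teleport_n}$ finds an input position $m$ in the second half of the cycle whose forward lightcone misses the first $L=\Theta(n)$ outputs and has size $O(1)$. This lets one embed an instance $(C_1,\ldots,C_L,D)$ by placing $D$ at position $m$, padding with identities, and reading off a single Pauli output $P$; because $\widetilde{Q}$ in Eq.~\eqref{eq:tildepm} then no longer depends on $D$, the result is a non-uniform $\NC^0$ solver for the intermediate problem $\psim{\PCliff{Q}_L}$ (Lemma~\ref{lem:simpcliffQtosimtelepnonsignaling}). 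The role of the lightcone bound is exclusively to establish this no-signaling embedding; it is not used to count gadgets. Second — and this is the step your proposal lacks entirely — the paper invokes the Grier--Schaeffer possibilistic-tomography technique: querying the $\PCliff{Q}_L$ solver with a fixed $(C_1,\ldots,C_L)$ and six different $D$'s corresponding to the rows and columns of the Mermin--Peres magic square yields, by non-contextuality, a guaranteed inconsistency that reveals a non-stabilizer of $(I\otimes C_L\cdots C_1)\Phi$ (Lemma~\ref{lem:magicsquarelearning}); Kilian randomization then makes this non-stabilizer uniform (Lemma~\ref{lem:kilianlemma}, Appendix~\ref{app:mainreductiongrier}), repetition plus Ajtai--Ben-Or derandomization recovers the stabilizer group up to sign in $\AC^0/\poly$ (Theorem~\ref{thm:mainreductiongrier}), and this distinguishes $\Phi$ from $(I\otimes H)\Phi$, hence computes parity (Lemma~\ref{lem:parityfrompossibilisticlearning}).

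Your sentence ``for such a gadget the Hadamard-reshaped admissibility relation of the previous step forces the solver to violate admissibility on some instance'' is a hope rather than an argument. A possibilistic simulator only needs to output \emph{some} element of the support on each query; a single Hadamard-decorated instance has a support of two Paulis out of four, and on any \emph{one} query the solver is free to choose a consistent answer, so nothing is ``forced.'' The contradiction only materializes when one compares the solver's answers across a carefully chosen \emph{family} of six queries whose answers cannot all be mutually consistent — this is exactly the magic-square contextuality argument, and without it (or something playing its role) the counting you sketch does not close. Another subtlety you gloss over: the Pauli correction $\widetilde{Q}$ in the teleportation identity depends on the measurement outcomes $(P_0,\ldots,P_{L-1})$, and showing that the solver's output can be reinterpreted as a valid $\PCliff{Q}_L$ answer requires both the cyclic-trace manipulation of Lemma~\ref{lem:technicalcommutinglemma} (to route the $(P_j)_{j\notin\cJ}$ dependence into $Q$ rather than $P$) and the $L$-goodness of $m$ (so that $Q$ ends up a function of $(C_1,\ldots,C_L)$ only). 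Your proposal does not address either of these issues.
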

We note that average-case statements for suitable distributions over instances of~$\psim{\teleport}$ can be derived in a similar manner as e.g.,~\cite[Supplementary material]{bravyigossetkoenig}. Here we focus on the worst case for simplicity.

The study of the difficulty  of classically simulating quantum gate teleportation  was initiated in seminal work by Terhal and DiVincenzo~\cite{TerhalDiVincenzo04}. It was shown in~\cite{TerhalDiVincenzo04} (see also~\cite{AaronsonCh15,fennergreenhomerzhang05}) that there is no efficient classical algorithm that can sample exactly from the output distribution  of  a variant of the gate-teleportation circuit unless the polynomial hierarchy collapses. The gate-teleportation circuit used in this argument acts in parallel on polynomially many qubits, and realizes any polynomial-size quantum circuit by a post-selected constant-depth quantum circuit. Furthermore, gate-teleportation circuits when combined with post-selection gives the complexity class~$\mathsf{PostBQP}$ of (uniform) polynomial-size  post-selected quantum circuits.
This leads to the hardness of classically sampling from their output distribution (also called weak simulation in Ref.~\cite{VanDenNesClassicalSimulation}) under plausible complexity-theoretic assumptions. 

This line of reasoning has also been used  to show hardness of simulating ``commuting quantum computation" as formalized by IQP circuits~\cite{bremnerjozsashepher11}, level-$1$-quantum approximate optimization (QAOA) circuits~\cite{Farhiharrow2016} and boson sampling circuits~\cite{aaronsonarkhipovboson11}.

\subsection*{Proof outline}
We establish Theorem~\ref{thm:main}  by
connecting the gate-teleportation simulation problem $\psim{\teleport}$
to a basic (trivial) result from classical complexity theory, namely the fact that the parity function cannot be computed by an $\NC^0$-circuit.
Our argument proceeds through a reduction to an intermediate computational problem which relates $\psim{\teleport}$ and $\parityproblem$.

Let us briefly describe the intermediate problem and give a high-level overview of our reasoning.

\subsubsection*{Quantum advantage against $\NC^0$}
The intermediate problem we use in our analysis is another simulation problem $\psim{\PCliff{Q}_L}$ for a two-qubit quantum circuit~$\PCliff{Q}_L$.
The circuit~$\PCliff{Q}_L$ is  defined by using an (arbitrary but fixed) function $Q:\cC^L\rightarrow \cP$. The circuit~$\PCliff{Q}_L$ 
is illustrated in Fig.~\ref{fig:PCliff}. 
Here and below we denote by~$\cP$ the set $\cP=\{I,X,Y,Z\}$
of Pauli operators and by~$\cC$ the set~$\cC=\cP\cdot \{I,H,S,HS,SHS,(HS)^2\}$ of single-qubit Cliffords.
(These sets are representatives of different cosets of the Pauli respectively Clifford groups when taking the quotient with respect to global phases. We refer to e.g.,~\cite{Grier_2022} for details.)
It takes as input $L+1$ single-qubit Clifford gates
\begin{align}
(C_1,\ldots,C_L,D)\in \cC^{L+1}
\end{align} 
and outputs a Pauli~$P\in\cP$, see the caption of Fig.~\ref{fig:teleportationcircuit} for a definition of these sets. We call $\PCliff{Q}_L$  the {\em Pauli-corrected Clifford circuit with correction function~$Q$}. 

\begin{figure}[!h]
		\centering
		\includegraphics[scale=1.175]{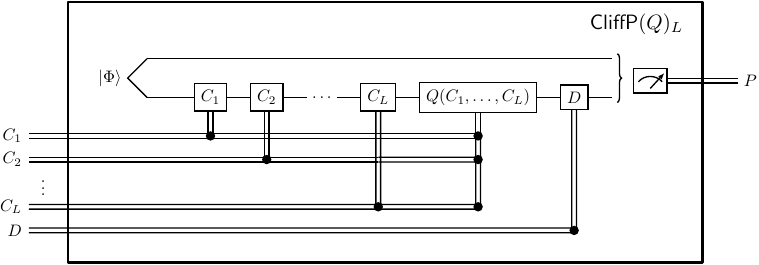}
		\caption{The Pauli-corrected Clifford circuit $\PCliff{Q}_L$ with correction function $Q$: on input~$(C_1,\ldots,C_L,D)\in\cC^{L+1}$, the
		Clifford gates~$C_1,\ldots,C_L$ are successively applied to one half of a Bell state~$\ket{\Phi}$. A Pauli ``correction'' operator~$Q(C_1,\ldots,C_L)$ is applied to the resulting state. Finally, the state is measured in a basis of maximally entangled states that is obtained by rotating the Bell basis (as specified by the Clifford~$D$). The output is a single-qubit Pauli~$P\in \cP$ obtained from a Bell measurement.		\label{fig:PCliff}
		}
\end{figure}

In Section~\ref{sec:advantageagainstNCzero}, we 
show the following: assume that~$\mathsf{C}$ is an (even non-uniform) $\NC^0$-circuit 
solving~$\psim{\teleport_n}$ (i.e., producing a valid output for all instances). Then, for $L=\Theta(n)$, there is a function~$Q:\cC^L\rightarrow\cP$ (that depends  on~$\mathsf{C}$) and a non-uniform $\NC^0$-circuit~$\mathsf{C}'$ solving~$\psim{\PCliff{Q}_L}$. The construction of~$\mathsf{C'}$  makes use of~$\mathsf{C}$ in a  non-black-box manner and relies on no-signaling arguments from Section~\ref{sec:nosignaling}. We summarize this (with a slight abuse of notation) as
\begin{align}
\!\!\!\!\!\left(\psim{\teleport}\in \NC^0/\poly\right) \qquad \Rightarrow \qquad \left(\exists Q\colon  \psim{\PCliff{Q}}\in \NC^0/\poly\right)\,,\label{eq:nczeroseparation}
\end{align}
where $\NC^0/\poly$ is the class of problems solved by non-uniform families of $\NC^0$-circuits.

Our second reduction heavily borrows from ideas developed in~\cite{grier2020interactive} in the context of interactive quantum advantage protocols. By adapting (specializing) these techniques, we show that the ability of possibilistically simulating~$\PCliff{Q}_L$ implies the ability to compute products of single-qubit Clifford gates by a randomized $\NC^0$-circuit with success probability at least $2/3$. In particular, given a (deterministic) oracle~$\cO$ for~$\psim{\PCliff{Q}}$ (for any fixed function~$Q$),  this immediately yields an  $\NC^0$-circuit
with additional uniformly random input bits and oracle access to~$\cO$, i.e., an ~$(\NC^0/\rpoly)^\cO$-circuit, 
which solves~$\parityproblem$ with success probability at least $2/3$. Succinctly, we summarize this as
\begin{align}
\!\!\!\!\!\!\solves{\cO}{\psim{\PCliff{Q}}} \Rightarrow\left(\!(\NC^0/\rpoly)^{\cO} \text{ solves } \parityproblem \text{ with prob.}\ge \frac23\right)\!.\label{eq:bpnczeroseparation}
\end{align}
Combining~\eqref{eq:nczeroseparation}
and~\eqref{eq:bpnczeroseparation} and
using a standard derandomization argument shows that
\begin{align}
    \left(\psim{\teleport}\in \NC^0/\poly\right) \qquad \Rightarrow \qquad 
    \left(\parityproblem\in \NC^0/\poly\right)\ . 
\end{align}
Thus the existence of such a circuit would contradict the folklore (trivial) result that~$\parityproblem$ cannot be solved by a non-uniform $\NC^0$-circuit. Indeed, an output bit of an $\NC^0$-circuit is influenced by at most a constant number of bits of an input instance, whereas  $\parityproblem$ depends on all input bits. In conclusion, this shows that there is no $\NC^0$-circuit for the teleportation simulation problem~$\psim{\teleport}$.

\subsection*{Related work}

Let us briefly summarize related work. The first unconditional separation of con-stant-depth quantum circuits and constant-depth classical circuits with bounded fan-in gates, i.e., $\QNC^0$ and $\NC^0$, was achieved by Bravyi, Gosset, and K\"onig~\cite{bravyigossetkoenig} for the \textit{$2D$ Hidden Linear Function problem}. This result was later extended to noisy quantum circuits by Bravyi, Gosset, K\"onig, and Tomamichel~\cite{BravyiGossetKoenigTomamichel}. The latter  construction uses a simpler underlying problem---the \textit{$1D$ Magic Square Game}. Both of these constructions are physically motivated and based
on violations of Bell inequalities. In subsequent work Bene Watts, Kothari, Schaeffer, and Tal~\cite{WattsKothariSchaefferTalAC0} achieved an even stronger separation, establishing a quantum advantage of constant-depth quantum circuits against $\AC^0$-circuits, i.e., the family of classical constant-depth polynomial-size circuits obtained by augmenting $\NC^0$-circuits with unbounded fan-in AND and OR gates. The problem  used in~\cite{WattsKothariSchaefferTalAC0} is the \textit{Relaxed Parity Halving problem}. Similarly to our result, this work establishes a quantum advantage based on classical circuit complexity lower bounds as opposed to the first two~\cite{bravyigossetkoenig,BravyiGossetKoenigTomamichel} Bell-inequality-based  constructions.

In this work our aim is to give a quantum advantage proposal separating $\QNC^0$ from $\NC^0$. While the mentioned prior works have already established this complexity-theoretic separation, our focus is on the minimal quantum information-processing resources required to do so. We also use a novel and potentially more generally applicable proof technique.  
To compare our circuits to these priors works,  let us  discuss what geometry and gates these constructions use.  The quantum circuit of~\cite{bravyigossetkoenig} for the $2D$ Hidden Linear Function problem, and
the circuit of~\cite{WattsKothariSchaefferTalAC0} for the
Relaxed Parity Halving problem are geometrically $2D$-local circuits~\footnote{A circuit is said to be \emph{geometrically $2D$-local} (resp. $1D$-local) if its underlying interaction graph is a $2D$ lattice (resp. $1D$ lattice).} composed of classically controlled Clifford gates.  The $1D$~Magic Square Game problem considered in~\cite{BravyiGossetKoenigTomamichel} is a classically controlled Clifford circuit on a chain of pairs of qubits.  
(This geometric locality allows for a fault-tolerant implementation by a  local quantum circuit in a $3D$~geometry.) Our proposal is 
quite similar to that of~\cite{BravyiGossetKoenigTomamichel}, but is more directly associated with a  ring of qubits rather than linear array of pairs of qubits. We note that the required resources are essentially the same despite this apparent difference. However, the use of periodic boundary conditions makes a novel argument possible, and leads to perhaps the most simple quantum circuit separating $\QNC^0$ from $\NC^0$.

Finally, we note that the work of Grier and Schaeffer~\cite{grier2020interactive} provides several quantum advantage protocols involving interaction. 
Unlike for typical computational problems such as those considered here (where the task is to provide a certain output given an input), corresponding advantage proposals involve a task which requires  several rounds of ``challenges'' and ``responses'', respectively. These can be seen as messages passed between a prover (who aims to demonstrate quantum computational power) and a verifier.
In the work~\cite{grier2020interactive}, the  quantum player (prover) can be realized by a constant-depth (classically controlled) Clifford circuit which is realized on a chain consisting of  qubits, a pair of qubits, or a grid 
depending on the complexity-theoretic result considered: for example, a chain of qubits is used to show hardness for $\AC^0$ with  additional $\textrm{MOD}_6$ gates, and chain of pairs of qubits is used to show hardness for classical circuits with bounded fan-in gates and logarithmic depth, i.e.,~$\NC^1$. We note that, while establishing several containments of classical and quantum complexity classes, this does not directly provide a separation between constant-depth quantum and constant-depth classical circuits (in the sense of separating~$\QNC^0$ from e.g.,~ $\NC^0$) because  interactive protocols are studied (i.e., these results separate suitably defined complexity classes involving interaction.)
We emphasize that, although their focus is different from ours, some 
of the key ideas developed in~\cite{grier2020interactive} are central to our derivation. 

Let us briefly comment on the relation of our work to~\cite{grier2020interactive}, where an oracle~$\tilde{R}$ for the simulation of a certain interactive process is considered. 
The proof technique in~\cite{grier2020interactive} assumes
 that the oracle~$\tilde{R}$ provides \textit{rewind} access. It uses a reduction that does not treat $\tilde{R}$ as a black box. Instead, the classical algorithm is allowed to ``rewind'' the classical simulator (implementing the oracle) to an earlier state of the computation, change inputs or internal variables and continue the computation. Notice that this is a natural requirement for a classical machine (e.g. for $\NC^0, \AC^0$ or Turing machines it is implemented by the same class) but is generally impossible in a quantum computational model involving non-unitary dynamics such as measurements.

In more detail, the simulated interactive process considered in~\cite{grier2020interactive} is conceptually similar to a Bell experiment:  in the first stage, given an $L$-tuple~$(C_1,\ldots,C_L)$ of Clifford group elements, the stabilizer state~$P(C_1,\ldots,C_L,m)(C_L\cdots C_1)\ket{\Psi_0}$ is prepared by a constant-depth quantum circuit using measurements, where the Pauli ``correction''~$P(C_1,\ldots,C_L,m)$ is determined by the Cliffords~$(C_1,\ldots,C_L)$ and the measurement results~$m$. This first stage  corresponds to preparing an initial entangled state for a Bell experiment. In the second stage of the process, an input is provided according to which a suitable Pauli measurement is applied to the prepared state. Without loss of generality, the input takes the form of a Clifford unitary~$D$, and the executed measurement is the von Neumann (i.e., projective) measurement using the basis~$\{D\ket{z}\}_{z}$. This second step corresponds to the execution of a Bell measurement on the prepared state.  The work~\cite{grier2020interactive} considers simulation problems associated with different interactive protocols of this kind for $k$-qubit Clifford circuits, embedded into so-called graph state simulation problems.

Rewind access to an oracle~$\tilde{R}$ simulating such an interactive process  means that oracle queries to~$\tilde{R}$ can be repeated in such a way that it yields an identical output~$m$ for the first step. In other words, the classical machine can be rewound to the beginning of the second stage. In contrast, reproducing this behavior using a quantum protocol would require post-selecting on  fixed measurement outcomes occurring with exponentially small probability. The ability of a classical simulator to rewind is at the root of the hardness results of~\cite{grier2020interactive} concerning interactive quantum advantage. 

In contrast to~\cite{grier2020interactive}, we do not need to assume rewind access to the classical oracle~$\tilde{R}$. This is achieved by exploiting the internal structure of the classical simulator: we consider oracles realized by circuits and provide a non-black-box reduction. As we discuss below (see Section~\ref{sec:possibilistictomographyredc}), however, our arguments  heavily rely on some of the techniques of~\cite{grier2020interactive} developed in the interactive scenario.

A more significant distinction between
our work and that of~\cite{grier2020interactive} is the fact that the computational problem we consider does not require interaction, i.e., two distinct stages. To illustrate this difference, consider the following simple example of a two-stage (interactive) quantum advantage in a problem involving two players Alice and Bob. The problem is commonly known as the CHSH game~\cite{CHSH}. Let us assume that Alice and Bob start out with no classical or quantum correlations. 
In stage~$1$, Alice and Bob are allowed to communicate arbitrarily. In stage~$2$, Alice and Bob are not allowed to communicate anymore, and they receive random bits~$x$ and $y$, respectively. They are asked to output bits~$a$ respectively $b$ such that $a\cdot b=x+y\pmod 2$. It is well-known that Alice and Bob can win with higher success probability than two classical players, if they can exchange quantum messages at stage~$1$. It is easy to see that a similarly simple quantum advantage cannot be observed when there is only a single stage (i.e., where Alice and Bob obtain~$x$ and $y$ at the very beginning). This example indicates that the challenge of establishing a quantum advantage heavily depends on
how many (interactive) rounds are considered. While our work removes the need for interaction, an even more extreme example is the problem of sampling: Here a quantum advantage has recently been established for problems ``without input''~\cite{BeneWattsParham2023unconditional} (see also~\cite{renou2019genuine}).

\section*{Outlook}
Due to the simplicity of the circuit~$\teleport$, our work  provides a starting point for a fault-tolerant $3D$-local quantum advantage scheme based on the techniques of~\cite{BravyiGossetKoenigTomamichel}.
A natural open question is whether the $\teleport$ circuit also provides a quantum advantage over $\AC^0$, with a similar separation as the (up to the best of our knowledge) strongest known unconditional result~\cite{WattsKothariSchaefferTalAC0}. More generally, it is unclear whether or not 
 quantum advantage schemes against~$\AC^0$ are 
 feasible based on geometrically $1D$-local quantum circuits only.

\section{Locality properties of classical circuits}\label{sec:nosignaling}

In this section we identify certain no-signaling properties every $\NC^0$-circuit necessarily has. These are crucial in our non-black-box reduction from the simulation problem $\psim{\PCliff{Q}}$  to the problem~$\psim{\teleport}$  in Section~\ref{sec:advantageagainstNCzero}.

In Sections~\ref{sec:blockstructurefunctionslocality} and~\ref{sec:localityofcircuits} we define lightcones and locality of functions and $\NC^0$-circuits. In Section~\ref{sec:nosignalingnc0} we use combinatorial arguments to prove no-signaling properties of $\NC^0$.

\subsection{Locality of functions \label{sec:blockstructurefunctionslocality}}
In the following, we  consider functions
\begin{align}
\begin{matrix}
    f:&(\{0,1\}^k)^n&\rightarrow& (\{0,1\}^r)^s\\
    &I &\mapsto& f(I)=(f_1(I),\ldots,f_s(I))
    \end{matrix}\label{eq:fknrsdef}
\end{align}
taking an ``input''-$n$-tuple~$I$  of $k$-bit strings to an ``output''-$s$-tuple~$f(I)$ of $r$-bit strings.
We often write~$I=(I_1,\ldots,I_n)$ for the input variables and~$O=(O_1,\ldots,O_s)$ for the output variables, where $O_m=f_m(I)$ for $m\in \{1,\ldots,s\}$. We call each of the $k$-bit-strings making up
an input~$I$ a {\em block}. That is, the~$j$-th block of the input consists of the $k$~binary variables~$I_j=(I_j^{(1)},\ldots,I_j^{(k)})$ of the argument, where $j\in \{1,\ldots,n\}$. Similarly, the $m$-th block of the output is associated with~$O_m=(O_m^{(1)},\ldots,O_m^{(r)})$ for $m\in \{1,\ldots,s\}$.

While a function of the form~\eqref{eq:fknrsdef} can be regarded as a function
\begin{align}
    f:\{0,1\}^N\rightarrow\{0,1\}^M\label{eq:fnm}
\end{align} with~$N=k\cdot n$ and~$M=r\cdot s$, i.e., as a function taking~$N$-bit strings to~$M$-bit strings, the block-structure~\eqref{eq:fknrsdef} will also be important in the following. To distinguish between the two interpretations of a function~$f$ of the form~\eqref{eq:fknrsdef}, we refer to~\eqref{eq:fnm} as the function~$f$ {\em with the block-structure ignored}. Note also that for $k=r=1$, there is no difference between these two notions. In particular, the following definitions apply to any function~$f:\{0,1\}^N\rightarrow\{0,1\}^M$ taking $N$-bit strings to~$M$-bit strings, for any $N,M\in\mathbb{N}$.

For a function~$f$ of the form~\eqref{eq:fknrsdef} we say that the $k$-th output block~$O_k$ is \emph{influenced} by the~$j$-th input~$I_j$ if and only if 
the output block~$O_k$ depends non-trivially on the value of the $j$-th input block, i.e., if there is a choice of
~$(I_1,\ldots,I_{j-1},I_{j+1},\ldots,I_n)\in (\{0,1\}^{k})^{n-1}$ and $J,J'\in \{0,1\}^k$ such that
\begin{align}
    f_k(I_1,\ldots,I_{j-1},J,I_{j+1},\ldots,I_n)&\neq f_k(I_1,\ldots,I_{j-1},J',I_{j+1},\ldots,I_n)\ .
\end{align}
In the following, we will often identify input- and output-blocks with their indices. Correspondingly, we define the {\em forward lightcone}~$\cL^\rightarrow (j)$ of the $j$-th input block~$I_j$ as the set of all output blocks that are influenced by it, i.e.,
\begin{align}
    \cL^\rightarrow (j) := \{k\in[s] \mid O_k~\textrm{is influenced by~}I_j \}\ .
\end{align}
Similarly, the backward lightcone~$ \cL^\leftarrow (k)$ of the $k$-th output block~$O_k$ is the set of all input blocks that influence it, 
\begin{align}
    \cL^\leftarrow (k) := \{j\in[n] \mid O_k~\textrm{is influenced by~}I_j \}\ . 
\end{align}
A function~$f$ is said to be {\em $\ell$-local} if the size of all  backward lightcones is bounded by $\ell$, i.e., if 
\begin{align}
| \cL^\leftarrow (k)|\le \ell\qquad\textrm{ for all }\qquad k  \in [s]\ .  
\end{align}
Lightcones naturally extend to a subset~$\cJ$ of either input and output blocks: for $\cJ_{\textrm{in}}\subseteq [n]$ and $\cJ_{\textrm{out}}\subseteq [s]$, we set
\begin{align}
    \cL^\rightarrow (\cJ_{\textrm{in}})=\bigcup_{j\in\cJ_{\textrm{in}}}\cL^\rightarrow (j) , \qquad \text{ and }\qquad
    \cL^\leftarrow (\cJ_{\textrm{out}})=\bigcup_{j\in\cJ_{\textrm{out}}} \cL^\leftarrow (j)\ .
\end{align}

For a subset~$\{I_j\}_{j\in\cJ}$
of input blocks (specified by a subset~$\cJ=\{j_1<\cdots <j_\ell\}\subseteq [n]$), we say that an output block~$O_k$ is {\em determined by~$\{I_j\}_{j\in\cJ}$} if 
it is only a function of the corresponding input variables, i.e., if there is a function~$g:(\{0,1\}^k)^{\ell}\rightarrow \{0,1\}^r$ such that
\begin{align}
    f_k(I)&=g(I_{j_1},\ldots,I_{j_\ell})\qquad\textrm{ for all }\qquad I=(I_1,\ldots,I_n)\in (\{0,1\}^k)^n\ .
\end{align}

\subsection{Locality of circuits}\label{sec:localityofcircuits}
Note that for circuit~$\mathsf{C}\colon\{0,1\}^N\rightarrow \{0,1\}^M$ of depth $d$, the maximal fan-in~$\nu$ of the gates in the circuit sets an upper bound on the size of all backward lightcones: we have 
\begin{align}
    |\cL^\leftarrow (j)|\le \nu^d\qquad\textrm{ for all }\qquad j\in [M]\ .
\end{align}
In particular, a depth-$d$ circuit with gates of fan-in upper bounded by~$\nu$ is $\ell$-local with 
\begin{align}
    \ell \leq \nu^d\ .\label{eq:nczerocircuitbounddepth}
\end{align}

Similar considerations
apply to every ``block'' function~$\mathsf{C}\colon(\{0,1\}^k)^n\rightarrow (\{0,1\}^r)^{s}$ that is realized by a circuit~$\mathsf{C}$. If each gate has fan-in upper bounded by~$\nu$ (i.e., has at most~$\nu$ bits as input), then each output bit of the circuit can be influenced by at most $\nu^d$~input bits. Thus each output bit can be influenced by at most~$\nu^d$ blocks. Since every input block consists of $k$~bits, it follows that
\begin{align}
    |\cL^\leftarrow (j)|\le k\nu^d\qquad\textrm{ for all }\qquad j\in [s]\ ,\label{eq:cljupperboundz}
\end{align}
i.e., the circuit is {\em$\ell$-(block-)local} with~$\ell \leq k\nu^d$. 

In the following, we often consider 
families of functions
\begin{align}
\{f_n: (\{0,1\}^k)^n\rightarrow (\{0,1\}^r)^{s(n)}\}_{n\in\mathbb{N}}
\end{align}
where $k,r$~are constant and $s(n)=\poly(n)$ is polynomial in $n$.
Eq.~\eqref{eq:cljupperboundz}
implies that if such a function family is represented by an~$\NC^0$-circuit, then it has constant locality, i.e., it is $O(1)$-local.

\subsection{No-signaling features of $\NC^0$-circuits}
\label{sec:nosignalingnc0}

We establish no-signaling properties for a specific class of $\NC^0$-circuits that have the same number of input and output blocks. These are exactly the type of circuits we want to analyze for the $\psim{\teleport}$ problem. 
Let $\mathsf{C}\colon(\{0,1\}^k)^n\rightarrow (\{0,1\}^r)^{n}$
be a function.
That is, $\mathsf{C}$ takes as input~$(I_1,\ldots,I_n)\in (\{0,1\}^k)^n$ and outputs $(O_1,\ldots, O_n)\in (\{0,1\}^r)^n$. 
We assume that the circuit $\mathsf{C}$ has (block-)locality~$\ell$.

Let $L\in [n]$. We say that $m\in [n]$ is {\em $L$-good} if and only if
\begin{align}
    \cL^\rightarrow(m) \cap [L]&=\emptyset\ .\label{eq:Lgooddef}
\end{align}
That is, an input index~$m\in [n]$ is $L$-good if and only if the input block~$I_m$ does not influence the first $L$~output blocks~$(O_1,\ldots,O_L)$. We illustrate the  $L$-good property in Figure~\ref{fig:Lgood}. Let us denote
by $\mathsf{Good}(L)\subseteq [n]$ the subset of $m\in [n]$ such that~$m$ is $L$-good. We also set $\mathsf{Bad}(L):=[n]\backslash \mathsf{Good}(L)$.
 \begin{figure}[!ht]
		\centering
		\begin{subfigure}[t]{0.45\textwidth}
			\centering
		\includegraphics[]{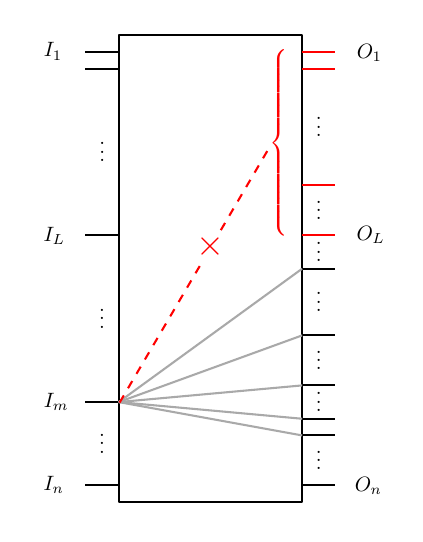}
		\caption{An illustration of an \emph{$L$-good} $m\in [n
  ]$, i.e., the forward lightcone of the input block~$I_m$ has no intersection with the outputs~$O_1, \ldots, O_L$.}\label{fig:Lgood} 
		\end{subfigure}
		\qquad
		\begin{subfigure}[t]{0.45\textwidth}
			\centering
		\includegraphics[]{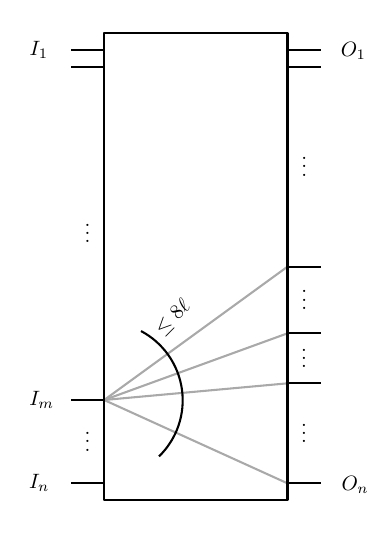}
		\caption{An illustration of an element $m\in [n]$ with \emph{limited signaling}, i.e., the size of forward lightcone of the input block $I_m$ is upper-bounded by $8\ell$.}\label{fig:LimitedSignaling}
		\end{subfigure}
		\caption{Structural properties of local functions.}
\end{figure}

In a similar fashion, let us say that the $m$-th input block~$I_m$ has \emph{limited signaling} if and only if
\begin{align}
    |\cL^{\rightarrow}(m)|\leq 8\ell \ ,
\end{align}
i.e., if and only if~$I_m$ influences at most~$8\ell$ output blocks. (The choice of factor~$8$ is for convenience only.) We depict the limited signaling property in Figure~\ref{fig:LimitedSignaling}.
Let us denote the set of limited-signaling input indices by~$\mathsf{LimitedSignaling}\subseteq [n]$.

Our goal in this section is to show that for a  suitably chosen value of~$L$, there are indices~$m\in [n]$ such that 
\begin{enumerate}[(i)]
\item\label{it:firstpropertyexistence}
$m$ is $L$-good,
\item\label{it:secondpropertyexistence}
$m$ has the limited signaling property, and
\item\label{it:thirdpropertyexistence}
$m>n/2$ (i.e., $m$ belongs to the ``second half'' of input indices).
\end{enumerate}
To establish this statement, 
we first show a lower bound on the number of  indices~$m\in [n]$ satisfying~\eqref{it:firstpropertyexistence} and~\eqref{it:thirdpropertyexistence}. 
the following notation will be convenient: consider a subset~$\cG\subseteq [n]$ of input block indices. We will often partition such a set as
\begin{align}
\cG&=\cG_{\leq}\cup \cG_{>}\qquad\textrm{ where }\qquad 
\begin{matrix}
\cG_{\leq }&= &\{m\in \cG\ |\ m\leq n/2\}\\
\cG_{> }&= &\mspace{10mu}\{m\in \cG\ |\ m> n/2\}\ .
\end{matrix}
\end{align}
\begin{lemma}\label{lem:firstlemlineconeprop}
Set 
\begin{align}
    L&:=\left\lfloor\frac{n}{4 \ell}\right\rfloor\ .
\end{align}
Then 
\begin{align}
    |\mathsf{Good}(L)_{>}|\geq \frac{n}{4}\ .
\end{align}
\end{lemma}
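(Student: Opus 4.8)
The plan is to bound the number of "bad" indices — those $m \in [n]$ whose forward lightcone intersects $[L]$ — and show this count is small enough that at least $n/4$ good indices survive in the second half. First I would observe that by definition, $m \in \mathsf{Bad}(L)$ means there exists some $k \in [L]$ with $m \in \cL^\leftarrow(k)$; equivalently, $\mathsf{Bad}(L) = \cL^\leftarrow([L]) = \bigcup_{k=1}^{L} \cL^\leftarrow(k)$. Since the circuit $\mathsf{C}$ has (block-)locality $\ell$, each backward lightcone satisfies $|\cL^\leftarrow(k)| \le \ell$, so a union bound gives
\begin{align}
|\mathsf{Bad}(L)| \;=\; \Bigl|\bigcup_{k=1}^{L}\cL^\leftarrow(k)\Bigr| \;\le\; \sum_{k=1}^{L}|\cL^\leftarrow(k)| \;\le\; L\cdot\ell \;\le\; \frac{n}{4\ell}\cdot\ell \;=\; \frac{n}{4}\ .
\end{align}

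Next I would pass to the second half: since $|\mathsf{Good}(L)| = n - |\mathsf{Bad}(L)| \ge n - n/4 = 3n/4$, and the second half $\{m : m > n/2\}$ has size at least $n/2$ (for $n$ even; at least $\lfloor n/2\rfloor$ in general), the two sets must overlap substantially. Indeed, $\mathsf{Bad}(L)$ can "eat into" the second half by at most $|\mathsf{Bad}(L)| \le n/4$ indices, so
\begin{align}
|\mathsf{Good}(L)_{>}| \;\ge\; |\{m\in[n]: m>n/2\}| - |\mathsf{Bad}(L)| \;\ge\; \frac{n}{2} - \frac{n}{4} \;=\; \frac{n}{4} \;=\; \frac{n}{2}\cdot\frac12\ ,
\end{align}
which is exactly the claimed bound.

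The only subtlety — and it is minor — is the handling of floors: $L = \lfloor n/(4\ell)\rfloor \le n/(4\ell)$ ensures $L\ell \le n/4$ cleanly, and the size of the second half is $n - \lfloor n/2\rfloor = \lceil n/2\rceil \ge n/2$, so all inequalities go the right direction regardless of parity. I do not expect any real obstacle here; the argument is a two-step counting estimate (union bound on backward lightcones, then inclusion–exclusion against the second half). If anything, the "hard part" is purely bookkeeping: making sure the constant $1/2$ on the right-hand side is matched and that one does not accidentally need $n$ even. A cleaner phrasing avoids even this by noting $|\mathsf{Good}(L)_{>}| \ge |\mathsf{Good}(L)| - n/2 \ge 3n/4 - n/2 = n/4$ directly.
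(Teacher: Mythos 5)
Your proof is correct and follows essentially the same route as the paper: identify $\mathsf{Bad}(L)$ with $\cL^\leftarrow([L])$, bound its size by $L\ell \le n/4$ using the locality assumption, and subtract from the size of the second half $\{m : m > n/2\}$. The only cosmetic difference is that the paper bounds $|\mathsf{Bad}(L)_{>}|$ directly via the containment $\mathsf{Bad}(L)_{>}\subset\cL^\leftarrow([L])$, whereas you bound all of $|\mathsf{Bad}(L)|$ first; the numerics are identical.
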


\begin{proof}
By the definition of the set~$\mathsf{Bad}(L)_{>}$, we have 
\begin{align}
\cL^\rightarrow(m) \cap [L]&\neq \emptyset\qquad\textrm{ for every }\qquad 
m\in \mathsf{Bad}(L)_{>}\ .
\label{eq:badconditionim}
\end{align}
By the definitions of the forward and backward lightcones, condition~\eqref{eq:badconditionim} is equivalent to
\begin{align}
m\in \cL^\leftarrow\left([L]\right)\qquad\textrm{ for every }\qquad m\in \mathsf{Bad}(L)_{>}\ ,
\end{align}
that is,
\begin{align}
    \mathsf{Bad}(L)_{>}\subseteq \cL^\leftarrow\left([L]\right)\ .
\end{align}
It follows that 
\begin{align}
|\mathsf{Bad}(L)_{>}|&\leq \left|\cL^\leftarrow\left([L]\right)\right| \ .\label{eq:lowerboundloj}
\end{align}
By the locality-$\ell$ assumption on~$\mathsf{C}$, we have
\begin{align}
|\cL^\leftarrow(j)| &\leq \ell\qquad\textrm{ for all }\qquad j\in [n]\ 
\end{align}
and thus
\begin{align}
\left|\cL^\leftarrow\left([L]\right)\right|\leq L\ell\ .\label{eq:upperboundlmf}
\end{align}
Combining~\eqref{eq:upperboundlmf} with~\eqref{eq:lowerboundloj} yields
\begin{align}
|\mathsf{Bad}(L)_{>}| & \leq L\ell\ .
\end{align} or equivalently
\begin{align}
    |\mathsf{Good}(L)_{>}|\geq \lceil n/2\rceil - L\ell \geq n/4\ 
\end{align}
by the definition of~$L$.
\end{proof}

We next derive a lower bound on the number of  indices~$m\in [n]$ satisfying~\eqref{it:secondpropertyexistence} and~\eqref{it:thirdpropertyexistence}. 
The statement and the proof of the following result are analogous to those of~\cite[Claim~5]{bravyigossetkoenig}.  
\begin{lemma}\label{lem:secondlemlightconesize}
We have
\begin{align}
\big|\mathsf{LimitedSignaling}_{>}\big|&\geq 
\frac{3n}{8}\ .
\end{align}
\end{lemma}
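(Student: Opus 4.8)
The plan is to run a standard averaging (double-counting) argument on the bipartite ``influence'' relation between input blocks and output blocks, followed by a Markov-type step and a final intersection with the second half of the indices.

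First I would count, in two ways, the set of pairs $\{(j,k)\in[n]\times[n] \mid O_k \text{ is influenced by } I_j\}$. Summing over $j$ gives $\sum_{j\in[n]}|\cL^\rightarrow(j)|$, while summing over $k$ gives $\sum_{k\in[n]}|\cL^\leftarrow(k)|$, and these two quantities are equal. By the locality-$\ell$ assumption on $\mathsf{C}$ we have $|\cL^\leftarrow(k)|\le \ell$ for every $k\in[n]$, hence $\sum_{j\in[n]}|\cL^\rightarrow(j)| \le n\ell$. In other words the forward lightcones have average size at most $\ell$.

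Next comes a Markov-type inequality: if there were more than $n/8$ input indices $j$ with $|\cL^\rightarrow(j)| > 8\ell$, then $\sum_{j\in[n]}|\cL^\rightarrow(j)|$ would exceed $(n/8)\cdot 8\ell = n\ell$, contradicting the bound just established. Thus the number of $j\notin \mathsf{LimitedSignaling}$ is at most $n/8$ (and, being an integer, at most $\lfloor n/8\rfloor$), so $|\mathsf{LimitedSignaling}| \ge n - \lfloor n/8\rfloor \ge 7n/8$. Finally, since $|\{m\in[n]\mid m\le n/2\}| = \lfloor n/2\rfloor \le n/2$, deleting those indices leaves $|\mathsf{LimitedSignaling}_{>}| \ge |\mathsf{LimitedSignaling}| - \lfloor n/2\rfloor \ge 7n/8 - n/2 = 3n/8$, which is exactly the claimed bound.

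There is essentially no hard part here; the only thing requiring care is the bookkeeping with floors so that the clean inequality $3n/8 = \tfrac{n}{2}\cdot\tfrac34$ in the statement survives. Indeed the constant $8$ in the definition of the limited-signaling property was evidently chosen so that $n - \tfrac n8 - \tfrac n2$ equals precisely $\tfrac{3n}{8}$. This is the direct analogue of~\cite[Claim 5]{bravyigossetkoenig}, as already noted before the statement.
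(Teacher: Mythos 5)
Your argument is correct and is essentially the paper's: both are the same double-counting/averaging bound $\sum_j |\cL^\rightarrow(j)| = \sum_k |\cL^\leftarrow(k)| \le n\ell$ (the paper phrases it as an edge count in a bipartite influence graph), followed by the Markov-type step with threshold $8\ell$, followed by intersecting with the second half of the indices; the only cosmetic difference is that you bound $|\mathsf{LimitedSignaling}|$ over all of $[n]$ first and then subtract the first half, whereas the paper restricts the input side of its bipartite graph to $[n]_>$ from the start, but the constants and conclusion are identical.
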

\begin{proof}
Consider a bipartite graph $G=(V_I\cup V_O,E)$ with two sets of vertices
\begin{align}
    V_I&=\{v_i\ |\ i\in [n]_{>}\}\\
    V_O&=\{w_o\ |\ o\in[n]\}\ ,
\end{align}
where $(v_i,w_o)\in V_I\times V_O$ are connected by an edge if and only if
\begin{align}
    i\in \cL^{\leftarrow}(o)\ .\label{eq:edgeconditionone}
\end{align}
Observe that~\eqref{eq:edgeconditionone} is equivalent to
\begin{align}
    o\in \cL^{\rightarrow }(i)\ .\label{eq:xedgeconditionone}
\end{align}
Using~\eqref{eq:edgeconditionone}, the set of edges of this graph can be upper bounded by\footnote{Here and below we write~$\prod_{j=0}^k x_j=x_k\cdots x_0$ for non-commutative variables $x_0,\ldots,x_k$.}
\begin{align}
    |E|&\leq \sum_{o=1}^n |\cL^\leftarrow(o)|\leq  n\ell\ .\label{eq:upperboundejlo}
\end{align}
On the other hand, using the definition of the forward lightcone and~\eqref{eq:xedgeconditionone}, 
we have that for any $m\in \{\lfloor n/2+1\rfloor,\ldots,n\}\setminus \mathsf{LimitedSignaling}$, 
the subset $\{w_o\ |\ o\in \cL^\rightarrow(m)\}\subseteq V_O$ has more than~$8\ell$ neighbors in~$V_I$. This gives the lower bound
\begin{align}
    |E| &\geq \left|\{\lfloor n/2+1\rfloor,\ldots,n\}\setminus \mathsf{LimitedSignaling}_{>}\right|\cdot 8\ell\\
    &=\left(\lceil n/2\rceil - |\mathsf{LimitedSignaling}_{>}|\right)\cdot 8\ell\ .\label{eq:secondelowerbndx}
\end{align}
Combining~\eqref{eq:secondelowerbndx}  with~\eqref{eq:upperboundejlo} gives the claim.
\end{proof}

Finally, we can combine both properties to show that there is a large number of indices~$m\in [n]$ 
with $m>n/2$ (property~\eqref{it:thirdpropertyexistence}) that are $L$-good (cf.~\eqref{it:firstpropertyexistence}) and have the limited-signaling property (cf.~\eqref{it:secondpropertyexistence}). 
\begin{theorem}\label{thm:maincombinatorial}
Let $\mathsf{C}$ be a circuit with locality upper bounded by~$\ell$.
 Set 
\begin{align}
    L:=\left\lfloor\frac{n}{4\ell}\right\rfloor\ .
\end{align}
Then for sufficiently large $n$ there is some input index $m\in \{\lfloor n/2+1\rfloor,\ldots,n\}$
that is both $L$-good and has limited signaling, i.e.,
\begin{align}
     \cL^\rightarrow(m)\cap [L]&=\emptyset \label{eq:imlightconeoj}
\end{align}
and
\begin{align}
    |\cL^\rightarrow(m)| &\leq 8\ell\ .\label{eq:lightconesizeestimate}
\end{align}
\end{theorem}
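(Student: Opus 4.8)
The plan is to combine the two counting lemmas just established, Lemma~\ref{lem:firstlemlineconeprop} and Lemma~\ref{lem:secondlemlightconesize}, by a simple pigeonhole / inclusion–exclusion argument over the index set~$[n]_{>}=\{\lfloor n/2+1\rfloor,\ldots,n\}$. Both lemmas are phrased in terms of subsets of this ``second half'' of input indices, so no further work is needed to enforce property~\eqref{it:thirdpropertyexistence}: it is built into the sets $\mathsf{Good}(L)_{>}$ and $\mathsf{LimitedSignaling}_{>}$ by definition. The key observation is that an index~$m$ that is both $L$-good and limited-signaling and lies in the second half is exactly an element of the intersection $\mathsf{Good}(L)_{>}\cap\mathsf{LimitedSignaling}_{>}$, so it suffices to show this intersection is nonempty.

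First I would record the trivial bound $|[n]_{>}|=\lceil n/2\rceil\le n/2+1$ (or, being slightly careful, $\lceil n/2\rceil$; for $n\ge 2$ one has $\lceil n/2\rceil\le n/2+1/2$, which is harmless). Then I would invoke Lemma~\ref{lem:firstlemlineconeprop} to get $|\mathsf{Good}(L)_{>}|\ge n/4$ and Lemma~\ref{lem:secondlemlightconesize} to get $|\mathsf{LimitedSignaling}_{>}|\ge 3n/8$. By the inclusion–exclusion bound,
\begin{align}
|\mathsf{Good}(L)_{>}\cap\mathsf{LimitedSignaling}_{>}|
&\ge |\mathsf{Good}(L)_{>}|+|\mathsf{LimitedSignaling}_{>}|-|[n]_{>}|\notag\\
&\ge \frac{n}{4}+\frac{3n}{8}-\lceil n/2\rceil
= \frac{5n}{8}-\lceil n/2\rceil\ .
\end{align}
Since $\lceil n/2\rceil\le n/2+1$, this is at least $n/8-1$, which is strictly positive once $n>8$. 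Hence for all sufficiently large~$n$ the intersection is nonempty, and any~$m$ in it satisfies~\eqref{eq:imlightconeoj} and~\eqref{eq:lightconesizeestimate} as well as $m>n/2$, proving the theorem.

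The one point that requires a little care, rather than being a genuine obstacle, is the treatment of the floors and ceilings and the exact threshold on~$n$: the statement as written asserts existence of such an~$m$ without an explicit restriction on~$n$, so I would either state the result for~$n$ large enough (which is all that is needed downstream, since $L=\Theta(n)$) or absorb the small-$n$ cases into the hypotheses. I would also double-check that the bound $n/4$ from Lemma~\ref{lem:firstlemlineconeprop} is the one to use (the lemma statement writes $\frac{n}{2}\cdot\frac12$, and its proof concludes $\ge n/4$), and that the $3n/8$ from Lemma~\ref{lem:secondlemlightconesize} combined with it beats~$\lceil n/2\rceil$ — indeed $\frac14+\frac38=\frac58>\frac12$, so there is a constant-fraction slack of roughly $n/8$, giving plenty of room. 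No step here is technically deep; the content of the theorem lives entirely in the two preceding lemmas, and this final argument is just the union bound that stitches them together.
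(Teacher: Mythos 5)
Your proof is correct and follows essentially the same route as the paper's: both invoke Lemma~\ref{lem:firstlemlineconeprop} and Lemma~\ref{lem:secondlemlightconesize} and then apply the inclusion--exclusion bound over $[n]_{>}$ to show the intersection $\mathsf{Good}(L)_{>}\cap\mathsf{LimitedSignaling}_{>}$ is nonempty. You are a bit more careful than the paper about the distinction between $\lceil n/2\rceil$ and $n/2$ (the paper implicitly uses $|[n]_{>}|\le n/2$, which is tight only for even $n$), but this only affects small $n$ and is harmless since $L=\Theta(n)$ is all that matters downstream.
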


\begin{proof}
This follows immediately by combining Lemma~\ref{lem:firstlemlineconeprop}
with Lemma~\ref{lem:secondlemlightconesize}. Indeed, by Lemma~\ref{lem:firstlemlineconeprop} we have 
\begin{align}
    |\mathsf{Good}(L)_{>}|&\geq n/4\ ,
\end{align}
and by Lemma~\ref{lem:secondlemlightconesize} we have
\begin{align}
   |\mathsf{LimitedSignaling}_{>}|&\geq 3n/8\ .
\end{align}
Since both $\mathsf{Good}(L)_{>}$ and $\mathsf{LimitedSignaling}_{>}$ are subsets of~$[n]_{>}$, we also have
\begin{align}
    |\mathsf{Good}(L)_{>}\cup \mathsf{LimitedSignaling}_{>}|\le \left|\,[n]_{>}\right|= \lceil n/2\rceil\ .
\end{align}
Consequently, we have
\begin{align}
   &|\mathsf{Good}(L)_{>}\cap \mathsf{LimitedSignaling}_{>}|\\ &\qquad\qquad = |\mathsf{Good}(L)_{>}| +|\mathsf{LimitedSignaling}_{>}| - |\mathsf{Good}(L)_{>}\cup \mathsf{LimitedSignaling}_{>}|\\ &\qquad\qquad\ge n/4+3n/8-\lceil n/2 \rceil\ge  n/8 -1\ .
\end{align}
This implies the claim. 
\end{proof}

\section{Quantum advantage against $\NC^0$}\label{sec:advantageagainstNCzero}
Our goal is to show that~$\psim{\teleport}\not\in \NC^0/\poly$, i.e., constant-depth quantum circuits have an advantage over (even non-uniform)~$\NC^0$ for the possibilistic simulation of the gate-teleportation circuit. 
Our proof is by contradiction: we show the implication 
\begin{align}
\left(
\psim{\teleport}\in \NC^0/\poly
\right)\qquad \Rightarrow \qquad 
\left(\parityproblem\in \NC^0/\poly\right)\ .\label{eq:implicationtoshowm}
\end{align}
This implication shows that~$\psim{\teleport}\not\in \NC^0/\poly$ since we would otherwise get a contradiction to the trivial result~$\parityproblem\not\in \NC^0/\poly$.

In Section~\ref{sec:simteleptosimcliffpq} we prove that any $\NC^0$ algorithm for the $\psim{\teleport_n}$ problem can be used to construct a non-uniform $\NC^0$ algorithm for the $\psim{\PCliff{Q}_L}$ problem for $L=\Theta(n)$ and  a fixed function~$Q$. In Section~\ref{sec:possibilistictomographyredc} we follow and specialize the techniques of~\cite{grier2020interactive}  to obtain a randomized $\NC^0$ algorithm that uses an oracle for $\psim{\PCliff{Q}_L}$ to learn some information about the stabilizer group of states  of the form~$(I\otimes C_L\cdots C_1)\ket{\Phi}$, where $\ket{\Phi}$ is a Bell state.  Finally, in Section~\ref{sec:quantumadvantage} we use this learning problem to solve $\parityproblem$. This establishes~\eqref{eq:implicationtoshowm}. 

Since the teleportation circuit $\teleport$ trivially solves the $\psim{\teleport}$ problem, 
this completes the proof of Theorem~\ref{thm:main}, i.e., establishes  a quantum advantage over~$\NC^0$.
\subsection{Relating $\psim{\teleport}$ to  $\psim{\PCliff{Q}}$}\label{sec:simteleptosimcliffpq}
Here we argue that an $\NC^0$-circuit $\mathsf{C}$
which possibilistically simulates the gate-teleportation circuit~$\teleport_n$ gives rise to a non-uniform~$\NC^0$-circuit (using oracle access to~$\mathsf{C}$) which simulates~$\PCliff{Q}_L$ 
for~$L=\Theta(n)$ and a suitably chosen function~$Q$. That is, we show the implication
\begin{align}
\left(\psim{\teleport}\in \NC^0/\poly\right) \qquad \Rightarrow \qquad \left(\exists Q\colon  \psim{\PCliff{Q}}\in \NC^0/\poly\right)\ .\label{eq:nczeroseparation2}
\end{align}

\subsubsection{Motivation\label{sec:motivation}}
To motivate the construction, recall what the effect of 
the (concurrent) Bell measurements 
in a gate-teleportation circuit is: it essentially amounts to the application of a product~$C_n\cdots C_1$ of Clifford group elements to one half of a maximally entangled state, but the resulting state~$(I\otimes C_n\cdots C_1)\ket{\Phi}$ is corrupted by a Pauli error~$\widetilde{Q}$ determined by the measurement outcomes. In standard uses of gate teleportation (such as in error correction), this Pauli error is computed and applied in a correction step. In contrast,
the circuit~$\teleport_n$ does not incorporate such a correction step.  We note that
the  computation of~$\widetilde{Q}$ from the measurement outcomes is not feasible using, e.g., an $\NC^0$ or an $\AC^0$-circuit.

Our construction avoids the necessity of computing Pauli corrections such as~$\widetilde{Q}$ by exploiting the fact that the correction function~$Q$ in the definition of the circuit~$\PCliff{Q}_L$  can be chosen arbitrarily (without computational complexity considerations). Very roughly, the circuit~$\teleport_n$ emulates 
the circuit~$\PCliff{\widetilde{Q}}_L$ for suitably chosen~$L$ and~$\widetilde{Q}$.

In more detail, let us  examine the relationship between the quantum circuits for~$\teleport_n$ and~$\PCliff{Q}_L$. 
The output distribution of~$\teleport_n$ is given by
\begin{align}
    p^{\teleport_n}(P_0,\ldots,P_{n-1}\mid C_0,\ldots, C_{n-1})&=\frac{1}{4^{n}}\left|\tr\left(P_{n-1}C_{n-1}\cdots P_1C_1P_0C_0\right)\right|^2 \ ,\label{eq:poutteleportn}
\end{align}
which we show at the end of this section, see Eq.~\eqref{eq:14 restated} and bellow.

Eq.~\eqref{eq:poutteleportn} can be written as\footnote{The derivation is similar to the one described for~\eqref{eq:poutteleportn}, see~\eqref{eq:14 restated} and bellow.}
 \begin{align}
 p^{\teleport_n}(P_0,\ldots,P_{n-1}\mid C_0,\ldots, C_{n-1})&=
 \frac{1}{4^{n}}\left|\tr\left(
 (P_{n-1}C_{n-1})
 (P_{n-2}C_{n-2}\cdots P_1C_1P_0C_0)
 \right)\right|^2\notag\\
 &=\frac{1}{4^{n}}\left|\tr\left(
 (P_{n-1}C_{n-1})\widetilde{Q} C_{n-2}\cdots C_0
 \right)\right|^2\ ,\label{eq:commutedversionoutputtelepn}
 \end{align}
 where the Pauli operator~
 \begin{align}
 \widetilde{Q}=\widetilde{Q}(C_0,\ldots,C_{n-2},P_0,P_1,\ldots,P_{n-2})\label{eq:tildeQdefinitionx}
 \end{align}
  is obtained by commuting all the Pauli operators in the product~$P_{n-2}C_{n-2}\cdots P_0C_0$ to the left, i.e.,
 \begin{align}
 \widetilde{Q}\prod_{i=0}^{n-2}C_i &=P_{n-2}C_{n-2}\cdots P_0C_0\ .
 \end{align}
On the other hand, the output distribution of~$\PCliff{Q}_L$ is 
\begin{align}
    p^{\PCliff{Q}_L}(P\ |\ C_1,\ldots,C_L,D)&=
\left|\bra{\Phi}(P\otimes I)(I\otimes \left(DQ(C_1,\ldots,C_L)(C_L\cdots C_1)\right))\ket{\Phi}\right|^2\ \label{eq:pcliffqoutputdistributionm} 
\end{align}
which can be written as 
\begin{align}
 p^{\PCliff{Q}_L}(P\mid C_1,\ldots, C_{L},D)&=\frac{1}{4}\left|\tr\left(PDQC_L\cdots C_1\right)\right|^2\ ,\ \textrm{where}\  Q=Q(C_1,\ldots,C_L)\ .\label{eq:poutPCliffQL}
 \end{align}
  Note that we use $Q=Q(C_1,\ldots,C_L)$ to indicate that $Q$ is a function of $(C_1,\ldots,C_L)$ and of no other arguments.
  Comparing~\eqref{eq:commutedversionoutputtelepn}
 and~\eqref{eq:poutPCliffQL}, setting $n:=L+1$ and 
 \begin{align}
     C_j'&:=\begin{cases}
     C_{j+1}\qquad&\textrm{ for }\quad j=0,\ldots,n-2\\
     D &\textrm{ for }\quad j=n-1\ 
     \end{cases}\label{eq:specialchoiselcnp}
 \end{align} we have 
  the implication
 \begin{align}
 p^{\teleport_n}(P_0,\ldots,P_{n-1}\mid 
 C_0',\ldots,C_{n-1}')>0\quad \Rightarrow\quad  \tilde{p}(P\mid C_1,\ldots,C_L,D)>0\ ,\label{eq:implicationm}
 \end{align}
 where we have written $P$ for $P_{n-1}$, and 
where the distribution~$\tilde{p}$
is defined as in~\eqref{eq:poutPCliffQL}
but with the Pauli operator~$Q(C_1,\ldots,C_L)$ replaced by
$\widetilde{Q}(C_1,\ldots,C_{L},P_0,\ldots,P_{L-1})$, that is,
\begin{align}
\tilde{p}(P\mid C_1,\ldots, C_{L},D)&:=
\frac{1}{4}\left|\tr\left(PD\widetilde{Q}C_L\cdots C_1\right)\right|^2\, ,
\label{eq:tildepm}
\end{align}
where we used a shorthand notation $\widetilde{Q}=\widetilde{Q}(C_1,\ldots,C_L,P_0,\ldots,P_{L-1})$.
Eq.~\eqref{eq:implicationm} and comparing~\eqref{eq:tildepm} with~\eqref{eq:poutPCliffQL} suggests that 
a possibilistic simulation of~$\teleport_n$ can be used to possibilistically simulate~$\PCliff{Q}_L$ for a suitably chosen function~$Q$. However, 
the fact that $\widetilde{Q}$ also depends on~$P_0,\ldots,P_{L-1}$ needs to be addressed---the expression~\eqref{eq:tildepm} differs from the output distribution~\eqref{eq:poutPCliffQL} of~$\PCliff{Q}_L$.  
This is where no-signaling considerations enter the argument.

It remains to show the technical claim of Eq.~\eqref{eq:poutteleportn} that we restate and prove here.
\begin{align}
	 p^{\teleport_n}(P_0,\ldots,P_{n-1}\mid C_0,\ldots, C_{n-1})&=\frac{1}{4^{n}}\left|\tr\left(P_{n-1}C_{n-1}\cdots P_1C_1P_0C_0\right)\right|^2 \label{eq:14 restated}
\end{align}
	The correctness can be easily verified using  the tensor-network formalism. Alternatively, the identity can be obtained as follows. Let $\pi_N:(\mathbb{C}^2)^{\otimes N}\rightarrow (\mathbb{C}^2)^{\otimes N}$ be the unitary which cyclically ``left'' permutes tensor factors, i.e., $\pi_N$ is defined as $\pi_N(\ket{\Psi_0}\otimes\cdots\otimes \ket{\Psi_{N-1}})=\ket{\Psi_{1}}\otimes \ket{\Psi_2}\otimes \cdots \otimes \ket{\Psi_{N-1}}\otimes \ket{\Psi_{0}}$ for  $\ket{\Psi_j}\in\mathbb{C}^2$, $j\in \mathbb{Z}_N$, and linearly extended to all of $(\mathbb{C}^2)^{\otimes N}$.  This satisfies
	\begin{align}
		\tr(\pi_N (A_0\otimes\cdots\otimes  A_{N-1}))=\tr(A_0\cdots A_{N-1})\ \label{eq:cyclidefinitionf}
	\end{align}
	for any $N$-tuple of operators~$(A_0,\ldots,A_{N-1})$ on $\mathbb{C}^2$.
	Also observe that
	\begin{align}
		\cE(A):=\frac{1}{4}\sum_{k,\ell\in \{0,1\}} X^kZ^\ell AX^kZ^\ell&=\frac{1}{2} A^T\qquad\textrm{ for any operator $A$ on $\mathbb{C}^2$\ .}\label{eq:eaveragem}
	\end{align}
	We  use that the projection onto the maximally entangled state~$\ket{\Phi}$ can be written as
	\begin{align}
		\proj{\Phi}&=\frac{1}{4}(I\otimes I+X\otimes X-Y\otimes Y+Z\otimes Z)=
		\frac{1}{4}\sum_{a,b\in\{0,1\}} (X^aZ^b)\otimes (X^aZ^b)\ .\label{eq:phystateexpression}
	\end{align}
	The probability~$p^{\teleport_n}(P_0,\ldots,P_{n-1}\mid C_0,\ldots, C_{n-1})$ of interest 
	is the squared overlap of the state
	$\left((I\otimes  C_0)\otimes \dots \otimes (I\otimes  C_{n-1})\right)\ket{\Phi}^{\otimes n}$ 
	and the state
	\begin{align}
	\pi^\dagger_{2n}\left((P_0\otimes  I)\otimes \dots \otimes (P_{n-1}\otimes  I)\right)\ket{\Phi}^{\otimes n} \ ,
	\end{align}
	 see Fig.~\ref{fig:teleportationcircuit}. The latter is equivalent to 
	\begin{align}
	 &\pi^\dagger_{2n}\left((P_0\otimes  I)\otimes \dots \otimes (P_{n-1}\otimes  I)\right)\pi_{2n}\pi^\dagger_{2n}\ket{\Phi}^{\otimes n}\\
	 &=\left((I\otimes  P_0)\otimes \dots \otimes (I\otimes  P_{n-1})\right)\pi^\dagger_{2n}\ket{\Phi}^{\otimes n} \ .
	 \end{align}
	Thus the overlap of interest can be expressed as
	\begin{align}
			&\left|\tr\left(
			\left( (I\otimes  P_0C_0)\otimes \dots \otimes (I\otimes  P_{n-1}C_{n-1}) \right)
			\proj{\Phi}^{\otimes n}\pi_{2n}\right)\right|^2 \\
			&=\Big|\frac{1}{4^n}
			\sum_{\vec{a},\vec{b} \in \{0,1\}^n} \tr \Big(\pi_{2n} \big(( X^{a_0}Z^{b_0} \otimes P_0C_0 X^{a_0}Z^{b_0} ) \otimes \dots \notag\\[-0.2cm]
&\qquad\qquad\qquad\qquad\qquad			\dots\otimes (X^{a_{n-1}}Z^{b_{n-1}} \otimes P_{n-1}C_{n-1} X^{a_{n-1}}Z^{b_{n-1}}))\big)\Big)
			\Big|^2 \quad \textrm{by~\eqref{eq:phystateexpression}}\\
			&=\left|\tr\left(\cE(P_0C_0)\cdots \cE(P_{n-1}C_{n-1})\right)\right|^2 \qquad \textrm{\hfil by~\eqref{eq:cyclidefinitionf}}\\
			&=\frac{1}{2^{2n}}\left|
			\tr\left((P_0C_0)^T\cdots (P_{n-1}C_{n-1})^T\right)\right|^2
			\qquad \textrm{ by \eqref{eq:eaveragem}}\\
			&=\frac{1}{2^{2n}}\left|\tr\left(
			((P_{n-1}C_{n-1})\cdots (P_0C_0))^T
			\right)\right|^2\ .
	\end{align}
	The claim~\eqref{eq:14 restated} (cf. \eqref{eq:poutteleportn}) now follows because the transpose is trace-preserving, i.e., $\tr(A^T)=\tr(A)$.

\subsubsection{An algorithm for~$\psim{\PCliff{Q}}$\label{sec:algorithmpsimcliffql}}
Suppose $\mathsf{C}\colon\cC^n\rightarrow\cP^n$ is a deterministic oracle (i.e., a function)
that solves the problem $\psim{\teleport_n}$.  In this section, we argue that query access to~$\mathsf{C}$ 
allows  an $\NC^0$-circuit to solve the problem~$\psim{\PCliff{Q}_L}$ for some fixed function~$Q$, assuming that~$\mathsf{C}$ has certain locality properties we detail below. Moreover, this property  matches exactly what we established in Section~\ref{sec:nosignalingnc0} for~$\NC^0$-circuits.
The main idea behind the reduction is identical to that explained in the previous section: we embed instances of~$\psim{\PCliff{Q}_L}$ into~$\psim{\teleport_n}$. However, we need to deal with {\em some} amount of signaling
since the $\widetilde{Q}$ in Eq.~\eqref{eq:tildepm} could be indirectly influenced by $D$ through the Pauli operators $(P_0,\ldots, P_{L-1})$.
Therefore, contrary to the choice of embedding~\eqref{eq:specialchoiselcnp}, our embeddings use longer sequences (i.e., large values of~$n$), essentially ``padding'' with identity operators.
 
To define these embeddings, suppose that there are $L$ and $m$ such that $L\leq m <n$, which  satisfy properties illustrated in Figure~\ref{fig:nosignalingstructure}.
     \begin{figure}[!h]
         \centering
         \includegraphics{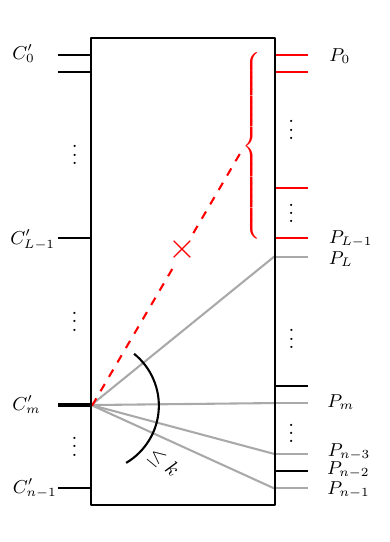}
         \caption{An illustration of the no-signaling assumption on the oracle for the $\psim{\teleport_n}$ problem. We assume the input $C'_m:=D$ cannot signal to outputs $P_0,\ldots,P_{L-1}$ and can only signal to a constant $k$ number of outputs with indices from the set $\cJ$. We depict the outputs influenced by the input $C'_m$ in grey.}
         \label{fig:nosignalingstructure}
     \end{figure}
     
We  embed an instance~$(C_1,\ldots,C_L,D)\in \cC^{L+1}$  of the problem $\psim{\PCliff{Q}_L}$ (where $Q:\cC^L\rightarrow\cP$ can be any fixed function)
into an instance~$(C'_0,\ldots,C'_{n-1})\in\cC^n$ of~$\psim{\teleport_n}$ as follows: we set
\begin{align}
 C'_j &=\begin{cases}
 C_{j+1}\qquad &\textrm{ for }\quad j=0,\ldots,L-1\ ,\\
 I &\textrm{ for }\quad j\geq L,\ j\neq m\ ,\\
 D &\textrm{ for }\quad j=m\ .
 \end{cases}\label{eq:mainembeddingideasimcliffql}
\end{align}
To examine the result of this embedding, we need to consider the output probabilities 
for an embedded instance: 
we have 
\begin{align}
\!\!\!\!\!\!\!\!\!\!\! p^{\teleport_n}(P_0,\makebox[1em][c]{.\hfil.\hfil.}\,,P_{n-1}\!\mid\! 
 C_0',\makebox[1em][c]{.\hfil.\hfil.}\,,C_{n-1}')\!&=\!\frac{1}{4^{n}}\left|\tr\!\left(\!\left(\prod_{r=m}^{n-1} P_r\!\right)
		\!D\!
		\left(\prod_{s=L}^{m-1}P_s\!\right)\!\!\left(\prod_{t=1}^L
		(P_{t-1} C_t)\!\right)\!\right)\right|^2\!\! . \label{eq:explicitexpressionpteleportn}
\end{align}
We display the trace defining the probability $p^{\teleport_n}(P_0,\ldots,P_{n-1}|C'_0,\ldots,C'_{n-1})$ as well as the same trace after we inserted variables from the embedding (i.e. the right-hand side of Eq.~\eqref{eq:explicitexpressionpteleportn}) in tensor-network diagrammatic representation in Figure~\ref{fig:gif1gif2}.

\begin{figure}[!h]
     \centering
         \begin{subfigure}[b]{\textwidth}
         \centering
         \includegraphics[width=\textwidth]{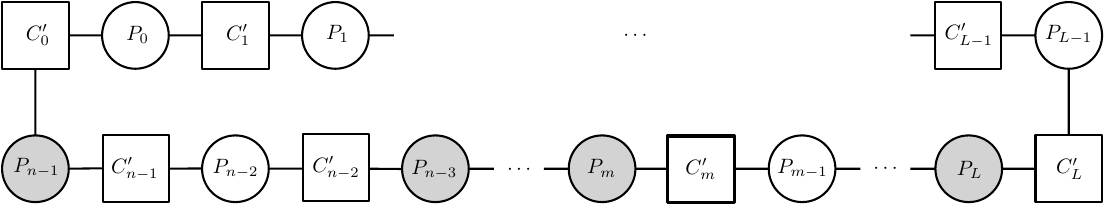}
         \caption{Tensor-network representation of the trace defining $p^{\teleport_n}(P_0,\ldots,P_{n-1}| C'_0,\ldots, C'_{n-1})$, from Eq.~\eqref{eq:poutteleportn}. We depict Clifford unitaries by squares and Pauli operators by circles. The output Pauli operators influenced by the input $C'_m$ are illustrated in grey.}
         \label{fig:nosignalingtrace}
         \end{subfigure}
         \\[0.7cm]
         \begin{subfigure}[b]{\textwidth}
         \centering
        \includegraphics[width=\textwidth]{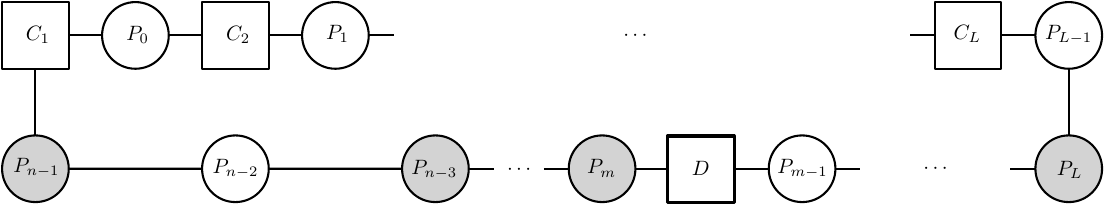}
        \caption{The trace defining the output probability distribution of $\teleport_n$ after we insert the inputs from Eq.~\eqref{eq:mainembeddingideasimcliffql} (also later in Algorithm~\ref{alg:simcliffptosimtelepalg} after the lines~\eqref{step:redAinit}--\eqref{step:redAendinit}). }
        \label{fig:gifplugingvalues}
         \end{subfigure}
        \caption{Tensor-network diagrammatic representation of (a) the trace that defines $p^{\teleport_n}(P_0,\ldots,P_{n-1}| C'_0,\ldots, C'_{n-1})$, from Eq.~\eqref{eq:poutteleportn}, and (b) the same trace after we inserted variables from the embedding~\eqref{eq:mainembeddingideasimcliffql}. Note that this figure illustrates a case where $m>L$, however, our arguments also apply for $m=L$ without further modifications.}
        \label{fig:gif1gif2}
\end{figure}

We need to relate this to the output probability~$p^{\PCliff{Q}_L}(P\mid C_1,\ldots, C_{L},D)$ of the circuit~$\PCliff{Q}_L$, see Eq.~\eqref{eq:poutPCliffQL}.
 It is easy to see (using commutation relations) that expression~\eqref{eq:explicitexpressionpteleportn} can indeed be related to~$p^{\PCliff{Q}_L}(P\mid C_1,\ldots, C_{L},D)$
 for suitable choices of~$Q$ and~$P$. The less obvious a priori is that, when going from~$(P_0,\ldots,P_{n-1})$ to~$(Q,P)$, the dependence on the Paulis~$(P_{j_1},\ldots,P_{j_k})$  for a fixed subset~$\cJ:=\{j_1<\cdots <j_k\}$ of positions can be limited to~$P$ (whereas~$Q$ can be chosen not to depend on these Paulis). This is a consequence of the cyclic structure of the circuit~$\teleport_n$. We formalize it in the following Lemma~\ref{lem:technicalcommutinglemma}.

 \begin{lemma}\label{lem:technicalcommutinglemma}
 Let $L\leq m<n$ be integers. Let 
 \begin{align}
     \cJ=\{j_1<\cdots <j_k\}
 \end{align}
 be a subset of $\{L,L+1,\ldots,n-1\}$. Then there are two functions
 \begin{align}
 \begin{matrix}
     P:&\cP^k\times\cC & \rightarrow &\cP\\
     Q:&\cP^{n-k}\times\cC^L & \rightarrow & \cP
     \end{matrix}
 \end{align}
 such that 
 	\begin{align}
		\left|\tr\left(\left(\prod_{r=m}^{n-1} P_r\right)
		D
		\left(\prod_{s=L}^{m-1}P_s\right)\left(\prod_{t=1}^L
		(P_{t-1} C_t)\right)\right)\right|=\left|\tr\left(PDQ\prod_{j=1}^L C_j \right)\right| \  \label{eq:commutinglemma}
	\end{align}
	for all $(P_0,\ldots,P_{n-1})\in\cP^n$ and $(C_1,\ldots,C_{L},D)\in\cC^{L+1}$, where 
	\begin{align}
	    P&=P\left(\left(P_j\right)_{j\in\cJ},D\right)\label{eq:pdefinitiondependence}\\
	     Q&=Q\left(\left(P_j\right)_{j\not\in\cJ},C_1,\ldots,C_L\right)\ .\label{eq:qdefinitiondependence}
	\end{align}
	Here we write~$\left(P_j\right)_{j\in\cJ}$ for $(P_{j_1},\ldots,P_{j_k})$
	(and similarly for $\left(P_j\right)_{j\not\in\cJ}$).
 \end{lemma}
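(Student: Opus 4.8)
The plan is to manipulate the product inside the trace on the left-hand side of~\eqref{eq:commutinglemma} using only (a) the unitarity of Clifford group elements, $C^\dagger C = I$, and Pauli group elements, $P^\dagger = \pm P$, (b) the fact that conjugating a Pauli by a Clifford yields a Pauli (up to sign), and (c) cyclicity of the trace. The target form has the four ``blocks'' $P$, $D$, $Q$, and $\prod_{j=1}^L C_j$; so I would aim to push every Pauli appearing with an index $j\in\cJ$ to sit adjacent to $D$ (these will be absorbed into $P$), push every Pauli with index $j\notin\cJ$ to sit adjacent to the Clifford product (absorbed into $Q$), and collapse the Cliffords $C_1,\ldots,C_L$ into the single product $\prod_{j=1}^L C_j$. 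Crucially, I must track the dependence of $P$ on $(P_j)_{j\in\cJ}$ and $D$ only, and of $Q$ on $(P_j)_{j\notin\cJ}$ and $C_1,\ldots,C_L$ only.

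First I would rewrite the left-hand product as a single ordered word. Reading from the right, the innermost block is $\prod_{t=1}^L (P_{t-1}C_t) = P_0 C_1 P_1 C_2 \cdots P_{L-1} C_L$. To the left of it stands $\prod_{s=L}^{m-1} P_s = P_L P_{L+1}\cdots P_{m-1}$, then $D$, then $\prod_{r=m}^{n-1} P_r = P_m P_{m+1}\cdots P_{n-1}$. I would first move each $P_{t-1}$ (for $t=1,\ldots,L$) to the right through the Cliffords $C_t C_{t+1}\cdots C_L$: conjugation turns it into a Pauli $\widehat P_{t-1} := (C_L\cdots C_t) P_{t-1} (C_L\cdots C_t)^\dagger$ (up to a sign $\pm 1$), at the cost of commuting it past the later Cliffords. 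After doing this for all $t\le L$, the Cliffords collect into $\prod_{j=1}^L C_j$ on the far right, and to their left we have accumulated a product of Paulis $\widehat P_0\widehat P_1\cdots\widehat P_{L-1}$ — note the indices $0,\ldots,L-1$ are all outside $\cJ\subset\{L,\ldots,n-1\}$. Combined with the block $P_L\cdots P_{m-1}$ (indices $L,\ldots,m-1$, some in $\cJ$, some not), these are now all the Paulis sitting between $D$ and $\prod C_j$; and the block $P_m\cdots P_{n-1}$ (again, some indices in $\cJ$, some not) sits to the left of $D$. The word now reads (up to an overall sign)
\[
\Big(\prod_{r=m}^{n-1}P_r\Big)\, D\, \Big(\prod_{s=L}^{m-1}P_s\Big)\Big(\prod_{t=1}^{L}\widehat P_{t-1}\Big)\prod_{j=1}^L C_j\ .
\]

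Next I would use cyclicity of the trace to rotate the leading block $\prod_{r=m}^{n-1}P_r$ to the far right: $\tr(ABC)=\tr(CAB)$, so this factor moves to sit immediately to the right of $\prod_{j=1}^L C_j$. This is exactly where the cyclic structure of $\teleport_n$ is used, as the paper emphasizes. Now every Pauli is on one side of $D$: to the immediate right of $D$ we have $\prod_{s=L}^{m-1}P_s \cdot \prod_{t=1}^L\widehat P_{t-1}$, and at the very end (cyclically adjacent to $D$ from the other side) we have $\prod_{r=m}^{n-1}P_r$. At this point I separate, within the block $P_L\cdots P_{m-1}$ and the block $P_m\cdots P_{n-1}$, the factors with index in $\cJ$ from those with index not in $\cJ$: a product of Paulis can be reordered arbitrarily at the cost of a sign (any two Paulis commute or anticommute). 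So I collect all the $(P_j)_{j\in\cJ}$ factors into a single Pauli $P':=\pm\prod_{j\in\cJ}P_j$ and all the remaining Pauli factors (those with index $\notin\cJ$, together with the $\widehat P_{t-1}$'s, which depend only on $(P_{t-1})_{t\le L}$ and $C_1,\ldots,C_L$) into a single Pauli. Then I would set $P := \pm D\,(\text{product of the }(P_j)_{j\in\cJ}\text{-factors})\,D^\dagger$ — or more directly absorb the $D$ and these Paulis so that $P$ depends only on $(P_j)_{j\in\cJ}$ and $D$ — and $Q :=$ the product of all remaining Pauli factors, which depends only on $(P_j)_{j\notin\cJ}$ and $C_1,\ldots,C_L$. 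After taking absolute values the accumulated signs $\pm 1$ disappear, yielding $|\tr(P D Q\prod_{j=1}^L C_j)|$ as required.

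The main obstacle I anticipate is purely bookkeeping: tracking which Paulis end up on which side of $D$ after the conjugations and the cyclic rotation, and verifying that after separating $\cJ$-indexed from non-$\cJ$-indexed factors, the $\cJ$-indexed ones can be grouped into a single operator depending only on $((P_j)_{j\in\cJ}, D)$ while everything else depends only on $((P_j)_{j\notin\cJ}, C_1,\ldots,C_L)$ — in particular confirming that no $D$-dependence leaks into $Q$ and no $C_i$-dependence leaks into $P$. The signs are harmless because of the absolute value, so I would not track them carefully; the content is entirely in the support/dependence structure, which is exactly what the cyclic topology of $\teleport_n$ and the placement $\cJ\subset\{L,\ldots,n-1\}$ (so that $\cJ$ is disjoint from the ``reindexed'' Paulis $\widehat P_0,\ldots,\widehat P_{L-1}$) are designed to guarantee.
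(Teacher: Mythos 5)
Your plan matches the paper's proof essentially step by step: commute the low-index Paulis $P_0,\ldots,P_{L-1}$ past the Cliffords to collect $\prod_{j=1}^L C_j$ (producing the paper's $Q'''$), factor each Pauli block into its $\cJ$- and non-$\cJ$-indexed parts, use cyclicity of the trace to carry the non-$\cJ$ part of $\prod_{r=m}^{n-1}P_r$ around to the far right so it never crosses $D$, and absorb the $\cJ$-indexed Paulis flanking $D$ into $P$ and everything else into $Q$ — exactly the paper's $P:=P'DP''D^{-1}$ and $Q:=Q''Q'''\bigl(\prod_j C_j\bigr)Q'\bigl(\prod_j C_j\bigr)^{-1}$. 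The only small slip is your tentative formula $P:=\pm D\bigl(\prod_{j\in\cJ}P_j\bigr)D^\dagger$: since the $\cJ$-indexed Paulis end up on both sides of $D$, the correct form is $P'DP''D^{-1}$ with $P'$ the $\cJ$-part of $\prod_{r=m}^{n-1}P_r$ and $P''$ the $\cJ$-part of $\prod_{s=L}^{m-1}P_s$ — but this is precisely the bookkeeping you flag as remaining, and the dependence structure you assert is correct.
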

 We illustrate each step in the proof of Lemma~\ref{lem:technicalcommutinglemma}  in the tensor-network diagrammatic representation for clarity. The left-hand side of Eq.~\eqref{eq:commutinglemma} is shown  in Figure~\ref{fig:gifplugingvalues}.
 \begin{proof}
 Let $(P_0,\ldots,P_{n-1})\in\cP^n$ and $(C_1,\ldots,C_{L},D)\in\cC^{L+1}$ be arbitrary. Define
	\begin{align}
		R&:=\left(\prod_{r=m}^{n-1} P_r\right) \qquad\text{ and }\qquad S:=\left(\prod_{s=L}^{m-1}P_s\right)\ .
	\end{align}
	Then the quantity of interest is equal to
	\begin{align}
		\left|\tr\left(\left(\prod_{r=m}^{n-1} P_r\right)
		D
		\left(\prod_{s=L}^{m-1}P_s\right)\left(\prod_{t=1}^L
		(P_{t-1} C_t)\right)\right)\right|&=
		\left|\tr\left(
		R
		D
		S
		\left(\prod_{t=1}^L
		(P_{t-1} C_t)\right)\right)\right|\label{eq:commutingrdsptct}\ .
	\end{align}
	 Since Pauli operators either commute or anticommute, we can factor the products of Pauli operators~$R$ and~$S$ into factors $P_j$ with $j\in \cJ$, and factors~$P_j$ with $j\in\mathbb{Z}_n\backslash \cJ$. That is, we have	
		\begin{align}
		R=\pm Q'P' \qquad \text{ and }\qquad S=\pm P''Q'' \ ,\label{eq:rsfactorizationqprimepprime}
	\end{align}
	where
	\begin{align}
		\begin{matrix}
			P'&:=&\prod_{r=m}^{n-1}&
			P_r^{\delta_{r}}\\
			Q'&:=&\prod_{r=m}^{n-1}&P_r^{1-\delta_r}
		\end{matrix}\qquad\textrm{ and }\qquad 
		\begin{matrix}
			P''&:=&\prod_{s=L}^{m-1}&P_s^{\delta_s}\\
			Q''&:=&\prod_{s=L}^{m-1}&P_s^{1-\delta_s}\ 
		\end{matrix}\label{eq:clppqq}
	\end{align}
	where  $\delta_r\in \{0,1\}$ indicates whether or not~$r\in\cJ$, that is, 
	\begin{align}
		\delta_r&:=\begin{cases}
			1 \qquad &\textrm{ if }r\in\cJ\\
			0&\textrm{ otherwise}\ .
		\end{cases}
	\end{align}
	Inserting the factorization~\eqref{eq:rsfactorizationqprimepprime} into expression~\eqref{eq:commutingrdsptct} we have
	\begin{align}
		\pm\tr\left(RDS\left(\prod_{t=1}^L
		(P_{t-1} C_t)\right)\right)&=
    \tr\left((Q'P')D(P''Q'') 
		\left(\prod_{k=1}^L
		(P_{k-1} C_k)\right)\right)\\
		&= \tr\left(Q'P'DP''Q'' Q'''\left(\prod_{j=1}^L C_j\right)\right)\ .
		\label{eq:clpdpqqcjq}
	\end{align}
	Here we commuted the Pauli operators $P_0,\ldots,P_{L-1}$  in the product~$\prod_{t=1}^L
		(P_{t-1} C_t)$ to the left, combining them into a Pauli operator~$Q'''\in\cP$ defined by 
	\begin{align}
Q''':=\left(\prod_{r=1}^L (P_{r-1}C_r)\right)\left(\prod_{r=1}^L C_r\right)^{-1}\ .
	\end{align}
	At this point, we use the cyclicity of the trace to move~$Q'$ to the right of the product in the argument, obtaining
	\begin{align}
	 \tr\left(Q'P'DP''Q'' Q'''\left(\prod_{j=1}^L C_j\right)\right)&=\tr\left(P'DP''Q'' Q'''\left(\prod_{j=1}^L C_j\right)Q'\right)\\
	 &=\tr\left(P'DP''Q\left(\prod_{j=1}^L C_j\right)\right)\ ,\label{eq:pprimdpddprime}
	\end{align}
	where we defined 
	\begin{align}
		Q&:=Q''Q'''\left(\prod_{j=1}^L C_j\right) Q'\left(\prod_{j=1}^L C_j\right)^{-1}\ . \label{eq:collectingQ} 
	\end{align}
	Using the cyclicity of the trace to move~$Q'$ to the right (instead of commuting it past~$P'DP''$) here ensures that the resulting operator~$Q$ has no dependence on~$D$. Indeed, it is clear from the definition of $Q',Q'',Q'''$ and~\eqref{eq:collectingQ} that~$Q$ is a function of $(P_j)_{j\in\cJ}$ and $(C_1,\ldots,C_L)$ only as claimed, see Eq.~\eqref{eq:qdefinitiondependence}. We illustrate these  steps (commuting Paulis) in Figure~\ref{fig:gif3}.
      \begin{figure}[!h]
         \centering
         \includegraphics[width=\textwidth]{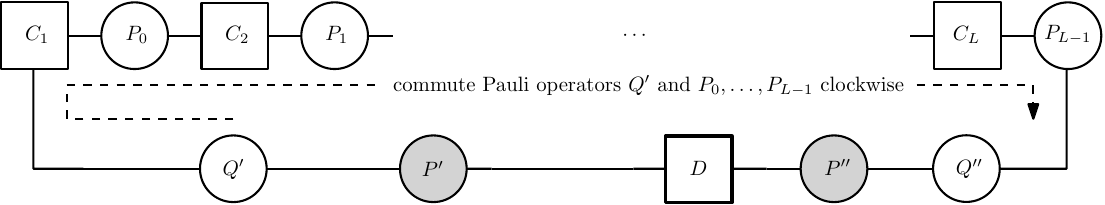}
         \caption{The use of the cyclic property of trace to commute Pauli operators not influenced by the input $D$ in a way that does not create additional dependence of the resulting Pauli operator on $D$. This figure illustrates the steps from Eq.~\eqref{eq:commutingrdsptct} to Eq.~\eqref{eq:pprimdpddprime}, after the Pauli operators composing $Q',P',P'',Q''$ have been respectively regrouped.}
         \label{fig:gif3}
     \end{figure}
	
	We can rewrite~\eqref{eq:pprimdpddprime}
as	\begin{align}
	\tr\left(P'DP''Q\left(\prod_{j=1}^L C_j\right)\right)&=
        \tr\left(PDQ\left(\prod_{j=1}^L C_j\right)\right) \label{eq:clpdpqcj}
	\end{align}
	by defining
    \begin{align}
        P:=P'DP''D^{-1}\ .\label{eq:clP}
    \end{align}
    It is easy to check using the definition of $P'$ and $P''$ and~\eqref{eq:clP} that $P$ is a function of only $(P_j)_{j\in\cJ}$ and~$D$ as claimed (see Eq.~\eqref{eq:pdefinitiondependence}). We illustrate the computation of $P$ in Figure~\ref{fig:gif4}.
    \begin{figure}[!ht]
         \centering
         \includegraphics[width=\textwidth]{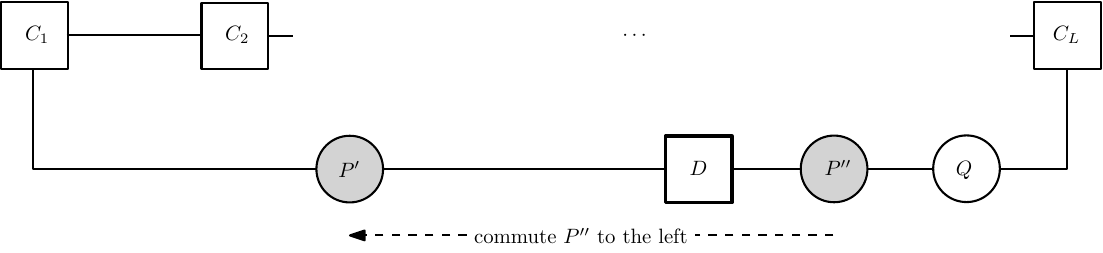}
         \caption{In this step we commute all Pauli operators depending on $D$ to the left of it. This illustrates the Eq.~\eqref{eq:clpdpqcj} and the step \eqref{step:computeP} of Algorithm~\ref{alg:Afunction}.}
         \label{fig:gif4}
     \end{figure}
    
    This proves the lemma since we have shown that~\eqref{eq:commutinglemma} is satisfied with these definitions, and we can assume without loss of generality that~$P,Q\in\cP$ (as~\eqref{eq:commutinglemma} includes absolute values). The resulting trace is depicted in Figure~\ref{fig:gif5}.
    \begin{figure}[!ht]
         \centering
         \includegraphics[width=\textwidth]{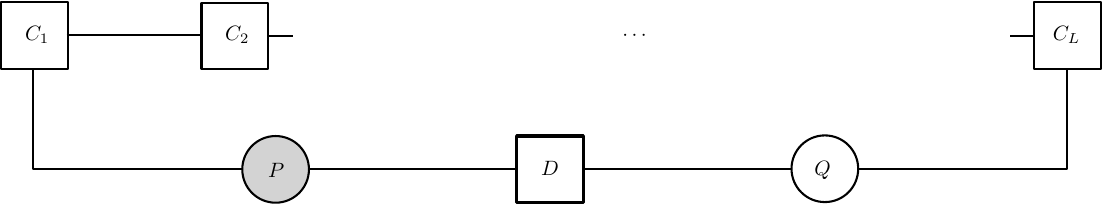}
         \caption{The resulting trace defining the probability $p^{\PCliff{Q}_L}(P\mid C_1,\ldots, C_{L},D)$ and therefore the solution to $\psim{\PCliff{Q}}$ problem. This illustrates the trace from Eq.~\eqref{eq:poutPCliffQL} to which we arrive at Eq.~\eqref{eq:clpdpqcj}.}
         \label{fig:gif5}
    \end{figure}
    \end{proof}

\begin{algorithm}
\caption{Algorithm defining the  function~$\cA_{\cJ,m}:\cP^n\times\cC\rightarrow\cP$. Here~$\cJ\subseteq \mathbb{Z}_n$ is a subset and~$m\in\mathbb{Z}_n$.
\label{alg:Afunction}}
\begin{algorithmic}[1]
       \Function{$\cA_{\cJ,m}$}{$P_0,\ldots, P_{n-1}, D$}
             \State $P'\leftarrow \prod_{p=m}^{n-1} P_p^{\delta_p}$ \quad where $\delta_p=1$ if  $p\in\cJ$ and $\delta_p=0$ otherwise
             \State $P''\leftarrow \prod_{q=L}^{m-1} P_q^{\delta_q}$ \label{step:Pdoubleprime}
            \State $P\leftarrow 
            P'DP''D^{-1}$. \label{step:computeP}
            \State Ignore (remove) the global phase to ensure that~$P\in\cP$.
            \State \Return $P$
        \EndFunction
\end{algorithmic}
\end{algorithm}

In the following, we will use that~$P$ only depends on the subsets of Paulis~$(P_j)_{j\in\cJ}$ with indices belonging to~$\cJ$ and the Clifford~$D$ as expressed by Eq.~\eqref{eq:pdefinitiondependence}. In fact, the Pauli~$P$ can easily be computed: the proof of Lemma~\ref{lem:technicalcommutinglemma} also shows that
the algorithm~$\cA_{\cJ,m}$ (Algorithm~\ref{alg:Afunction}) computes this function, i.e.,
we have 
\begin{align}
    \cA_{\cJ,m}(P_0,\ldots,P_{n-1},D)&=P\left(
    \left(P_j\right)_{j\in\cJ},D\right)\ \textrm{ for all }\ (P_0,\ldots,P_{n-1})\in\cP^n\textrm{ and }D\in\cC\,.\label{eq:equalitycomputationcbjmnP}
\end{align}
Importantly, the algorithm~$\cA_{\cJ,m}$ can be implemented efficiently. 

\begin{lemma}\label{lem:technicalcommutingefficientcomputation}
Let $k$ be a constant, let $L=\mathsf{poly}(n)<n$ and $m$ be such that $L\leq m<n$. Let $\cJ\subseteq \mathbb{Z}_n$ be of size $|\cJ|=k$. Then Algorithm~\ref{alg:Afunction} for computing the function~$\cA_{\cJ,m}:\cP^n\times\cC\rightarrow\cP$  can be implemented by an $\NC^0$-circuit. 
\end{lemma}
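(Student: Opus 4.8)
The plan is to examine each line of Algorithm~\ref{alg:Afunction} and verify that it can be carried out by a constant-depth, bounded-fan-in circuit on the bit-encodings of the Paulis and the Clifford~$D$. Recall that each~$P_i\in\cP$ is encoded by $2$~bits and~$D\in\cC$ by $5$~bits, so each input block has constant width; since~$\cJ$ has constant size~$k$, only a constant number of input blocks ever matter.

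First I would handle line~2: $P'=\prod_{p=m}^{n-1}P_p^{\delta_p}$ is, by definition of~$\delta_p$, simply the product $\prod_{j\in\cJ,\ j\ge m}P_j$ of \emph{at most~$k$} Pauli factors, each specified by $2$~bits. Multiplying two Paulis (in the group $\cP=\{I,X,Y,Z\}$, i.e.\ ignoring phases) is a fixed Boolean function on $4$~bits, computable in constant depth; composing $k-1$ such multiplications in a balanced binary tree gives depth~$O(\log k)=O(1)$ with bounded fan-in. The same argument applies verbatim to line~3 for~$P''=\prod_{j\in\cJ,\ L\le j<m}P_j$. Note that which~$P_j$ appear is determined by the \emph{fixed} sets~$\cJ$ and the fixed split at~$m$, so the wiring is hard-coded into the (non-uniform) circuit and no input-dependent routing is needed.

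Next, line~4 computes $P=P'DP''D^{-1}$. Conjugation $X\mapsto DXD^{-1}$ of a single-qubit Pauli by a single-qubit Clifford is a fixed function $\cP\times\cC\to\cP$ on $2+5=7$~input bits, hence computable in constant depth; apply it to get $DP''D^{-1}$, and then two further constant-depth Pauli multiplications yield $P'\cdot(DP''D^{-1})$ — wait, more directly, we may first compute $A:=DP''D^{-1}\in\cP$ in constant depth, then $P:=P'A$ in constant depth. Line~5 (discarding the global phase) is automatic once we agree to represent elements of~$\cP$ by their $2$-bit labels throughout — the multiplication and conjugation gadgets above are defined on these labels and already output a valid label — so it requires no gates at all.

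Composing these constantly many constant-depth stages gives an~$\NC^0$ circuit; its size is polynomial (indeed constant in the relevant bits, plus trivial pass-through wires for the unused input blocks) and its fan-in is bounded, as required. The only mild subtlety — and the one point worth checking carefully — is that none of the ``products over a range $p=m,\dots,n-1$'' secretly involve $\Theta(n)$ genuine factors: this is exactly why the definitions insert the indicator~$\delta_p$, collapsing each product to its at-most-$k$ non-identity terms, and why $L=\poly(n)$ and the precise value of~$m$ are harmless (they only affect which constant-size subset of the $2n$ input bits is read). With that observation in hand the lemma follows. \finproof
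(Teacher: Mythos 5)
Your proof is correct and takes essentially the same approach as the paper's, which simply observes that, since $|\cJ|=k$ is constant, the products defining $P'$, $P''$, and $P=P'DP''D^{-1}$ involve only constantly many Pauli and Clifford factors (each encoded in constantly many bits) and hence can be hard-wired into a constant-depth bounded-fan-in circuit. Your elaboration of the bit-encodings, the balanced multiplication tree, and the observation that the indicator $\delta_p$ collapses the apparently long products to at most $k$ genuine factors are exactly the details the paper's terse proof leaves implicit.
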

\begin{proof} 
  This follows immediately from equations~\eqref{eq:clppqq} and~\eqref{eq:clP}. Since $k$ is constant, the corresponding products can be computed by an $\NC^0$-circuit, see Lemma~\ref{lem:technicalcommutinglemma}. 
\end{proof}

\begin{algorithm}
\caption{The algorithm~$\cB_{L,m,\cJ}^{\mathsf{C}}$, which takes as
input an $L+1$-tuple~$(C_1,\ldots,C_L,D)\in \cC^{L+1}$ of Clifford group elements. It uses an oracle~$\mathsf{C}$ for the $\psim{\teleport_n}$ problem. 
For a suitable choice of~$(L,m,\cJ)$, the output~$P$ of~$\cB_{L,m,\cJ}^{\mathsf{C}}$
is a valid solution to 
$\psim{\PCliff{Q}_L}$ for a certain function~$Q:\cC^{L}\rightarrow\cP$, see Lemma~\ref{lem:simpcliffQtosimtelepnonsignaling}.
\label{alg:simcliffptosimtelepalg}}
\begin{algorithmic}[1]
        \Function{$\cB_{L,m,\cJ}^{\mathsf{C}}$}{$C_1,\ldots,C_L,D$}
            \For{$j=0,\ldots,L-1$}\label{step:redAinit}
                \State $C_j'\gets C_{j+1}$
            \EndFor
            \For{$j=L,\ldots,n-1$}
                \If{$j\neq m$}
                   \State $C_j'\gets I$
                 \Else
                   \State $C_j'\gets D$
                \EndIf 
            \EndFor\label{step:redAendinit}
            \State $(P_0,\ldots, P_{n-1})\leftarrow \mathsf{C}(C_0',\ldots, C'_{n-1})$\label{it:reductioner}
            \State $P\gets 
            \cA_{\cJ,m}(P_0,\ldots, P_{n-1},D)$ \label{step:POut}
            \State \Return $P$
        \EndFunction
\end{algorithmic}
\end{algorithm}

Our main result of this section combines Lemma~\ref{lem:technicalcommutinglemma} and Lemma~\ref{lem:technicalcommutingefficientcomputation} to show that an $\NC^0$~circuit~$\mathsf{C}$
 for $\psim{\teleport_n}$ can be used to solve~$\psim{\PCliff{Q}_L}$. To arrive at this result,  we make a suitable choice of~$L,m$ and the subset~$\cJ\subseteq \mathbb{Z}_n$ depending on the oracle~$\mathsf{C}$ in a non-black-box manner. More precisely,  we use the no-signaling properties of~$\mathsf{C}$ established in Section~\ref{sec:nosignalingnc0}. The resulting reduction is therefore non-uniform due to three parameters $(L,m,\cJ)$ that must be given as advice.
 The following theorem formalizes this and shows that  a suitable choice always exists.

\begin{lemma} \label{lem:simpcliffQtosimtelepnonsignaling}
    Let $\mathsf{C}\colon\cC^n\rightarrow \cP^n$ be a (possibly non-uniform) $\NC^0$-circuit for the $\psim{\teleport_n}$ problem. 
        Then  there are $(L,m,\cJ)$ with  $L=\Theta(n)$, $L\leq m<n$ and
        $\cJ\subseteq \mathbb{Z}_n$, as well as a function~$Q:\cC^L\rightarrow \cP$ such that the algorithm $\cB_{L,m,\cJ}^\mathsf{C}$ (Algorithm~\ref{alg:simcliffptosimtelepalg}) 
        has the following properties: 
        \begin{enumerate}[(i)]
        \item\label{it:firstclaimcomputationaclm}
        Given an arbitrary input~$(C_1,\ldots,C_L,D)\in\cC^{L+1}$,
        the output produced by the algorithm $P=\cB_{L,m,\cJ}^\mathsf{C}(C_1,\ldots, C_L, D)$ is a valid solution to~$\psim{\PCliff{Q}_L}$ associated with  the instance~$(C_1,\ldots,C_L,D)$.
        \item\label{it:computationaclmj}
        The algorithm $\cB_{L,m,\cJ}^\mathsf{C}$ can be implemented by an $\NC^0$-circuit making one oracle query to~$\mathsf{C}$.
        \end{enumerate}
\end{lemma}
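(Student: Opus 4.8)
The plan is to invoke the combinatorial Theorem~\ref{thm:maincombinatorial} to select good parameters, and then verify the two claims using Lemma~\ref{lem:technicalcommutinglemma} and Lemma~\ref{lem:technicalcommutingefficientcomputation}. First I would apply the locality bound from Section~\ref{sec:localityofcircuits}: since $\mathsf{C}\colon\cC^n\to\cP^n$ is a (non-uniform) $\NC^0$ circuit, encoding $\cC$-elements as $5$-bit strings and $\cP$-elements as $2$-bit strings, Eq.~\eqref{eq:cljupperboundz} shows $\mathsf{C}$ is $\ell$-(block-)local for some constant $\ell$. Then I would set $L:=\lfloor n/(4\ell)\rfloor=\Theta(n)$ and apply Theorem~\ref{thm:maincombinatorial} to obtain an index $m\in\{\lfloor n/2+1\rfloor,\ldots,n\}$ that is $L$-good and has limited signaling, i.e., $\cL^\rightarrow(m)\cap[L]=\emptyset$ and $|\cL^\rightarrow(m)|\le 8\ell=:k$ (a constant). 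I would take $\cJ:=\cL^\rightarrow(m)$; note that $\cJ\subset\{L,L+1,\ldots,n-1\}$ precisely because $m$ is $L$-good (none of the influenced outputs lie in $[L]$), which is exactly the hypothesis needed to apply Lemma~\ref{lem:technicalcommutinglemma}.

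Next, for claim~\eqref{it:firstclaimcomputationaclm}, I would trace through the embedding in Algorithm~\ref{alg:simcliffptosimtelepalg}. On input $(C_1,\ldots,C_L,D)$ the algorithm builds $(C_0',\ldots,C_{n-1}')$ via Eq.~\eqref{eq:mainembeddingideasimcliffql}, queries $(P_0,\ldots,P_{n-1})=\mathsf{C}(C_0',\ldots,C_{n-1}')$, and returns $P=\cA_{\cJ,m}(P_0,\ldots,P_{n-1},D)$. By correctness of $\mathsf{C}$ for $\psim{\teleport_n}$, we have $p^{\teleport_n}(P_0,\ldots,P_{n-1}\mid C_0',\ldots,C_{n-1}')>0$, and Eq.~\eqref{eq:explicitexpressionpteleportn} gives the explicit trace expression for this probability for the embedded instance. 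Lemma~\ref{lem:technicalcommutinglemma} then produces functions $P(\cdot),Q(\cdot)$ with $|\tr(PDQ\prod_{j=1}^LC_j)|$ equal to the modulus of that trace, hence nonzero; comparing with Eq.~\eqref{eq:poutPCliffQL} this says $p^{\PCliff{Q}_L}(P\mid C_1,\ldots,C_L,D)>0$. The crucial point is the dependence structure: by~\eqref{eq:qdefinitiondependence}, $Q$ depends only on $(P_j)_{j\notin\cJ}$ and $(C_1,\ldots,C_L)$. Here is where limited signaling is used: the outputs $(P_j)_{j\notin\cJ}$ are \emph{not} influenced by $C_m'=D$ (since $\cJ=\cL^\rightarrow(m)$), so when $D$ varies over $\cC$ the entire tuple $(P_j)_{j\notin\cJ}$ is a fixed function of $(C_1,\ldots,C_L)$ alone; hence $Q$ becomes a genuine function $Q\colon\cC^L\to\cP$ of $(C_1,\ldots,C_L)$ only, as required by the definition of $\PCliff{Q}_L$. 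Meanwhile $P=\cA_{\cJ,m}(P_0,\ldots,P_{n-1},D)$ equals $P((P_j)_{j\in\cJ},D)$ by Eq.~\eqref{eq:equalitycomputationcbjmnP}, which is a legitimate output for any fixed input tuple. Thus the returned $P$ solves $\psim{\PCliff{Q}_L}$ for this $Q$.

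For claim~\eqref{it:computationaclmj}, I would observe that $\cB_{L,m,\cJ}^\mathsf{C}$ consists of (a) assembling the embedded input $(C_0',\ldots,C_{n-1}')$, which is trivially $\NC^0$-computable from $(C_1,\ldots,C_L,D)$ since each $C_j'$ is either one of the input blocks, a constant $I$, or $D$; (b) a single oracle call to $\mathsf{C}$; and (c) computing $\cA_{\cJ,m}(P_0,\ldots,P_{n-1},D)$, which by Lemma~\ref{lem:technicalcommutingefficientcomputation} is $\NC^0$-computable because $|\cJ|=k$ is constant (so the products $P',P''$ in Algorithm~\ref{alg:Afunction} touch only a constant number of input blocks) and the single conjugation by $D$ and global-phase removal are $O(1)$-size operations on $2$-bit-encoded Paulis and $5$-bit-encoded Cliffords. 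Composing a constant-depth pre-processing circuit, one oracle query, and a constant-depth post-processing circuit yields an $\NC^0$ circuit with one oracle query to $\mathsf{C}$.

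I expect the main obstacle to be making rigorous the step in claim~\eqref{it:firstclaimcomputationaclm} where the function $Q$ from Lemma~\ref{lem:technicalcommutinglemma}, which a priori is a function of $(P_j)_{j\notin\cJ}$ \emph{and} $(C_1,\ldots,C_L)$, collapses to a function of $(C_1,\ldots,C_L)$ alone. The point is subtle because it requires combining (i) the no-signaling property $\cL^\rightarrow(m)\cap(\mathbb{Z}_n\setminus\cJ)=\emptyset$ — i.e., that none of the ``$Q$-relevant'' outputs see $D=C_m'$ — with (ii) the fact that the remaining inputs $C_j'$ for $j\ne m$ are all fixed (either input Cliffords or the constant $I$), so that as a function of the free variable $D$, the values $(P_j)_{j\notin\cJ}$ are constant. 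One must be careful that this ``constant in $D$'' statement holds pointwise for each fixed $(C_1,\ldots,C_L)$, which is exactly what is needed to define $Q(C_1,\ldots,C_L)$. Everything else is bookkeeping with Pauli commutation relations and the $\NC^0$ locality estimates already established.
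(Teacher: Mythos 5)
Your proposal is correct and follows essentially the same route as the paper: use the locality bound to get constant $\ell$, invoke Theorem~\ref{thm:maincombinatorial} to obtain $L$, $m$, and $\cJ=\cL^\rightarrow(m)$ with the $L$-good and limited-signaling properties, then combine Lemma~\ref{lem:technicalcommutinglemma} (for the trace decomposition and the $(P,Q)$-dependence structure) with the observation that the embedding fixes all inputs except $C'_m=D$, forcing $(P_j)_{j\notin\cJ}$ — and hence $Q$ — to depend on $(C_1,\ldots,C_L)$ alone, and finally Lemma~\ref{lem:technicalcommutingefficientcomputation} for the $\NC^0$-implementability of the post-processing. The step you flagged as the potential obstacle (collapsing $Q$ to a function of $(C_1,\ldots,C_L)$ by combining no-signaling with the constancy of the padded inputs) is exactly the argument the paper makes, so there is no gap.
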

\begin{proof}
    Suppose that~$\mathsf{C}\colon\cC^n\rightarrow\cP^n$ is an $\NC^0$-circuit. Let us denote input- and output-variables of~$\mathsf{C}$ by~$(C'_0,\ldots,C'_{n-1})$ and~$(P_0,\ldots,P_{n-1})$, respectively, i.e., we write
    \begin{align}
        (P_0,\ldots,P_{n-1})&=\mathsf{C}(C'_0,\ldots,C'_{n-1})\ .
    \end{align}
    We note that by definition,  the algorithm~$\cB_{L,m,\cJ}^\mathsf{C}$ uses the embedding~\eqref{eq:mainembeddingideasimcliffql}, i.e., 
    for an instance~$(C_1,\ldots,C_L,D)$ of $\psim{\PCliff{Q}_L}$, it calls the oracle~$\mathsf{C}$ with~$(C'_0,\ldots,C'_{n-1})$ defined by~\eqref{eq:mainembeddingideasimcliffql}.

        As a constant-depth circuit with bounded fan-in gates,  the circuit~$\mathsf{C}$ 
    is $\ell$-local with locality bounded by $\ell=O(1)$, see Eq.~\ref{eq:nczerocircuitbounddepth}.     Let $L:=\lfloor n/(4\ell)\rfloor$. Then $L=\Theta(n)$ by definition, and, as shown in Theorem~\ref{thm:maincombinatorial} there is an integer~$m\in \{ \lfloor n/2+1\rfloor,\ldots,n-1\}$  and a subset~$\cJ\subseteq\{L,\ldots,n-1\}$ such that
    \begin{enumerate}[(a)]
        \item \label{it:apartproofmain}
        the collection of output variables~$\left(P_j\right)_{j=0,\ldots,L-1}$ does not depend on the input variable~$C'_m$, i.e., 
        $\cL^\rightarrow(m)\cap \{0,\ldots,L-1\}=\emptyset$,
                and
        \item\label{it:bpartproofmain} 
        the input variable~$C'_m$ influences  at most~$8\ell$~output variables~$\{P_j\}_{j\in\cJ}$, i.e., we have $|\cJ|\leq 8\ell=O(1)$ for $\cJ=\cL^\rightarrow(m)$.
                    \end{enumerate}
    Statement~\eqref{it:bpartproofmain}  implies that the function~$\cA_{\cJ,m}$
    defined by Algorithm~\eqref{alg:Afunction}
    can be computed by an $\NC^0$-circuit from $(P_0,\ldots,P_{n-1})$, see Lemma~\ref{lem:technicalcommutingefficientcomputation}.  Claim~\eqref{it:computationaclmj} follows from this.

It remains to prove claim~\eqref{it:firstclaimcomputationaclm}. 
Note that~$\cA_{\cJ,m}$ computes the function
\begin{align}
    P\left(\left(P_j\right)_{j\in\cJ},D\right)
    \end{align}appearing in Lemma~\ref{lem:technicalcommutinglemma}, see 
Eq.~\eqref{eq:equalitycomputationcbjmnP}. Furthermore, the Pauli
\begin{align}
    Q&=Q\left(\left(P_j\right)_{j\not\in\cJ},C_1,\ldots,C_L\right)\label{eq:Qpjjcjonel}
\end{align}
 in Lemma~\ref{lem:technicalcommutinglemma} 
 does not depend on~$C'_m$ by~\eqref{it:apartproofmain}. 
 
 Consider an output variable $P_j$ with $j\not\in\cJ$. By definition, $P_j$ is a function of $\left(C'_j\right)_{j\in \mathbb{Z}_n\backslash \{m\}}$. Because the embedding~\eqref{eq:mainembeddingideasimcliffql} fixes to the identity all but the first~$L$ Cliffords~$(C'_0,\ldots,C'_{L-1})=(C_1,\ldots,C_L)$ and the $m$-th Clifford~$C'_m=D$, we conclude that $P_j$ (for $j\not\in\cJ$) is a function of $(C_1,\ldots,C_L)$ only. Combining this with~\eqref{eq:Qpjjcjonel}, we conclude that
 \begin{align}
     Q=Q(C_1,\ldots,C_L)
     \end{align}
     is a function of~$(C_1,\ldots,C_L)$ only as required in the definition of~$\PCliff{Q}_L$.
     
     With~Lemma~\ref{lem:technicalcommutinglemma}, as well as expressions~\eqref{eq:explicitexpressionpteleportn} and~\eqref{eq:poutPCliffQL}
     for the probabilities $p^{\teleport_n}(P_0,\ldots,P_{n-1}\mid 
 C_0',\ldots,C_{n-1}')$ and~$p^{\PCliff{Q}_L}(P\mid C_1,\ldots, C_{L},D)$, we conclude that      $\cB_{L,m,\cJ}^\mathsf{C}$ indeed produces a valid solution for the problem~$\psim{\PCliff{Q}_L}$, as claimed.
 \end{proof}

In the next section, we will use a deterministic oracle for $\psim{\PCliff{Q}_L}$ to learn stabilizers of states $I\otimes (C_L\cdots C_1)\ket{\Phi}$.

\subsection{From $\psim{\PCliff{Q}_L}$ to 
learning stabilizers of $(I\otimes (C_L\cdots C_1))\ket{\Phi}$ \label{sec:possibilistictomographyredc}}
By definition, the output~$P$ of the circuit~$\PCliff{Q}_L$ (see Fig.~\ref{fig:PCliff})
can be interpreted as outputting the result of measuring the state
\begin{align}
    (I\otimes (C_L\cdots C_1))\ket{\Phi}\ \label{eq:interestingstateproduct}
\end{align}
in a basis of maximally entangled states obtained by rotating the Bell basis by $I\otimes Q^\dagger D^\dagger$. Here the effect of the Pauli~$Q=Q(C_1,\ldots,C_L)$ is simply to possibly flip the measurement result (i.e., it can equivalently be achieved by classical post-processing). 
By choosing different inputs~$(C_1,\ldots,C_L,D)$
with $(C_1,\ldots,C_L)$ fixed and varying over~$D\in\cC$, access to a deterministic oracle~$\cO$ for the problem $\psim{\PCliff{Q}_L}$ can thus be used by an algorithm to  simulate the 
effect of measuring the state~\eqref{eq:interestingstateproduct} in different bases of stabilizer states. A protocol using such measurement results to determine the state~\eqref{eq:interestingstateproduct} (or learn some information about it) therefore allows to determine (or learn some information about) the product~$C_L\cdots C_1$. As we will discus in Section~\ref{sec:quantumadvantage}, such information can in turn be used to solve the word problem for the Clifford group on a single qubit.

In this section, we give a protocol for  
learning the stabilizers of the state~\eqref{eq:interestingstateproduct}
using a deterministic oracle~$\cO$ for ~$\psim{\PCliff{Q}_L}$. More precisely, the corresponding algorithm
learns two distinct stabilizer generators of the state~\eqref{eq:interestingstateproduct} up to signs, see Theorem~\ref{thm:mainreductiongrier} below.  In other words, it finds two distinct (non-identity) Pauli operators that have expectation~$1$ or~$-1$ on this state. Allowing for this  remaining sign ambiguity has the key advantage that the function~$Q(C_1,\ldots,C_L)$ does not need to be evaluated in the protocol.

The protocol borrows heavily from the pioneering work~\cite{grier2020interactive}. Indeed,  Theorem~\ref{thm:mainreductiongrier} below is obtained by specializing the results of~\cite{grier2020interactive} to the problem at hand.

\subsubsection{On the need for possibilistic state tomography}
We emphasize that the problem of determining stabilizers of the state~\eqref{eq:interestingstateproduct}
using an oracle~$\cO$ for $\psim{\PCliff{Q}_L}$ cannot directly be solved using standard techniques for (stabilizer) quantum state tomography.  This is because of the possibilistic nature of the simulation provided by~$\cO$. Accordingly, we refer to the problem under consideration as {\em possibilistic tomography}.

To clarify the special difficulties associated with possibilistic tomography, consider a setting where several copies of an unknown stabilizer state~$\ket{\Psi}\in (\mathbb{C}^2)^{\otimes 2}$ can be generated/are available. This is the setting of standard tomography. Here a  set~$\{S_1,S_2\}$ of
  generators of the stabilizer group (up to a phase) of~$\ket{\Psi}$
  can efficiently be determined by a sequence of measurements with respect to different Pauli bases, see e.g.,~\cite{gottesmanaaronsonidentifying,montanarolearningstabilizerstates}.
  
  One of the key achievements of~\cite{grier2020interactive} is to extend this stabilizer state tomography to the possibilistic setting, where  the process of measuring a state~$\ket{\Psi}$ in a Pauli basis is replaced by an oracle~$\cO$ simulating the measurement.
  Remarkably, possibilistic simulation is sufficient, i.e., the simulated measurement results produced by~$\cO$ only need to lie in the support of the 
  output distribution defined by Born's rule for measurements. While this notion of  simulation has been referred to as possibilistic in~\cite{wang2021possibilistic}, the authors of~\cite{grier2020interactive} express this by saying that~$\cO$ can be  adversarial.

\subsubsection{Possibilistic state tomography from the magic square}\label{sec:posibilisticstatetomography}
The procedure of~\cite{grier2020interactive} reduces possibilistic stabilizer state tomography to possibilistic circuit simulation. Here we restate their result in a simplified and specialized  form suitable for our purposes:
\begin{theorem}[Adapted from~\cite{grier2020interactive}]\label{thm:mainreductiongrier}
Let $Q:\cC^L\rightarrow\cP$ be arbitrary. 
 Suppose~$\cO$ is a deterministic oracle (i.e., a function)  solving~$\psim{\PCliff{Q}_L}$. 
 There is an algorithm $\cA^{\cO}$ with access to~$\cO$ with the following properties:
 \begin{enumerate}[(i)]
 \item
 $\cA^{\cO}$ takes as input a sequence~$(C_1,\ldots,C_L)\in\cC^L$. 
 \item
 The algorithm~$\cA^{\cO}$ either reports failure or
 outputs a pair $(S_1,S_2)$ of two distinct two-qubit Pauli operators 
 of the form $S_j=P_A^j\otimes P_B^j$ with $P_A^j,P_B^j\in \cP\setminus \{I\}$, for $j=1,2$. The latter (successful) event happens with probability at least $2/3$. 
 \item\label{it:thirdpropertybelow}
 When the algorithm succeeds, each operator~$S_j$ for $j=1,2$ has expectation value~$\{\pm 1\}$ in the state~$(I\otimes (C_L\cdots C_1))\ket{\Phi}$.
 \item
  The algorithm~$\cA^{\cO}$ makes~$O(1)$ queries to~$\cO$.
  \item
  The algorithm~$\cA^{\cO}$ is realizable by an $\NC^0$-circuit which uses uniformly random bits and has oracle access to $\cO$.
 \end{enumerate} 
 \end{theorem}
 In the remainder of this section, we outline the proof of Theorem~\ref{thm:mainreductiongrier}. We first show  that the oracle~$\cO$ can be used to learn a non-stabilizer~$M$ of the state~$(I\otimes (C_L\cdots C_1))\ket{\Phi}$. Here a \emph{non-stabilizer} of a stabilizer state~$\ket{\Psi}$ is a Pauli operator~$M$ such that $\bra{\Psi} M\ket{\Psi}=0$.
  \begin{lemma}\label{lem:magicsquarelearning}
     Let~$\cO$ be a deterministic oracle for~$\psim{\PCliff{Q}_L}$. Then there is an $\NC^0$-algorithm~$\cN^{\cO}$
    which takes as input an $L$-tuple $(C_1,\ldots,C_L)\in\cC^L$,
\begin{enumerate}[(i)]
    \item makes $6$~queries to~$\cO$, and 
    \item outputs a non-stabilizer~$M$ of the state~$(I\otimes (C_L\cdots C_1))\ket{\Phi}$. \label{it:secondclaimnonstabilizerm}
    \end{enumerate}
    \end{lemma}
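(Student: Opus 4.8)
The plan is to exploit the well-known fact that the two-qubit Pauli group, together with the specific structure of the magic square game, furnishes nine observables arranged in a $3\times 3$ grid so that rows and columns multiply to $\pm I$ with a parity obstruction. Concretely, I would set up the standard Mermin--Peres magic square whose nine entries are two-qubit Pauli operators $M_{ab}$ (with $a,b\in\{1,2,3\}$), each row and column consisting of mutually commuting operators whose product is $+I$ except for one line whose product is $-I$. The crucial point, used already in~\cite{grier2020interactive}, is that for any stabilizer state~$\Psi$ of two qubits, at least one of these nine Pauli operators must be a non-stabilizer (i.e.\ have expectation~$0$ in~$\Psi$): if all nine had expectation $\pm 1$, then the six line-products would force $(+1)^{\text{five lines}}\cdot(-1)^{\text{one line}}=+1$, a contradiction. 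So some $M_{ab}$ is a non-stabilizer of the state~\eqref{eq:interestingstateproduct}; the task is to \emph{find} it using~$\cO$.

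Next I would describe how each magic-square observable~$M_{ab}$ is ``measured'' using the oracle. Since $M_{ab}$ is a two-qubit Pauli, there is a single-qubit Clifford $D_{ab}$ (together with the fixed correction function~$Q$) such that the Bell-basis rotated by $I_A\otimes Q^\dagger D_{ab}^\dagger$ is (up to signs absorbed into classical post-processing) an eigenbasis of~$M_{ab}$; equivalently, a call $\cO(C_1,\ldots,C_L,D_{ab})$ returns an output Pauli~$P$ which, decoded appropriately, yields a $\pm 1$ outcome consistent with measuring~\eqref{eq:interestingstateproduct} along~$M_{ab}$. Here I should be careful: not every line of the magic square can be simultaneously diagonalized by a \emph{single-qubit} Clifford~$D$ acting on the $B$ system only — but one checks that each of the three observables appearing together in a row (or column) commute and can be jointly measured via one suitable~$D$, and the magic square can be chosen so that all nine requisite $D$'s exist in $\cC$. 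For each of the three rows, one query to~$\cO$ (with the associated~$D$) produces a jointly consistent assignment of $\pm 1$ values to the three entries of that row; similarly one query per column. That is $3+3=6$ queries total, matching claim~(i).

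Then I would combine the outcomes. Label by $r_{ab}\in\{+1,-1\}$ the value obtained for $M_{ab}$ when its row was queried, and $c_{ab}$ the value when its column was queried. Because the oracle only guarantees an outcome in the \emph{support} of Born's rule, a genuine eigenvalue $\pm1$ of a stabilizer of the state is returned deterministically (the distribution is a point mass), whereas for a non-stabilizer the outcome is arbitrary (both $\pm1$ are in the support). Hence if $M_{ab}$ were a stabilizer with definite eigenvalue, we would necessarily have $r_{ab}=c_{ab}$. Conversely, the magic-square parity argument shows the row-outcomes and column-outcomes cannot \emph{globally} be made consistent; therefore there must exist some cell $(a,b)$ with $r_{ab}\neq c_{ab}$, and for that cell $M_{ab}$ cannot be a stabilizer — it must be a non-stabilizer~$M$ of~\eqref{eq:interestingstateproduct}. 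The algorithm $\cN^{\cO}$ outputs this~$M$. Finally, all post-processing — decoding the at-most-constant-size oracle outputs into $\pm1$ values, comparing the (constant number of) pairs $(r_{ab},c_{ab})$, and selecting a cell where they disagree — involves only a constant number of bits and constant-depth Boolean logic over the $O(1)$-bit oracle answers and the encoding of the $D_{ab}$'s (which are fixed constants), so $\cN^{\cO}$ is an $\NC^0$-circuit making $6$ oracle queries, establishing claim~(ii) and the $\NC^0$ bound.

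The main obstacle I anticipate is the bookkeeping around signs and the role of the correction function~$Q$: one must verify that $Q(C_1,\ldots,C_L)$ enters only as a classically-computable relabeling of the oracle's $\pm1$ output (so that it need not actually be evaluated), and that the rotation realizing each magic-square line is achievable by a \emph{single-qubit} Clifford~$D$ acting on the $B$-register alone — the magic square observables are two-qubit Paulis, and a single-qubit~$D$ on~$B$ conjugates $I_B\otimes$(Bell basis) in a restricted way, so one needs to pick the particular presentation of the magic square whose nine observables are exactly of the form realizable this way (using that $(I_A\otimes C_L\cdots C_1)\Phi$ already carries the $A$-side entanglement). Getting this compatibility right — essentially re-deriving the relevant special case of the construction in~\cite{grier2020interactive} — is the technical heart; once it is in place, the parity/contradiction argument and the $\NC^0$ accounting are routine.
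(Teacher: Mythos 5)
Your proposal is essentially the paper's proof: six oracle calls (one per row and one per column of the magic square), comparison of the two outcomes in each cell, and the observation that a cell where row- and column-outcomes disagree yields a non-stabilizer. Two details to tighten: for the correction~$Q$, the clean argument is not that it acts as a ``classically-computable relabeling'' — it is not classically computable in $\NC^0$ — but rather that a non-stabilizer of $(I_A\otimes Q)\Psi$ for Pauli~$Q$ is automatically also a non-stabilizer of~$\Psi$, so~$Q$ never needs to be evaluated. The technical heart you correctly flag (that a one-sided single-qubit Clifford~$D$ on the $B$-register suffices) is resolved in the paper via the identity $(I\otimes L)\ket{\Phi}=(L^T\otimes I)\ket{\Phi}$, which converts $(I_A\otimes D^\dagger)\cB$ into a two-sided rotated Bell basis $(D_A\otimes D_B)\cB$, together with the check that each row and column of the magic square contains a pair $(P_1\otimes P_2,Q_1\otimes Q_2)$ of observables with $P_j\neq Q_j$ entrywise, hence expressible as $(X_A'\otimes X_B',Z_A'\otimes Z_B')$ for suitable $D_A,D_B$.
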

  The proof of Lemma~\ref{lem:magicsquarelearning}  exploits  the fact that measurements of the observables of the magic square illustrated in Fig.~\ref{fig:magicsquare} cannot be explained by a non-contextual model. 
    \begin{figure}
        \centering
        \begin{tabular}{ c c c }
	        $X\otimes X$ & $\mspace{3mu}Y\otimes Y$ & $\mspace{2mu}Z\otimes Z$\\
        	$Y\otimes Z$ & $\mspace{5mu}Z\otimes X$ & $X\otimes Y$\\
        	$Z\otimes Y$ & $X\otimes Z$ & $\mspace{4mu}Y\otimes X$
        \end{tabular}
        \caption{The matrix~$(P_{j,k})_{j,k=1}^3$ of Pauli observables forming the magic square.}
        \label{fig:magicsquare}
    \end{figure}
  
 \begin{proof}
 Each row~$j\in [3]$ of the magic square in Fig.~\ref{fig:magicsquare} consists of three (dependent) commuting Pauli observables~$(P_{j,1},P_{j,2},P_{j,3})$ which can be jointly measured, i.e., give rise to a measurement with outcomes~$(m_{j,1},m_{j,2},m_{j,3})\in \{\pm 1\}^3$ corresponding to the eigenvalues of the corresponding operators. Similarly, each column~$k\in [3]$ of the magic square defines a measurement with three (dependent) measurement outcomes~$(m'_{1,k},m'_{2,k},m'_{3,k})$
 associated with observables~$(P_{1,k},P_{2,k},P_{3,k})$. In total, the magic square thus defines six measurements, one for each row and column.
 
 Suppose we have six copies of a two-qubit stabilizer state~$\ket{\Psi}$. Using three copies to perform the measurements associated with each row gives a matrix~$m=(m_{j,k})\in\mathsf{Mat}_{3\times 3}(\{\pm 1\})$ of measurement outcomes. Using the remaining three copies to perform the measurements associated which each column similarly gives a matrix~$m'=(m'_{j,k})\in\mathsf{Mat}_{3\times 3}(\{\pm 1\})$. Because of the ``magic square relations"
 \begin{align}
    P_{j,1}P_{j,2}P_{j,3}\ &=-I\qquad\textrm{ for every }\qquad j\in [3]\\
    P_{1,k}P_{2,k}P_{3,k}&= \ I \mspace{8mu}\qquad\textrm{ for every }\qquad k\in [3]
 \end{align}
 satisfied by the observables~$\{P_{j,k}\}_{j,k}$ in the magic square, we have
 \begin{align}
    m_{j,1}m_{j,2}m_{j,3}\ &=-1\qquad\textrm{ for every }\qquad j\in [3]\label{eq:magicsquareonm}\\
     m'_{1,k}m'_{2,k}m'_{3,k}&=1\mspace{14mu}\qquad\textrm{ for every }\qquad k\in [3]\ .\label{eq:magicsquaretwom}
 \end{align}
 Because a magic square, i.e., a $\{\pm 1\}$-valued matrix satisfying both~\eqref{eq:magicsquareonm} and~\eqref{eq:magicsquaretwom} does not exist, there must be a pair~$(j,k)\in [3]^2$ such that $m_{j,k}\neq m'_{j,k}$. The corresponding Pauli operator~$P_{j,k}$ thus leads to a measurement result~$+1$ and a measurement result~$-1$, depending on the context. This means that~$P_{j,k}$ must be a non-stabilizer of~$\ket{\Psi}$. In summary, a non-stabilizer~$M$ of a two-qubit stabilizer state~$\ket{\Psi}$ can be obtained with certainty by performing Pauli measurements on six copies of~$\ket{\Psi}$.
 
 We can apply this idea to construct  an algorithm~$\cN^{\cO}$
 as specified in the lemma using an oracle~$\cO$ for~$\psim{\PCliff{Q}_L}$. Recall that on  input~$(C_1,\ldots,C_L,D)\in\cC^{L+1}$, the oracle~$\cO$
produces a Pauli operator~$P$ occurring with non-zero probability~$p^{\PCliff{Q}_L}(P\ |\ C_1,\ldots,C_L,D)$
in the output distribution of~$\PCliff{Q}_L$. We note that this output probability can be written as
\begin{align}
     p^{\PCliff{Q}_L}(P\ |\ C_1,\ldots,C_L,D)&=
     \left|\bra{\Phi_P(D)}(I\otimes (QC_L\cdots C_1))\ket{\Phi}
     \right|^2\ ,
\end{align}
where $Q=Q(C_1,\ldots,C_L)$ is a function of $C_1,\ldots,C_L$ and of no other arguments, and where 
\begin{align}
    \ket{\Phi_P(D)}:=(I\otimes D^\dagger)(P\otimes I )\ket{\Phi}\qquad\textrm{ for }\qquad P\in\cP
\end{align}
defines a rotated Bell basis~$(I\otimes D^\dagger)\cB:=\{\ket{\Phi_P(D)}\}_{P\in\cP}$ (with $\cB:=\{\ket{\Phi_P(I)}\}_{P\in\cP}$ being the standard Bell basis),  see~expression~\eqref{eq:pcliffqoutputdistributionm}. In other words, this output distribution is identical to that obtained  when measuring the state
\begin{align}
    I\otimes (Q C_L\cdots C_1)\ket{\Phi}\label{eq:statetobemeasuredrcl}
    \end{align}
    using the von Neumann measurement with 
    the orthonormal basis~$(I\otimes D^\dagger)\cB$.
    
    We argue below that by suitably choosing~$D$ and applying the von Neumann measurement with basis~$(I\otimes D^\dagger)\cB$, we can realize any row- and every column-measurement associated with the magic square. This means that query access to~$\cO$ with~$(C_1,\ldots,C_L)$ fixed and $D\in\cC$ arbitrary 
    immediately translates to the (possibilistic) simulation of
    magic square-measurements in the state~\eqref{eq:statetobemeasuredrcl}. 
    In particular,    six queries to~$\cO$
    of the form~$(C_1,\ldots,C_L,D)$ with $D\in \cC$ 
    reveal a non-stabilizer~$M$ of the state~\eqref{eq:statetobemeasuredrcl}. The claim~\eqref{it:secondclaimnonstabilizerm} follows from this using the following simple fact: if~$M$ is a non-stabilizer of a stabilizer state~$(I\otimes Q)\ket{\Psi}$ and $Q\in\cP$ is arbitrary, then $M$ is also a non-stabilizer of the state~$\ket{\Psi}$.
    
    It remains to argue that the measurement of any row or column of operators in the magic square is equivalent to a von Neumann measurement in a rotated Bell basis~$(I\otimes D^\dagger)\cB$ for a suitably chosen~$D\in \cC$. To this end, observe that for any two single-qubit Cliffords~$D_A,D_B\in\cC$ and any Pauli~$P'\in\cP$, we have 
    the identity
    \begin{align}
(D_A\otimes D_B)(P'\otimes I)\ket{\Phi}&=(P\otimes I)(D_A\otimes D_B)\ket{\Phi}
\quad\, \textrm{where }\ P:= D_A P' D_A^\dagger \label{eq:prenamingxm}\\
          &= (P\otimes I)(I\otimes D_BD_A^T )\ket{\Phi}\\
          &= (I\otimes D^\dagger )(P\otimes I)\ket{\Phi} \qquad \text{ where }\  D:=(D_BD_A^T)^\dagger\label{eq:PhiPDexpression}\\
          &=\ket{\Phi_P(D)} \ .
          \end{align}
          This is because  the Bell state~$\ket{\Phi}$ satisfies $(I\otimes L)\ket{\Phi}=(L^T\otimes I)\ket{\Phi}$ for any  operator~$L$ on~$\mathbb{C}^2$. 
Definitions of $P,D$ from Eq.~\eqref{eq:prenamingxm} and~\eqref{eq:PhiPDexpression} imply that---up to a relabeling of the measurement results as expressed by~\eqref{eq:prenamingxm}---the von Neumann measurement in the rotated Bell basis~$(I\otimes D^\dagger)\cB$ is equivalent to a von Neumann measurement in the basis~$(D_A\otimes D_B)\cB$  consisting of the rotated Bell states
\begin{align}
(D_A\otimes D_B)(P \otimes I)\ket{\Phi}\qquad\textrm{ for }\qquad P\in\cP\ .
    \end{align}
    Elements of the Bell basis~$\cB$ are stabilizer states having stabilizer generators of the form
    \begin{align}
        (\pm X_A\otimes X_B, \pm  Z_A\otimes Z_B)\ .
    \end{align}
    It follows that elements of the rotated Bell basis~$(D_A\otimes D_B)\cB$ have stabilizer generators of the form
    \begin{align}
    (\pm X_A'\otimes X'_B, \pm Z_A'\otimes Z_B')\qquad~\textrm{ where }~\qquad 
    \begin{cases}
    X_A'\hspace{-0.25cm}&=D_AX_AD_A^\dagger\\
    Z_A'\hspace{-0.25cm}&=D_AZ_AD_A^\dagger\\
    X_B'\hspace{-0.25cm}&=D_BX_BD_B^\dagger\\
    Z_B'\hspace{-0.25cm}&=D_BZ_BD_B^\dagger\ .
    \end{cases}\label{eq:xazaxbzb}  
    \end{align}
A von Neumann measurement in the rotated Bell basis~$(D_A\otimes D_B)\cB$---which, as shown above, is equivalent to a measurement in the basis~$(I\otimes D^\dagger)\cB$---reveals the basis state, that is, the pair of eigenvalues of~$X_A'\otimes X'_B$ and $Z_A'\otimes Z_B'$, respectively. 

We note that any pair $(P_1\otimes P_2,Q_1\otimes Q_2)$ of Pauli observables with $P_1,P_2,Q_1,Q_2\in\{X,Y,Z\}$ and $P_j\neq Q_j$ for $j=1,2$ can be written in the form~\eqref{eq:xazaxbzb} for suitably chosen Clifford operators~$D_A,D_B$. It is straightforward to check that every row and column  of the magic square in Fig.~\ref{fig:magicsquare} contains a pair~$(X_A'\otimes X'_B,Z_A'\otimes Z_B')$ of two-qubit operators satisfying this condition (the third operator is 
equal to~$\pm (X_A'\otimes X'_B)(Z_A'\otimes Z_B')$ and its eigenvalue can be obtained from the measurement results of the other two operators). Thus each row or column of the magic square can be measured by using a von Neumann measurement in a rotated Bell basis of the form~$(I\otimes D^\dagger)\cB$, as claimed.

The fact that the algorithm~$\cN^{\cO}$ is realizable by an $\NC^0$-circuit is obvious since it only manipulates a constant number of bits.
 \end{proof}

The algorithm~$\cN^{\cO}$ of Lemma~\ref{lem:magicsquarelearning}
only provides a single non-stabilizer. However, using a randomization procedure based on Kilian randomization~\cite{Kilian,grier2020interactive} and using~$\cN^{\cO}$ as a subroutine, it is possible to  construct an algorithm which produces a uniformly random non-stabilizer of~$(I\otimes C_L\cdots C_1)\ket{\Phi}$. Let us denote ${\cN^{\cO}}$ by $\cR$.
\begin{theorem}[Kilian-randomization-based algorithm]\label{thm:kilian}
There is a randomized $\NC^0$-algorithm $\cN_{\textrm{uniform}}^{\cR}$, which uses polynomially many uniformly random bits. Given an input $(C_1,\ldots,C_L)\in\cC^{L}$ the algorithm $\cN_{\textrm{uniform}}^{\cR}$
\begin{enumerate}[(i)]
\item
makes $1$ query to~$\cR$, and
\item
outputs a uniformly random non-stabilizer of $(I\otimes C_L\cdots C_1)\ket{\Phi}$.
\end{enumerate}
\end{theorem}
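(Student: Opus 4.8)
The plan is to wrap the deterministic algorithm $\cN^\cO$ from Lemma~\ref{lem:magicsquarelearning} in a Kilian-type randomization layer so that the single non-stabilizer it returns becomes uniformly distributed. The key observation is that the object being probed — the state $(I_A\otimes C_L\cdots C_1)\Phi$ — has a large symmetry group acting on its set of non-stabilizers, and that this symmetry can be realized \emph{at the level of the inputs} to the Pauli-corrected Clifford circuit, without ever touching the oracle internals. Concretely, conjugating the product $C_L\cdots C_1$ by a single-qubit Clifford $E$ corresponds to the substitution $C_1\mapsto E C_1$, $C_L\mapsto C_L E^{-1}$ (leaving $C_2,\ldots,C_{L-1}$ untouched), which maps the state $(I_A\otimes C_L\cdots C_1)\Phi$ to $(I_A\otimes E(C_L\cdots C_1)E^{-1})\Phi$ (up to the transpose bookkeeping already used in the proof of Lemma~\ref{lem:magicsquarelearning}). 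Since conjugation by $E$ permutes the set of non-identity single-qubit Paulis — and more relevantly permutes the non-stabilizers of a two-qubit stabilizer state — applying $\cN^\cO$ (here renamed $\cR$) to the modified input and then un-conjugating the output Pauli $M\mapsto E^{-1}ME$ produces a non-stabilizer of the original state whose distribution is the push-forward, under the Clifford-conjugation action, of the (deterministic) output of $\cR$.

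The steps, in order, are: (1) sample $E\in\cC$ uniformly at random using $O(\log|\cC|)=O(1)$ random bits, padded to polynomially many for later convenience; (2) form the randomized instance $(EC_1, C_2,\ldots,C_{L-1}, C_L E^{-1})\in\cC^L$; (3) call $\cR$ once on this instance to obtain a non-stabilizer $M'$ of $(I_A\otimes EC_L\cdots C_1 E^{-1})\Phi$ — possibly absorbing the transpose into the choice of which tensor factor $E$ acts on, exactly as in Eqs.~\eqref{eq:PhiPDexpression}--\eqref{eq:prenamingxm}; (4) output $M:=$ (the appropriate conjugate of $M'$ by $E$), which is then a non-stabilizer of $(I_A\otimes C_L\cdots C_1)\Phi$. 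For uniformity one argues that the two-qubit Clifford-conjugation action $S\mapsto (I\otimes E)S(I\otimes E)^\dagger$ (or the two-sided version, whichever matches the bookkeeping) acts transitively on the set of non-stabilizers of any fixed two-qubit stabilizer state; transitivity plus averaging over uniform $E$ forces the output distribution to be uniform, regardless of which particular non-stabilizer the deterministic core $\cR$ happened to return on the conjugated instance. Each of steps (1)--(4) manipulates only a constant number of bits (Clifford and Pauli labels), so the whole thing is an $\NC^0$ circuit making one oracle call, establishing both claimed properties.

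The main obstacle I expect is \textbf{verifying transitivity of the relevant conjugation action on non-stabilizers}, together with getting the transpose/side-of-tensor-factor bookkeeping right so that conjugation of $C_L\cdots C_1$ by $E$ really does translate into a legal modification of the $\PCliff{Q}_L$ input tuple \emph{with $Q$ held fixed}. For the transitivity point: the set of non-stabilizers of a two-qubit stabilizer state is an orbit under the full two-qubit Clifford group, but here we only have single-qubit Cliffords acting on one factor, so one must check that even this restricted action is transitive on that orbit — this should follow from the explicit structure of the stabilizer group of $(I_A\otimes U)\Phi$ (it is generated by $\{X_A\otimes U X_B U^\dagger,\ Z_A\otimes U Z_B U^\dagger\}$ up to signs), but it needs a short verification, possibly appealing to the fact that $\cC$ acts as the full symmetric group $S_3$ on $\{X,Y,Z\}$. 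If the single-sided action turns out \emph{not} to be transitive, the fix is to also randomize by a second independent Clifford on the other factor (or equivalently by an extra Pauli layer, harmlessly since non-stabilizers are preserved under the $(I\otimes Q)$ shift already noted in the proof of Lemma~\ref{lem:magicsquarelearning}), which restores transitivity while still costing only one oracle call and $O(1)$ bit-manipulations.
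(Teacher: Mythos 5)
Your proposal does not go through, and the gap is not the transitivity point you flag but something more fundamental: \textbf{a single random Clifford $E$ applied only at the endpoints is visible to the deterministic oracle, and a visible randomization gives no uniformity guarantee.}

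Given a fixed instance $(C_1,\ldots,C_L)$, the modified input $(EC_1,C_2,\ldots,C_{L-1},C_LE^{-1})$ uniquely determines $E$ (the middle coordinates are untouched, and $E\mapsto EC_1$ is a bijection on $\cC$). Since $\cR$ is a deterministic function of its input, its output $M'$ is therefore an arbitrary function $M'(E)$, constrained only to be a valid non-stabilizer of the conjugated state. An adversarial $\cR$ can then exactly invert your post-processing: fix any one non-stabilizer $N_0$ of $(I_A\otimes C_L\cdots C_1)\Phi$; for each $E$, the map ``un-conjugate by $E$'' is a bijection from non-stabilizers of the conjugated state to non-stabilizers of the original state, so the adversary sets $M'(E)$ to be the unique preimage of $N_0$ under this bijection. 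The algorithm then always outputs $N_0$, a point mass -- maximally far from uniform -- no matter how nice the group action is. This is the standard ``attack'' that motivates Kilian randomization in the first place, and transitivity alone cannot save the argument: your averaging step implicitly assumes $\cR$'s output on the conjugated instance does not depend on $E$, but it does.

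The paper's proof (Appendix A) is architecturally different precisely to block this. First, it applies \emph{full} Kilian randomization, inserting independent random Cliffords $D_1,\ldots,D_L$ so the oracle sees $(D_1C_1,\,D_2C_2D_1^{-1},\,\ldots,\,D_LC_LD_{L-1}^{-1})$ -- a sequence uniformly distributed among all sequences with the given product. This hides $D_L$ from the oracle up to the product. Second, and crucially for uniformity, it multiplies in an additional random Clifford $V$ with $V\Phi=\Phi$ (Algorithm~\ref{alg:randomizedsecond}); $V$ only perturbs the product by a stabilizing element, so it remains hidden behind the Kilian noise, yet after un-conjugating by $D_L$ it surfaces as a conjugation $V\widetilde M V^\dagger$ of a $V$-independent non-stabilizer $\widetilde M$. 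Lemma~\ref{lem:nonstabilizers} (a random $V$ stabilizing $\Psi$ conjugates any fixed non-stabilizer of $\Psi$ to a uniform one) then delivers the uniformity. So the paper uses two randomization layers with different purposes: the Kilian layer enforces that the oracle's behavior depends only on the product, and the $V$-layer converts an arbitrary (possibly adversarial) non-stabilizer into a uniform one. Your proposal conflates these two roles into one visible conjugation, which accomplishes neither.

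A secondary, purely algebraic slip: the substitution $C_1\mapsto EC_1$, $C_L\mapsto C_LE^{-1}$ yields the product $C_LE^{-1}\,C_{L-1}\cdots C_2\,E C_1$, which is not $E(C_L\cdots C_1)E^{-1}$ unless $E$ commutes with $C_{L-1}\cdots C_2$. To conjugate the product you would need the $E$-factors on the \emph{outer} sides, i.e.\ $C_1\mapsto C_1E^{-1}$ and $C_L\mapsto EC_L$. This would have to be fixed in any repaired version of the argument, but it is minor compared to the visibility issue above.
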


We defer the description of~$\cN_{\textrm{uniform}}^{\cR}$ and the proof of Theorem~\ref{thm:kilian} to Appendix~\ref{app:mainreductiongrier}.

An algorithm 
such as the one described in Theorem~\ref{thm:kilian} which provides a uniformly chosen non-stabilizer of a stabilizer state~$\ket{\Psi}$  can be run repeatedly to (eventually) learn the stabilizer group of~$\ket{\Psi}$. For a sufficiently large but constant number of calls to such an algorithm,
a pair~$(S_1,S_2)$ of Pauli operators with the desired property can be computed with probability at least $2/3$.

Theorem~\ref{thm:mainreductiongrier}  follows from these arguments, see Theorem~\ref{thm:learningstabilizers} in Appendix~\ref{app:mainreductiongrier} for details.

\subsection{Quantum advantage against $\NC^0$ from $\psim{\teleport}$\label{sec:quantumadvantage}}

In this section, we finish the argument for our  result, establishing a quantum advantage for the $\psim{\teleport}$ problem against (non-uniform) $\NC^0$.

In Section~\ref{sec:simteleptosimcliffpq}, we showed that a possibilistic (non-uniform) $\NC^0$ simulator for $\teleport$ circuits implies a possibilistic $\NC^0/\poly$ simulator for $\PCliff{Q}$ for some fixed function $Q$:
\begin{align}
    \left(\psim{\teleport}\in \NC^0/\poly\right)\ \quad \Rightarrow \ \quad \left(\exists Q\colon  \psim{\PCliff{Q}}\in \NC^0/\poly\right)\,. \label{eq:simtelepnc0pimpliessimcliffpqnc0p:repeat}
\end{align}
We will use the possibilistic simulation for the $\PCliff{Q}$ problem together with possibilistic state tomography, Section~\ref{sec:possibilistictomographyredc} (specialized from \cite{grier2020interactive}), to solve a certain word problem for a fixed group with success probability at least $2/3$. The word problem we consider is as hard as $\parityproblem$.

This will lead to the following implication:
\begin{align}
\solves{\cO}{\psim{\PCliff{Q}}} \  \Rightarrow
\ \left((\NC^0/\rpoly)^{\cO} \text{ solves } \parityproblem \text{ with prob. } \ge 2/3\right)\,.
\end{align}
Combining it with Eq.~\eqref{eq:simtelepnc0pimpliessimcliffpqnc0p:repeat} and using a standard derandomization argument, we obtain
\begin{align}
    \left(\psim{\teleport}\in \NC^0/\poly\right) \qquad \Rightarrow \qquad 
    \left(\parityproblem\in \NC^0/\poly\right)\ .
\end{align}
This contradicts that $\parityproblem\not\in \NC^0/\poly$, since the output of an $\NC^0/\poly$ circuit depends on a constant number of bits of an input instance. Therefore, we obtain $\psim{\teleport}\not\in \NC^0/\poly$.

We relate $\parityproblem$ and the word problem for a fixed group in Section~\ref{sec:parityandwp} and use it to complete the proof of quantum advantage from $\psim{\teleport}$ in Section~\ref{sec:psimteleportnotinnc0}.

\subsubsection{$\parityproblem$ as a word problem for the Clifford group}\label{sec:parityandwp}
The word problem for groups is defined as follows. Let $G$ be a group with identity element~$\idG\in G$. The word problem for the group $G$ is to decide for an input sequence of group elements $(g_1,\ldots,g_L)\in G^{L}$ whether the product $g_L\cdots g_1$ equals the identity element $\idG$. In our scenario we use the promise variant which promises that the product is either identity or some other chosen element $g$. We denote this problem $\WP(G,g)$. A sequence of group elements $(g_1,\ldots,g_L)$ is a yes-instance if $g_L\cdots g_1=\idG$ and a no-instance if $g_L\cdots g_1=g$.

Consider the word problem $\WP(\cC,H)$, where $\cC$ is the group of one-qubit Clifford unitaries with the standard product operation and $H$ the Hadamard unitary. We show that $\parityproblem$ reduces to this problem. Let $x\in\{0,1\}^L$ be a bit string. We will show how to use $\WP(\cC,H)$ with the input length $L$ to compute $\parityproblem$ of the input $x$. Set
\begin{align}
    g_i := H^{x_i} \qquad\text{ for all }\qquad i\in[n]  \ .\label{eq:parityHadamardencoding}
\end{align}
Since $H^2=I$, the product
\begin{align}
    g_L\cdots g_1=H^{\parityproblem(x)}=\begin{cases}
        I & \text{ if } x \text{ has even parity, and}\\
        H & \text{ if } x \text{ has odd parity}\ .\\
    \end{cases}
\end{align}
Hence we can use $\WP(\cC,H)$ to solve $\parityproblem$.

We note that this discussion may appear overly detailed to simply compute the parity. However, one may use other (Clifford) groups and corresponding word problems to address further computational tasks, along the lines of~\cite{grier2020interactive}, which we expand upon in Section~\ref{sec:WordProblems}. In particular, using for example two qubits and corresponding gate-teleportation circuits, one can construct circuits solving word problems associated with~$S_6$, related to Barrington's theorem~\cite{barrington89}. 
However, because of the use of lightcone-type arguments, naive applications of our reasoning cannot provide quantum advantage beyond $\NC^0$.

\subsubsection{$\psim{\teleport}$ cannot be in $\NC^0$}\label{sec:psimteleportnotinnc0}

In the following lemma we use a deterministic oracle for the $\psim{\PCliff{Q}}$ problem together with the possibilistic state tomography from Section~\ref{sec:possibilistictomographyredc} to solve $\parityproblem$.

\begin{lemma}\label{lem:parityfrompossibilisticlearning}
 Let $\cO$ be a deterministic oracle for the $\psim{\PCliff{Q}}$ problem. There is an $\NC^0$-circuit with oracle access to $\cO$ which uses uniformly random bits, makes $O(1)$ oracle queries and solves the $\parityproblem$ problem with success probability at least $2/3$.
\end{lemma}
\begin{proof} 
Let $x\in\{0,1\}^L$ be an input instance of $\parityproblem_L$. Set, as in Eq.~\eqref{eq:parityHadamardencoding}, $(C_1,\ldots, C_L)$ where $C_i:=H^{x_i}$. Again $C_L\cdots C_1=H^{\parityproblem(x)}$.

We will use the learning procedure with the oracle for the $\psim{\PCliff{Q}_L}$ problem established in 
Theorem~\ref{thm:mainreductiongrier} to learn the stabilizer group (up to the sign factor) of the state
\begin{align}
    (I\otimes C_L\cdots C_1)\ket{\Phi}=\left(I\otimes H^{\parityproblem(x)}\right)\ket{\Phi} \ .
\end{align}
This theorem tells us that there is a randomized $\NC^0$ algorithm that makes $O(1)$ queries to the oracle $\cO$ and  either returns a pair $(S_1,S_2)\in (\cP\otimes \cP)^2$ of two distinct (non-identity) Pauli operators or reports failure. If the algorithm returns $(S_1,S_2)$,
which happens with probability at least $2/3$, each $S_i$ has expectation value $\pm 1$ in the state $\left(I\otimes H^{\parityproblem(x)}\right)\ket{\Phi}$. Let us list the possible outcomes for the two options, i.e., when $x$ has even respectively odd parity:
\begin{enumerate}[(i)]
    \item If $x$ has even parity the algorithm learns the stabilizer group of the state $\left(I\otimes H^{\parityproblem(x)}\right)\ket{\Phi}=\ket{\Phi}$. It is straightforward to check that the valid outputs $(S_1,S_2)$ of the learning algorithm are two distinct elements of the set $S_{\text{even}}:=\{X\otimes X, Z\otimes Z, Y\otimes Y\}$.
    \item If $x$ has odd parity the algorithm learns the stabilizers of the state $(I\otimes H)\ket{\Phi}$. The output $(S_1, S_2)$ are two distinct elements of the set $S_{\text{odd}}:=\{X\otimes Z, Z\otimes X, Y\otimes Y\}$.
\end{enumerate}
Since $|S_{\text{even}}\cap S_{\text{odd}}|=1$, we can determine the parity of $x$ from the output $(S_1,S_2)$. Therefore we can compute $\parityproblem(x)$ with probability at least~$2/3$. This completes the proof.
\end{proof}

We use this result to complete our argument and show that there cannot be a (non-uniform) $\NC^0$-circuit for $\psim{\teleport}$. Since the $\teleport_n$ quantum circuit trivially solves $\psim{\teleport_n}$ this completes the proof of the main theorem, i.e., Theorem~\ref{thm:main}.

\begin{theorem}[$\psim{\teleport}\not\in\NC^0/\poly$]\label{thm:teleportquantumadvantage} The possibilistic simulation of the $\teleport$ circuit, i.e., the $\psim{\teleport}$ problem, is not in (non-uniform) $\NC^0$.
\end{theorem}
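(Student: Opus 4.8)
The plan is to derive a contradiction by composing the two reductions proved above. I would assume $\psim{\teleport}\in \NC^0/\poly$, so that for every $n$ there is a (non-uniform) $\NC^0$ circuit $\mathsf{C}\colon \cC^n\to\cP^n$ solving $\psim{\teleport_n}$ on all instances. The first step is to invoke Lemma~\ref{lem:simpcliffQtosimtelepnonsignaling}: it supplies parameters $(L,m,\cJ)$ with $L=\Theta(n)$ and a function $Q\colon\cC^L\to\cP$ such that Algorithm~\ref{alg:simcliffptosimtelepalg}, i.e.\ $\cB^{\mathsf{C}}_{L,m,\cJ}$, is a non-uniform $\NC^0$ circuit making one query to $\mathsf{C}$ and correctly solving $\psim{\PCliff{Q}_L}$. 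Setting $\cO:=\cB^{\mathsf{C}}_{L,m,\cJ}$ gives a \emph{deterministic} oracle for $\psim{\PCliff{Q}_L}$ that is itself computed by a constant-depth, polynomial-size circuit.

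Second, I would feed $\cO$ into Lemma~\ref{lem:parityfrompossibilisticlearning} (which rests on the possibilistic-tomography reduction of Theorem~\ref{thm:mainreductiongrier}, specialized from~\cite{grier2020interactive}): this produces a non-uniform $\AC^0$ circuit $\cA^{\cO}$ that makes $O(1)$ queries to $\cO$ and solves $\parityproblem_L$. Since $\cO$ is realized by an $\NC^0$ circuit of constant depth and polynomial size, and since only a constant number of queries are made, I would inline each oracle gate with its $\NC^0$ subcircuit. The resulting object is a non-uniform circuit of constant depth with unbounded fan-in gates and polynomial size — an $\AC^0$ circuit — computing parity on $L=\Theta(n)$ bits. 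Letting $n$ vary yields an $\AC^0/\poly$ family for $\parityproblem$, contradicting Theorem~\ref{thm:paritynotinaczero}.

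The hard part is not conceptual but bookkeeping: one has to check that the composition $\cA^{\cO}\circ(\text{inlined }\cO)$ genuinely stays in $\AC^0/\poly$ — this follows because the oracle circuit's depth is $O(1)$, the number of queries is $O(1)$, and the oracle circuit's size is $\poly(n)$, so depth, size and fan-in all stay within the required bounds. A second point to keep track of is that the reduction is non-uniform on two counts: the tuple $(L,m,\cJ,Q)$ depends on $\mathsf{C}$, and the derandomization hidden in Theorem~\ref{thm:mainreductiongrier} uses polynomial advice; but since the target lower bound $\parityproblem\not\in\AC^0/\poly$ is itself a non-uniform statement, this causes no difficulty. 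Concluding: the assumption $\psim{\teleport}\in\NC^0/\poly$ is impossible, which is the claim; and because $\teleport_n$ trivially solves $\psim{\teleport_n}$, this also finishes the proof of Theorem~\ref{thm:main}.
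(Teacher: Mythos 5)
Your proof is correct and follows essentially the same route as the paper: assume $\psim{\teleport}\in\NC^0/\poly$, apply Lemma~\ref{lem:simpcliffQtosimtelepnonsignaling} to obtain a non-uniform $\NC^0$ oracle for $\psim{\PCliff{Q}_L}$, feed it into Lemma~\ref{lem:parityfrompossibilisticlearning} to place $\parityproblem$ in $\AC^0/\poly$, and contradict Theorem~\ref{thm:paritynotinaczero}. Your explicit inlining-of-oracle-gates step and the accounting of the two sources of non-uniformity are more careful than the paper's terse ``Since $\cA$ is in $\NC^0/\poly$, we can solve $\parityproblem$ by an $\AC^0/\poly$ algorithm,'' but the substance is identical.
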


\begin{proof}For the sake of contradiction, assume that there is a (non-uniform) $\NC^0$-circuit $\mathsf{C}_n$ for the $\psim{\teleport_n}$ problem.

We will use $\mathsf{C}_n$ in a non-black-box way.
From Lemma~\ref{lem:simpcliffQtosimtelepnonsignaling} we have that there is an $L=\Theta(n)$ and a non-uniform $\NC^0$ algorithm $\cA$ that solves $\psim{\PCliff{Q}_L}$ for some fixed function $Q\colon \cC^L\to \cP$. This algorithm can be constructed from the algorithm $\cB^{\mathsf{C}_n}_{L,m,\cJ}$ (Algorithm~\ref{alg:simcliffptosimtelepalg}) by fixing the parameters $m,\cJ$ that are specific for the $\mathsf{C}_n$ in the non-uniform way, i.e., $\cA$ is in $\NC^0/\poly$.

Lemma~\ref{lem:parityfrompossibilisticlearning} tells us
that there is a randomized (uniform) $\NC^0$-circuit~$\cD$ with oracle access to~$\cA$ which solves $\parityproblem$.
Since $\cA\in\NC^0/\poly$ and $\cD$ makes only $O(1)$~queries to~$\cA$, the algorithm~$\cD^{\cA}$ can be realized by a non-uniform randomized~$\NC^0$-circuit~$\cE$.

Using standard procedures, we derandomize the randomized circuit~$\cE$ as follows. Because~$\cE$ is a randomized $\NC^0$-circuit with a single output bit, its output is determined by
a constant number of bits of the input instance~$x$, as well as a constant number of uniformly random bits. Enumerating all possible assignments of those random bits, computing the result of the circuit~$\cE$ for each assignment, and taking the majority of the results gives the parity of~$x$ with certainty. This procedure is realized by an $\NC^0/\poly$ circuit. This shows that
$\NC^0/\poly$ can solve $\parityproblem$, contradicting the fact that  $\parityproblem\not\in\NC^0/\poly$. In conclusion,  there is no (not even a non-uniform) $\NC^0$-circuit for the $\psim{\teleport}$ problem.
\end{proof}

\subsubsection{Other word problems for Clifford modulo Pauli groups}\label{sec:WordProblems}

Here we argue that the class of problems we could solve if we had an $\NC^0$ simulator for the $\psim{\teleport}$ problem is much 
broader than just computing $\Parity$. To address more general problems, we follow the lines of~\cite{grier2020interactive}.

The main idea relies on the observation that the technique used in Section~\ref{sec:possibilistictomographyredc} to distinguish the state $\ket{\Phi}$ from $(I \otimes H) \ket{\Phi}$ via classical simulation can be used to distinguish $\ket{\Phi}$ from any state $(I\otimes C)\ket{\Phi}$, where $C\in\cC/\cP\setminus \{I\}$ is a non-Pauli Clifford gate. This is because by definition of the Clifford group, any non-Pauli Clifford takes a non-stabilizer state to another, different non-stabilizer state. This allows us to solve the word problem $\WP(\cC/\cP,C)$ for any $C\in\cC/\cP\setminus\{I\}$, i.e., the problem of deciding  whether a series of single-qubit Clifford modulo Pauli gates multiplies to~$I$ or~$C$.

For example, the Clifford gate~$SH$ generates a cyclic group of order~$3$. Thus replacing the Hadamard gate~$H$ by $SH$ in our construction allows us to compute unbounded $\text{MOD}_3$ gates, i.e., to compute whether $|x|\bmod{ 3}$ of the Hamming weight of the input string~$x$ is $0$. Note that $\text{MOD}_3$ cannot be computed by constant-depth circuits with unbounded~$\Parity$-gates~\cite{Razborov1987,smolensky87} and vice versa, hence these problems are incomparable.

Combining the ability of computing~$\text{MOD}_3$ with that of computing~$\Parity$, we conclude that access to an $\NC^0$ simulator for the $\psim{\teleport}$~problem  elevates $\NC^0$-circuits to~$\NC^0[6]$, i.e., $\NC^0$-circuit with $\text{MOD}_6$ gates.

Alternatively, we can arrive at similar conclusions using Barrington's result~\cite{BarringtonWidth3}. The latter states that the
promise word problem for the solvable group~$S_3$ characterizes width-3 permutation branching  programs. The promise word problem for~$S_3$ can be treated in our setting because 
$S_3$ is isomorphic to $\cC/\cP$. The isomorphism is given by 
\begin{center}
\begin{tabular}{lll}
$I\mapsto 1$\ ,&
$H\mapsto (13)$\ ,&
$S\mapsto (12)$\ ,\\
$HS\mapsto (123)$\ ,&
$SHS\mapsto  (23$$)$\ ,&
$(HS)^2\mapsto (132)$\ .
\end{tabular}
\end{center}
\noindent

\paragraph{Two-qubit Clifford gate teleportation circuits}
A similar construction can be obtained with two-qubit (instead of one-qubit) gate teleportation. Notice that this circuit can be implemented on a ring of pairs of qubits. Assuming an $\NC^0$ possibilistic simulator for this problem and following the lightcone argument, we can use it to solve a possibilistic simulation of the intermediate problem depicted in Figure~\ref{fig:2PCliff}. We can ask what problems such a simulator allows us to solve.

\begin{figure}[!ht]
		\centering
		\includegraphics[scale=1.17]{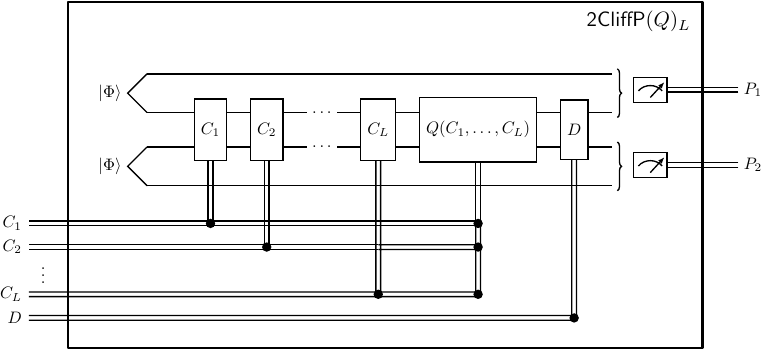}
		\caption{The Pauli-corrected two-qubit Clifford circuit $\twoPCliff{Q}_L$ with correction function $Q:\cC_2^L\rightarrow\cP_2$. It takes an input $(C_1,\ldots,C_L,D)\in\cC_2^{L+1}$ and outputs two single-qubit Pauli operators~$P_1, P_2\in \cP$ obtained from the two Bell measurements.
        \label{fig:2PCliff}
		}
\end{figure}

Let us consider the word problem $\WP(\cC_2/\cP_2,H\otimes H)$, i.e., deciding whether a series of two-qubit Clifford modulo Pauli gates multiplies to $I\otimes I$ or $H\otimes H$, as used also in Ref.~\cite{grier2020interactive}. If we have an $\NC^0$ simulator for $\psim{\twoPCliff{Q}}$ we can use it to do possibilistic tomography in a similar way to what we did in Section~\ref{sec:posibilisticstatetomography} on each of the two Bell states.

Notice that the two-qubit Clifford group modulo Pauli operators is isomorphic to $S_6$ (see Ref.~\cite{grier2020interactive}).
Since $S_6$ is a non-solvable group, the associated word problem is  complete for the complexity class $\NC^1$, i.e., for the complexity class of bounded fan-in logarithmic-depth polynomial-size circuits, according to Barrington's theorem~\cite{barrington89}. A constant-depth classical simulator for the two-qubit gate-teleportation problem could therefore be used to solve any problem in $\NC^1$.

Note, however, that none of these more general computational problems can be used to obtain a quantum advantage beyond~$\NC^0$ when only our techniques are applied. The lightcone argument we rely on constitutes a bottleneck for such a separation. The existence of a quantum advantage beyond $\NC^0$ for $1D$-local quantum circuits therefore remains an open problem.

\appendix
\section{$\psim{\PCliff{Q}}$ and possibilistic state tomography\label{app:mainreductiongrier}}
In this appendix we finish the proof of Theorem~\ref{thm:mainreductiongrier} from Section~\ref{sec:posibilisticstatetomography}. It follows by specializing the arguments given in~\cite{grier2020interactive}.

For convenience, let us define the non-stabilizer problem. Given a sequence of Cliffords $(C_1,\ldots, C_L)\in \cC^L$ as the input the \emph{non-stabilizer problem} is to output $P_A\otimes P_B\in \cP^2$ such that $\bra{\Phi} P_A\otimes P_B \ket{\Phi}=0$, i.e, such that $P_A\otimes P_B$ is a non-stabilizer (up to $\pm$ sign) of the stabilizer state $(I\otimes C_L\cdots C_1)\ket{\Phi}$.

The remainder of the arguments of~\cite{grier2020interactive} shows that an oracle~$\cO$ for the non-stabilizer problem can be used to produce a uniformly random non-stabilizer of~$(I\otimes C_L\cdots C_1)\ket{\Phi}$. We can then use this algorithm to produce generators of the stabilizer group of this state.

The algorithm for a uniformly random non-stabilizer relies on the following:
\begin{lemma}\label{lem:nonstabilizers}
Let $\ket{\Psi}$ be a stabilizer state and $M$ a non-stabilizer of~$\ket{\Psi}$.
  If $V$ is a Clifford unitary chosen uniformly at random among the set of Cliffords satisfying~$V\ket{\Psi}=\ket{\Psi}$, then $VMV^\dagger$ is a uniformly chosen non-stabilizer of~$\ket{\Psi}$.
\end{lemma}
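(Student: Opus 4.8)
The plan is to view the statement as an orbit-counting fact about the conjugation action of the Clifford stabilizer subgroup
\[
\mathrm{Stab}(\Psi):=\{V\text{ Clifford}\ :\ V\Psi=\Psi\}
\]
on the set $\mathsf{NS}(\Psi)$ of non-stabilizers of $\Psi$ (throughout, $\Psi$ is a stabilizer state on $k$ qubits, $k=2$ in the application, and Pauli operators are taken modulo phases). First I would record that the action is well-defined: if $V\in\mathrm{Stab}(\Psi)$ and $M\in\mathsf{NS}(\Psi)$ then $\bra{\Psi}VMV^\dagger\ket{\Psi}=\bra{\Psi}M\ket{\Psi}=0$, so $VMV^\dagger\in\mathsf{NS}(\Psi)$. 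Granting that the action is moreover \emph{transitive}, the lemma follows by an orbit--stabilizer count: for any $M'\in\mathsf{NS}(\Psi)$, writing $M'=V_0MV_0^\dagger$, the fiber $\{V\in\mathrm{Stab}(\Psi)\ :\ VMV^\dagger=M'\}$ equals the coset $V_0\cdot\mathrm{Stab}_{\mathrm{Stab}(\Psi)}(M)$ and thus has size $|\mathrm{Stab}_{\mathrm{Stab}(\Psi)}(M)|$ independently of $M'$; hence a uniformly random $V\in\mathrm{Stab}(\Psi)$ produces a $VMV^\dagger$ that is uniformly distributed over the orbit, which by transitivity is all of $\mathsf{NS}(\Psi)$, as claimed.

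It remains to establish transitivity, and this is where the real work lies. First I would reduce to $\Psi=\ket{0}^{\otimes k}$: the Clifford group acts transitively on stabilizer states, so picking a Clifford $U$ with $U\ket{0}^{\otimes k}=\Psi$, conjugation by $U$ carries $\mathrm{Stab}(\ket{0}^{\otimes k})$ onto $\mathrm{Stab}(\Psi)$ and $\mathsf{NS}(\ket{0}^{\otimes k})$ onto $\mathsf{NS}(\Psi)$ equivariantly, reducing transitivity for $\Psi$ to transitivity for $\ket{0}^{\otimes k}$. In this picture the stabilizer group is $\langle Z_1,\ldots,Z_k\rangle$ and, writing an arbitrary Pauli as $X^aZ^b$ with $a,b\in\{0,1\}^k$, one has $\bra{0^k}X^aZ^b\ket{0^k}=\delta_{a,0}$, so $\mathsf{NS}(\ket{0}^{\otimes k})=\{X^aZ^b\ :\ a\neq 0\}$. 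Fixing the reference non-stabilizer $X_1$, I would then show every $M=X^aZ^b$ with $a\neq 0$ can be mapped to $X_1$ by a Clifford fixing $\ket{0}^{\otimes k}$, in two steps: (i) a $\mathrm{CNOT}$ circuit realizing some $A\in\mathrm{GL}(k,\mathbb{F}_2)$ with $Aa=e_1$ fixes $\ket{0}^{\otimes k}$ and, on Paulis mod phase, acts by $(a,b)\mapsto(Aa,(A^{-1})^{\top}b)$, sending $M$ to $X_1Z^{b'}$; (ii) the gates $S_1$ and $CZ_{1,j}$ ($j\neq 1$), which also fix $\ket{0}^{\otimes k}$, conjugate $X_1Z^{b'}$ to $X_1Z^{b'\oplus e_1}$ and $X_1Z^{b'\oplus e_j}$ respectively, so an appropriate product of them clears $Z^{b'}$ and yields $X_1$ (up to phase). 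Composing the Cliffords from (i) and (ii) gives the required element of $\mathrm{Stab}(\ket{0}^{\otimes k})$.

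The step I expect to demand the most care is the explicit construction in (i)--(ii): choosing a generating set of $\mathrm{Stab}(\ket{0}^{\otimes k})$ rich enough to witness transitivity (CNOTs, the $S_j$, and the $CZ_{i,j}$) and bookkeeping their conjugation actions on $X^aZ^b$, while being careful that all identities are only asserted modulo phases---which is all that is needed, since $\mathsf{NS}(\Psi)$ and the data processed in the non-stabilizer / $\psim{\PCliff{Q}_L}$ reductions are phase-free. The remaining ingredients---well-definedness, transitivity of the Clifford action on stabilizer states, and the orbit--stabilizer counting---are standard. As a sanity check, in the relevant case $k=2$ one has $|\mathsf{NS}(\Psi)|=12$ and transitivity can alternatively be confirmed by direct inspection.
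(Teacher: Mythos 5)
Your proof is correct. The paper does not actually supply an argument for this lemma --- it simply cites \cite[Proof of Theorem 24]{grier2020interactive} --- so you have produced a self-contained proof where the paper relies on a reference. Your structure is the natural one: (a) well-definedness of the conjugation action of $\mathrm{Stab}(\Psi)$ on $\mathsf{NS}(\Psi)$; (b) reduction of uniformity to transitivity via orbit--stabilizer counting of fibers; and (c) transitivity, proved by conjugating the whole problem to $\Psi=\ket{0}^{\otimes k}$ and then exhibiting explicit generators of $\mathrm{Stab}(\ket{0}^{\otimes k})$ (CNOTs for the $\mathrm{GL}(k,\mathbb{F}_2)$-action on the $X$-part, and $S_1$, $CZ_{1,j}$ to clear the residual $Z$-part) that map any $X^aZ^b$ with $a\neq 0$ to $X_1$. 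I checked the details: $\mathsf{NS}(\ket 0^{\otimes k})=\{X^aZ^b : a\neq 0\}$, the symplectic action of CNOTs is $(a,b)\mapsto(Aa,(A^{-1})^{\top}b)$, CNOTs generate all of $\mathrm{GL}(k,\mathbb{F}_2)$ (transvections), $S_1X_1S_1^\dagger\sim X_1Z_1$ and $CZ_{1,j}X_1CZ_{1,j}\sim X_1Z_j$ modulo phase, and all of $\mathrm{CNOT}$, $S_1$, $CZ_{1,j}$ fix $\ket{0}^{\otimes k}$. Working modulo phases throughout is exactly the right convention, since both ``non-stabilizer'' and the Bell-measurement outputs feeding the $\psim{\PCliff{Q}_L}$ reduction are phase-insensitive. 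One could alternatively establish transitivity more abstractly via symplectic geometry over $\mathbb{F}_2$ (transvections fixing the isotropic subspace corresponding to the stabilizer), which is presumably closer to what \cite{grier2020interactive} does; your explicit circuit construction is equivalent but more concrete, and has the side benefit of making it manifest that the required Cliffords lie in $\mathrm{Stab}(\ket 0^{\otimes k})$.
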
 
For a proof see~\cite[Proof of Theorem 24]{grier2020interactive}. 

Consider the Algorithm~\ref{alg:kilianrandomization} below. It builds on Kilian's randomization~\cite{Kilian} and satisfies the properties proven in Lemma~\ref{lem:kilianlemma}, see~\cite[Theorem~24]{grier2020interactive}.

\begin{algorithm}
\caption{Kilian-randomized function~$\cM^{\cO}$. Here $\cO$ is an oracle outputting a non-stabilizer of~$(I\otimes C_L\cdots C_1)\ket{\Phi}$ on input~$(C_1,\ldots,C_L)\in\cC^{L}$.
\label{alg:kilianrandomization}}
\begin{algorithmic}[1]
\Function{$\cM^{\cO}$}{$C_1,\ldots,C_L$}
\State $D_1,\ldots,D_L\gets$ uniformly random Clifford unitaries
\State $C_1'\gets D_1C_1$
\For{$j=2,\ldots,L$}\label{it:stepfouralg}
   \State $C_j'\gets D_jC_jD_{j-1}^{-1}$\label{it:stepfivealg}
\EndFor
\State $M\gets \cO(C_1',\ldots,C_{L}')$\label{eq:acomputationalg}
\State \textbf{return} 
$(I\otimes D_L)^\dagger M (I\otimes D_L)$
\EndFunction
\end{algorithmic}
\end{algorithm}

\begin{lemma}\label{lem:kilianlemma}
Let $\cO$ be a deterministic oracle for the non-stabilizer problem. The Algorithm~\ref{alg:kilianrandomization} satisfies:
\label{lem:paprimesupport}
\begin{enumerate}[(i)]
\item\label{it:firstclaimstabilizerx}
On input~$(C_1,\ldots,C_L)$, the algorithm~$\cM^\cO$ makes one query to the oracle $\cO$ and returns a non-stabilizer of~$(I\otimes \cfinal)\ket{\Phi}$.
\item\label{it:secondclaimstabilizer}
There is a random variable~$\widetilde{M}$ on~$\cP$ such that
for any sequence~$C_1,\ldots,C_L$ of Cliffords, the output~$\cM^\cO(C_1,\ldots,C_L)$ satisfies
\begin{align} 
    \cM^\cO(C_1,\ldots,C_L)&=\left(I\otimes\cfinal\right) \widetilde{M}\left(I\otimes\cfinal\right)^\dagger
\end{align}
with certainty. In particular, the dependence  of the output of~$\cM^\cO$ on the input~$(C_1,\ldots,C_L)$ is only via the product~$\cfinal$.
\item\label{it:distributionxproperty}
The support of the distribution~$P_{\widetilde{M}}$ (i.e., the distribution of $\widetilde{M}$) is contained in the set of non-stabilizers of the state~$\ket{\Phi}$.
\end{enumerate}
\end{lemma}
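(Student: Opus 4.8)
The plan is to unwind the telescoping product created by the Kilian randomization in Algorithm~\ref{alg:kilianrandomization} and then reduce all three claims to the correctness of~$\cO$ together with the conjugation-invariance of the non-stabilizer property. Throughout write $W:=C_L'\cdots C_1'$ for the product of the randomized tuple fed to~$\cO$. The first step is the algebraic identity $W=D_L\,(C_L\cdots C_1)$, obtained by telescoping the definitions $C_1'=D_1C_1$ and $C_j'=D_jC_jD_{j-1}^{-1}$: consecutive factors $D_{j-1}^{-1}$ and $D_{j-1}$ cancel, leaving only the leftmost $D_L$. Hence the oracle query $\cO(C_1',\ldots,C_L')$ concerns the state $(I_A\otimes W)\Phi=(I_A\otimes D_L\,C_L\cdots C_1)\Phi$, and since $\cO$ solves the non-stabilizer problem the returned two-qubit Pauli~$M$ satisfies $\bra{\Phi}(I_A\otimes W)^\dagger M(I_A\otimes W)\ket{\Phi}=0$. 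Setting $\widetilde M:=(I_A\otimes W^\dagger)M(I_A\otimes W)$ (an element of~$\cP^2$ modulo a global sign, which we discard), this is precisely the statement that $\widetilde M$ is a non-stabilizer of~$\Phi$.

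Next I would show that the randomized tuple $(C_1',\ldots,C_L')$ is uniformly distributed over $\cC^L$, independently of the input $(C_1,\ldots,C_L)$. This is the standard Kilian observation: the map $(D_1,\ldots,D_L)\mapsto(C_1',\ldots,C_L')$ is a bijection of~$\cC^L$ — its inverse is computed recursively by $D_1=C_1'C_1^{-1}$ and $D_j=C_j'D_{j-1}C_j^{-1}$ — so pushing the uniform distribution on $(D_1,\ldots,D_L)$ forward yields the uniform distribution on $(C_1',\ldots,C_L')$. Since $W=\prod_jC_j'$ and $M=\cO(C_1',\ldots,C_L')$, and hence $\widetilde M=(I_A\otimes W^\dagger)M(I_A\otimes W)$, are all deterministic functions of $(C_1',\ldots,C_L')$ alone, the law $P_{\widetilde M}$ is a fixed distribution, independent of the input, which by the previous paragraph is supported on non-stabilizers of~$\Phi$. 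This establishes claim~\eqref{it:distributionxproperty} and produces the random variable $\widetilde M$ required by claim~\eqref{it:secondclaimstabilizer}.

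It then remains to identify the algorithm's return value. From $W=D_L\,(C_L\cdots C_1)$ we get $D_L=W\,(C_L\cdots C_1)^\dagger$, hence $(I_A\otimes D_L)^\dagger=(I_A\otimes C_L\cdots C_1)(I_A\otimes W^\dagger)$, and therefore
\begin{align}
(I_A\otimes D_L)^\dagger\,M\,(I_A\otimes D_L)=(I_A\otimes C_L\cdots C_1)\,\widetilde M\,(I_A\otimes C_L\cdots C_1)^\dagger\ ,
\end{align}
which is exactly claim~\eqref{it:secondclaimstabilizer}; in particular the output depends on the input only through the product~$C_L\cdots C_1$. Claim~\eqref{it:firstclaimstabilizerx} follows immediately: the algorithm makes a single call to~$\cO$ by inspection, and conjugating $\bra{\Phi}\widetilde M\ket{\Phi}=0$ by $I_A\otimes C_L\cdots C_1$ shows that the output is a non-stabilizer of $(I_A\otimes C_L\cdots C_1)\Phi$. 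I would also cross-check each step against~\cite[Theorem~24]{grier2020interactive}, from which the statement is adapted.

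I expect the only delicate point to be the bookkeeping in the middle paragraph: making precise that, once the $D_{j-1}$-cancellations are carried out, $\widetilde M$ is a function of $(C_1',\ldots,C_L')$ with the input entering solely through the Kilian bijection $(D_1,\ldots,D_L)\leftrightarrow(C_1',\ldots,C_L')$, so that its law is genuinely input-independent and $\cM^\cO$ acts, for a fixed product $C_L\cdots C_1$, as conjugation by $I_A\otimes C_L\cdots C_1$ of a fixed $\cP^2$-valued random variable. Everything else is routine manipulation of Clifford and Pauli conjugations.
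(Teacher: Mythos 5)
Your proof is correct and follows essentially the same route as the paper's: both rest on the Kilian telescoping identity $C_L'\cdots C_1'=D_L\,\cfinal$ and the resulting input-independence of the randomized tuple. If anything, your version is slightly more streamlined -- you observe directly that $(D_1,\ldots,D_L)\mapsto(C_1',\ldots,C_L')$ is a bijection (so the tuple is uniform on $\cC^L$ and $P_{\widetilde M}$ is entirely input-independent), whereas the paper argues via an auxiliary equivalent sampling procedure and only concludes dependence through $\cfinal$.
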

\begin{proof}
Claim~\eqref{it:firstclaimstabilizerx} follows immediately from
the definition of~$\cM^\cO$: the operator~$M$ obtained in Step~\eqref{eq:acomputationalg} of the Algorithm~\ref{alg:kilianrandomization} is a non-stabilizer of
$I\otimes(C'_L\cdots C_1')\ket{\Phi}=(I\otimes D_L(\cfinal))\ket{\Phi}$, hence $(I\otimes D_L)^\dagger M(I\otimes D_L)$ is a non-stabilizer of~$(I\otimes \cfinal)\ket{\Phi}$.

For the proof of~\eqref{it:secondclaimstabilizer}, consider the following procedure, given $(C_1,\ldots,C_L)$: pick $D_L$ uniformly at random from the Clifford group, then pick $(C_1',\ldots,C_L')$ uniformly from the set of sequences satisfying
\begin{align}
    C_L'\cdots C_1'&=D_L \cfinal\ .\label{eq:dlcfinalexpr}
\end{align}
Then set
\begin{align}
    M=\cO(C_1',\ldots,C_L')
\end{align}
and
\begin{align}
    \widetilde{M}&=(I\otimes(C_L'\cdots C_1'))^\dagger M (I\otimes(C_L'\cdots C_1'))\ .\label{eq:tildeMprodexpr}
\end{align}
Observe first that the distribution $P_{\widetilde{M}}$ is
a function of~$\cfinal$ only and does not depend on the choice of~$(C_1,\ldots,C_L)$ otherwise.
Indeed, the sequence~$(C_1,\ldots,C_L)$ only determines~$\cfinal$,
implying that the product~$D_L\cfinal$ in~\eqref{eq:dlcfinalexpr}
has a distribution which is a function of the product~$\cfinal$ only.  Because the distribution of~$(C_1',\ldots,C_L')$ is fixed by~$D_L\cfinal$ by definition, it follows that the distribution of~$M$ is determined by the product~$\cfinal$ (with no additional dependence on~$(C_1,\ldots,C_L)$). It follows similarly from~\eqref{eq:tildeMprodexpr} that
the distribution of~$\widetilde{M}$ has this property.

By definition, we have 
\begin{align}
    \left(I\otimes \cfinal\right) \!\widetilde{M}\! \left(\!I\otimes\cfinal\right)^\dagger\!\! &=
    \left(\!I\otimes\! \left(\cfinal\right)\!(C'_L\cdots C_1')^\dagger\right)\! M\! \left((I\otimes\!(C'_L\cdots C'_1)\cfinal\right)^\dagger\\
    &=    \left(I\otimes \cfinal \left(D_L\cfinal\right)^\dagger\right)\!M \!\left(I\otimes\left(D_L\cfinal\right)\cfinal\right)^\dagger\\
    &=(I\otimes D_L)^\dagger M (I\otimes D_L)\\
    &=(I\otimes D_L)^\dagger \cO(C_1',\ldots,C_L') (I\otimes D_L)\label{eq:outputdlacprm}
\end{align}
where we used~\eqref{eq:dlcfinalexpr} in the second step. Observe that the marginal distributions $P_{C_1'\cdots C_L'D_L}$ are identical for both algorithms~$\cM$ and the procedure used in this proof: this is because steps~\eqref{it:stepfouralg} and~\eqref{it:stepfivealg} amount to Kilian randomization \cite{Kilian}, producing a uniformly random sequence~$(C_1',\ldots,C'_L)$ of group elements with a fixed  product as in~\eqref{eq:dlcfinalexpr}. It follows that~\eqref{eq:outputdlacprm} is indeed identical to the output of~$\cM$ as claimed.

Finally, the claim~\eqref{it:distributionxproperty} is an immediate consequence of~\eqref{it:firstclaimstabilizerx} and~\eqref{it:secondclaimstabilizer}. 
\end{proof}

\begin{algorithm}
\caption{Randomization function~$\cM'$.\label{alg:randomizedsecond}}
\begin{algorithmic}[1]
\Function{$\cM'$}{$C_1,\ldots,C_L$}
\State $V\gets$ a random Clifford satisfying $V\ket{\Phi}=\ket{\Phi}$.
\State $C_1'\gets C_1V$
\For{$j=2,\ldots,L$}
   \State $C_j'\gets C_j$
\EndFor
\State $M\gets \cM^\cO(C_1',\ldots,C_{L}')$\label{eq:apcomputationalg}
\State \textbf{return} 
$M$
\EndFunction
\end{algorithmic}
\end{algorithm}

Let $\cM$ be as specified in Algorithm~\ref{alg:kilianrandomization} with an oracle access to the oracle~$\cO$ that solves the non-stabilizer problem. Consider the algorithm~$\cM'$ specified in Algorithm~\ref{alg:randomizedsecond}. Then we have the following (see~\cite[Theorem 24]{grier2020interactive}):
\begin{lemma}
On input~$(C_1,\ldots,C_L)$, the algorithm~$\cM'$ returns a uniformly distributed non-stabilizer of~$\left(I\otimes\cfinal\right)\ket{\Phi}$.
\end{lemma}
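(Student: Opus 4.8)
The plan is to combine the two properties established in Lemma~\ref{lem:kilianlemma} for the Kilian-randomized algorithm $\cM$ with the uniformization property of Lemma~\ref{lem:nonstabilizers}. First I would unpack what $\cM'$ does: on input $(C_1,\ldots,C_L)$ it picks a random Clifford $V$ stabilizing $\Phi$, replaces $C_1$ by $C_1V$ (leaving the other $C_j$ unchanged), and then returns $M=\cM^\cO(C_1V,C_2,\ldots,C_L)$. The product of the modified sequence is $C_L\cdots C_2(C_1V)=\left(\cfinal\right)V$. Now invoke Lemma~\ref{lem:kilianlemma}\eqref{it:secondclaimstabilizer}: the output of $\cM^\cO$ depends on its input only through the product of the Cliffords, and there is a fixed random variable $\widetilde M$ (whose distribution depends only on that product, and in fact — since the product determines a state unitarily equivalent to $\Phi$ — only on $\Phi$ itself) such that
\begin{align}
\cM^\cO(C_1V,C_2,\ldots,C_L)&=\left(I_A\otimes (\cfinal)V\right)\widetilde M\left(I_A\otimes (\cfinal)V\right)^\dagger\ .
\end{align}

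Next I would use the hypothesis $V\Phi=\Phi$. Because $V$ stabilizes $\Phi$, the state $(I_A\otimes (\cfinal)V)\Phi$ equals $(I_A\otimes\cfinal)\Phi$, so by Lemma~\ref{lem:kilianlemma}\eqref{it:firstclaimstabilizerx} the returned operator $M$ is indeed a non-stabilizer of $(I_A\otimes\cfinal)\Phi$; that settles membership. For uniformity, rewrite
\begin{align}
M&=\left(I_A\otimes\cfinal\right)\left(I_A\otimes V\right)\widetilde M\left(I_A\otimes V\right)^\dagger\left(I_A\otimes\cfinal\right)^\dagger\ .
\end{align}
Conditioned on the value $\widetilde M=M_0$ (a fixed non-stabilizer of $\Phi$, independent of $V$ by Lemma~\ref{lem:kilianlemma}\eqref{it:secondclaimstabilizer}–\eqref{it:distributionxproperty}), the operator $(I_A\otimes V)M_0(I_A\otimes V)^\dagger$ is, by Lemma~\ref{lem:nonstabilizers} applied with $\Psi=\Phi$, a uniformly random non-stabilizer of $\Phi$. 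Since this holds for every conditioning value $M_0$, marginalizing shows $(I_A\otimes V)\widetilde M(I_A\otimes V)^\dagger$ is uniformly distributed on the non-stabilizers of $\Phi$, and it is independent of the input. Conjugating the whole set of non-stabilizers of $\Phi$ by the fixed Clifford $\cfinal$ is a bijection onto the non-stabilizers of $(I_A\otimes\cfinal)\Phi$, so the pushforward of the uniform distribution is uniform. Hence $M$ is a uniformly random non-stabilizer of $(I_A\otimes\cfinal)\Phi$, as claimed.

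The one subtlety — the main obstacle — is making rigorous the claim that the random variable $\widetilde M$ supplied by Lemma~\ref{lem:kilianlemma} for the input sequence $(C_1V,C_2,\ldots,C_L)$ has a distribution that does \emph{not} depend on $V$ (equivalently, on the product $(\cfinal)V$), so that $\widetilde M$ and $V$ can be treated as independent when we apply Lemma~\ref{lem:nonstabilizers}. This needs the observation that, by part~\eqref{it:distributionxproperty} of Lemma~\ref{lem:kilianlemma}, $P_{\widetilde M}$ is always supported on non-stabilizers of the \emph{fixed} state $\Phi$ and — tracing through the proof of Lemma~\ref{lem:kilianlemma} — is in fact the \emph{same} distribution for every input product (the product only enters through the conjugation by $D_L(\text{product})$, which is peeled back off in forming $\widetilde M$). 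Once that independence is in hand, the conditioning argument above goes through verbatim. I would also remark that $V$ itself ranges over the (nontrivial, since $\Phi$ is a two-qubit stabilizer state with a large stabilizer-fixing Clifford subgroup) group $\{V:V\Phi=\Phi\}$, which is exactly the group over which Lemma~\ref{lem:nonstabilizers} is stated, so there is no mismatch of randomization groups.
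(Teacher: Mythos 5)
Your argument is correct and follows essentially the same route as the paper's proof: both pass from $\cM'$ to $\cM^{\cO}$ on the modified input $(C_1V,C_2,\ldots,C_L)$, invoke Lemma~\ref{lem:kilianlemma}\eqref{it:secondclaimstabilizer}--\eqref{it:distributionxproperty} to write the output as $(I_A\otimes\cfinal)\,V\widetilde{M}V^\dagger\,(I_A\otimes\cfinal)^\dagger$ with $\widetilde{M}$ supported on non-stabilizers of $\Phi$, apply Lemma~\ref{lem:nonstabilizers} to conclude $V\widetilde{M}V^\dagger$ is a uniform non-stabilizer of $\Phi$, and push forward by conjugation with $\cfinal$. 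Your extra care in conditioning on $\widetilde{M}=M_0$ and in flagging that $\widetilde{M}$ must be independent of $V$ (a point the paper leaves implicit) is a legitimate refinement of the same argument rather than a different approach.
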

\begin{proof}
Let $\widetilde{M}$ denote the Pauli-valued random variable from Lemma~\ref{lem:paprimesupport}. Let $V$ be a uniformly distributed one-qubit Clifford unitary subject to the constraint~$V\ket{\Phi}=\ket{\Phi}$. Then we have 
\begin{align}
P_{\cM'(C_1,\ldots,C_L)}&=P_{\cM^{\cO}(C_1V,C_2,\ldots,C_L)}\\
&=P_{\cfinal V\widetilde{M}V^\dagger \cfinal^\dagger}
\end{align}
by definition of the algorithm~$\cM'$ and Lemma~\ref{lem:paprimesupport}~\eqref{it:secondclaimstabilizer}.
Since $\widetilde{M}$ is supported on the set of non-stabilizers of the state~$\ket{\Phi}$ (see 
Lemma~\ref{lem:paprimesupport}~\eqref{it:distributionxproperty}), it follows from Lemma~\ref{lem:nonstabilizers} that $V\widetilde{M}V^\dagger$ is a random non-stabilizer of~$\ket{\Phi}$ with uniform distribution. The claim follows from this.
\end{proof}

Finally, we combine a deterministic oracle for the non-stabilizer problem and the randomization procedure that we specified in Algorithms~\ref{alg:kilianrandomization} and~\ref{alg:randomizedsecond} to an algorithm learning the stabilizer group of the state $(I\otimes C_L\cdots C_1)\ket{\Phi}$ (up to a sign factor).

\begin{theorem}\label{thm:learningstabilizers}
Let~$\cO$ be a deterministic oracle for the non-stabilizer problem. 
There is an $\NC^0$ algorithm with an oracle access to~$\cO$ that uses uniformly random bits, and on input~$(C_1,\ldots,C_L)$ 
\begin{enumerate}[(i)]
\item
 makes $O(1)$ oracle calls to~$\cO$, 
\item either
outputs two distinct non-identity Pauli operators~$S_1,S_2$ such that 
 \begin{align}
     \bra{\Phi}\left(I\otimes (C_L\cdots C_1)\right)^\dagger S_j \left(I\otimes (C_L\cdots C_1)\right)\ket{\Phi}\in \{\pm 1\}\ \quad\textrm{ for }\ \quad j=1,2\ ,
 \end{align}
 or  reports a failure, and
 \item
 succeeds with probability at least~$2/3$.
\end{enumerate}
\end{theorem}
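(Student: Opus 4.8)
The plan is to reduce the problem to a handful of calls of the uniformly-random-non-stabilizer sampler $\cM'$ of Algorithm~\ref{alg:randomizedsecond}. Write $\psi := (I_A\otimes C_L\cdots C_1)\Phi$. By the lemma just proved, $\cM'$ is a randomized $\NC^0$ algorithm that makes a single query to $\cO$ and, on input $(C_1,\ldots,C_L)$, outputs a Pauli operator distributed uniformly over the set of non-stabilizers of the two-qubit stabilizer state $\psi$. First I would run $\cM'$ independently $T = O(1)$ times, obtaining non-stabilizers $M_1,\ldots,M_T$ of $\psi$, and then postprocess to extract the stabilizer group.

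The second ingredient is the elementary structure of two-qubit stabilizer states, which I would record working with the sixteen two-qubit Pauli operators taken modulo an overall phase. The unsigned stabilizer group $\cS$ of $\psi$ has exactly four elements -- the identity and three non-identity Paulis -- and every non-identity $S\in\cS$ satisfies the expectation-value condition demanded in the theorem, namely $\bra{\Phi}\left(I_A\otimes (C_L\cdots C_1)\right)^\dagger S\left(I_A\otimes (C_L\cdots C_1)\right)\ket{\Phi}\in\{\pm1\}$; each of the remaining $16-4=12$ Paulis has expectation $0$ in $\psi$, i.e.\ is a non-stabilizer. Hence the non-stabilizer set of $\psi$ is precisely the $12$-element complement of $\cS$, so $\cS$ is recovered from the non-stabilizer set by complementation.

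The algorithm is then immediate: run $\cM'$ for $T$ independent repetitions on $(C_1,\ldots,C_L)$ to get $M_1,\ldots,M_T$; let $\cS'$ be the complement of $\{M_1,\ldots,M_T\}$ among the sixteen two-qubit Paulis modulo phase; if $|\cS'|=4$ output two distinct non-identity elements of $\cS'$, and otherwise output a fixed default pair. For correctness, the lemma guarantees that each $M_i$ is with certainty a genuine non-stabilizer of $\psi$, so $\cS\subseteq\cS'$ always; and since the $M_i$ are i.i.d.\ uniform over the twelve non-stabilizers of $\psi$, the event $E$ that $\{M_1,\ldots,M_T\}$ exhausts all twelve of them fails with probability at most $12\,(11/12)^{T}$, which drops below $1/3$ once $T$ is a sufficiently large constant. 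Conditioned on $E$ we have $|\cS'|=16-12=4$, hence $\cS'=\cS$, and the output is a pair of distinct non-identity Paulis with the required expectation values. Thus this randomized procedure solves the stated problem on every input with probability at least $2/3$, makes $T=O(1)$ queries to $\cO$, and -- being $T$ copies of a randomized $\NC^0$ circuit with one oracle gate, followed by constant-size postprocessing on $O(1)$ bits -- is realizable by a randomized $\NC^0$ circuit with $O(1)$ oracle gates.

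What remains is to derandomize. The procedure above is a bounded-error randomized $\AC^0$ computation relative to the fixed oracle $\cO$, so by the theorem of Ajtai and Ben-Or~\cite{AjtaiBenOrProbabilisticConstDepth}, which rests on the $\AC^0$-computability of approximate counting~\cite{StockmeyerAppxCounting}, it is computed by a non-uniform family of $\AC^0$ circuits with $\cO$-gates; hardwiring, for each input length, a good choice of internal random strings yields the desired $\AC^0/\poly$ algorithm with oracle access to $\cO$. I expect this last step to be the main obstacle: one must verify that the success-probability bound holds uniformly over all inputs for the fixed $\cO$ (so that Ajtai--Ben-Or applies in the relativized setting), and check that combining the hardwired runs via approximate majority preserves constant depth and keeps the number of queries to $\cO$ controlled. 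It is also exactly here that $\AC^0$, rather than $\NC^0$, is forced upon us, since approximate majority lies in $\AC^0$ whereas exact majority does not.
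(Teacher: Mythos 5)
Your proposal matches the paper's proof essentially step for step: you run the Kilian-randomized sampler~$\cM'$ a constant number of times, use the fact that a two-qubit stabilizer state has exactly~$12$ non-stabilizing and~$4$ stabilizing Paulis (modulo phase) to recover the stabilizer group by complementation once all~$12$ have been seen (with probability~$\geq 2/3$), and then derandomize via Ajtai--Ben-Or into~$\AC^0/\poly$. Your added coupon-collector estimate and the explicit fallback on failure are details the paper leaves implicit, but the underlying argument is the same.
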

 \begin{proof}
 Let $\ket{\Psi}$ be a stabilizer state. A Pauli operator~$M$ has expectation value $\bra{\Psi}M\ket{\Psi}\in \{\pm 1\}$ if and only if~$M$ commutes with each stabilizer of~$\ket{\Psi}$. If this is not the case, then $M$ is a non-stabilizer. It follows that a two-qubit stabilizer state has exactly~$12$ non-stabilizers of the form~$P_A\otimes P_B$ with $(P_A,P_B)\in \cP^2$. The remaining~$4$ operators constitute (up to~$\pm$-signs) the stabilizer group of~$\ket{\Psi}$. In particular, given the $12$~non-stabilizers of~$\ket{\Psi}$, it is straightforward to output two distinct non-identity Pauli operators~$S_1=P_A\otimes P_B$, 
$S_2=P_A'\otimes P_B'$ such that $\bra{\Psi}S_j\ket{\Psi}\in \{\pm 1\}$ for $j=1,2$.

We apply this to the state~$\ket{\Psi}=(I\otimes C_L\cdots C_1)\ket{\Phi}$.  Let $\cM'$ be the algorithm defined in Algorithm~\ref{alg:randomizedsecond}, using the function ~$\cM^{\cO}$ given in Algorithm~\ref{alg:kilianrandomization}. Notice that this algorithm returns a uniformly random non-stabilizer of $\ket{\Psi}$ given the input $C_L\cdots C_1$. Then, there is an algorithm $\cM''$ that uses $O(1)$~queries to~$\cM'$ and determines, with probability at least~$2/3$, all~$12$ non-stabilizers of~$\ket{\Psi}$ of the form $P_A\otimes P_B$ with~$(P_A,P_B)\in \cP^2$ or it reports failure. The claim follows.
\end{proof}

\subsection*{Acknowledgments} LC and RK gratefully acknowledge support by the European Research Council under grant agreement no.~101001976 (project EQUIPTNT). XCR thanks the Swiss National Science Foundation (SNSF) for their support.

\normalem

\end{document}